\documentclass[reqno,12pt,letterpaper]{amsart}
\usepackage{amsmath,amssymb,amsthm,graphicx,mathrsfs,url}
\usepackage[usenames,dvipsnames]{color}
\usepackage[colorlinks=true,linkcolor=Red,citecolor=Green]{hyperref}
\usepackage{amsxtra}
\usepackage{amscd}
\usepackage[normalem]{ulem}
\usepackage{wasysym} 
\usepackage{graphicx}
\usepackage{subcaption}
\def\arXiv#1{\href{http://arxiv.org/abs/#1}{arXiv:#1}}
\usepackage{array}
\newcolumntype{P}[1]{>{\centering\arraybackslash}m{#1}}

\setlength{\marginparwidth}{0.6in}

\def\?[#1]{\textbf{[#1]}\marginpar{\Large{\textbf{??}}}}

\def\smallsection#1{\smallskip\noindent\textbf{#1}.}
\let\epsilon=\varepsilon 
\setlength{\textheight}{8.50in} \setlength{\oddsidemargin}{0.00in}
\setlength{\evensidemargin}{0.00in} \setlength{\textwidth}{6.08in}
\setlength{\topmargin}{0.00in} \setlength{\headheight}{0.18in}
\setlength{\marginparwidth}{1.0in}
\setlength{\abovedisplayskip}{0.2in}
\setlength{\belowdisplayskip}{0.2in}
\setlength{\parskip}{0.05in}

\newcommand{\CC}{{\mathbb C}}

\newcommand{\ZZ}{{\mathbb Z}}

\DeclareGraphicsRule{*}{mps}{*}{}

\newtheorem{theo}{Theorem}
\newtheorem{prop}{Proposition}[section]

\newtheorem{lemm}[prop]{Lemma}

\numberwithin{equation}{section}

\DeclareMathOperator{\Spec}{Spec}

\let\Im=\Imag

\let\Re=\Real

\usepackage{scalerel}

\newcommand\reallywidehat[1]{\arraycolsep=0pt\relax%
\begin{array}{c}
\stretchto{
  \scaleto{
    \scalerel*[\widthof{\ensuremath{#1}}]{\kern-.5pt\bigwedge\kern-.5pt}
    {\rule[-\textheight/2]{1ex}{\textheight}} 
  }{\textheight} %
}{0.5ex}\\           
#1\\                 
\rule{-1ex}{0ex}
\end{array}
}

\def\blue#1{\textcolor{blue}{#1}}
\def\red#1{\textcolor{red}{#1}}

\title{Fine structure of flat bands in a chiral model of magic angles}

\author{Simon Becker}
\email{simon.becker@math.ethz.ch}
\address{ETH Zurich, 
Institute for Mathematical Research, 8092 Zurich, CH.}

\author{Tristan Humbert}
\email{tristan.humbert@ens.psl.eu}
\address{Department of Mathematics, University of California,
Berkeley, CA 94720, USA.}

\author{Maciej Zworski}
\email{zworski@math.berkeley.edu}
\address{Department of Mathematics, University of California,
Berkeley, CA 94720, USA.}

\begin{document}
\begin{abstract}
We analyze symmetries of Bloch eigenfunctions at magic
angles for the Tarnopolsky--Kruchkov--Vishwanath chiral model of the
twisted bilayer graphene (TBG) following the framework introduced by
Becker--Embree--Wittsten--Zworski. We show that
vanishing of the first Bloch eigenvalue away from the Dirac points
implies its vanishing at all momenta, that is the existence of a flat
band. We also show how the multiplicity of the flat band is related to the nodal set of the Bloch eigenfunctions.
We conclude with two
numerical observations about the structure of flat bands.

\end{abstract}

\maketitle 

\section{Introduction}

In this article we study the chiral version \cite{sgog,magic} of the Bistritzer--MacDonald 
Hamiltonian \cite{BM11} describing twisted bilayer graphene:
\begin{equation}
\label{eq:defD} 
H ( \alpha ) := \begin{pmatrix} 0 & D ( \alpha )^* \\
D ( \alpha ) & 0 \end{pmatrix} , \ \ \ 
D ( \alpha ) := \begin{pmatrix}
2 D_{\bar z } & \alpha U ( z ) \\
\alpha U ( - z )& 2 D_{\bar z } \end{pmatrix},  
\end{equation}
where $ U $ is a real analytic function on $ \mathbb C = \mathbb R^2 $, 
and 
\begin{equation}
\label{eq:propU}
\begin{gathered} 
  U ( z + \gamma ) = e^{ i \langle \gamma , K  \rangle } U ( z ) , \ \ U ( \omega z ) = \omega U(z) , \ \ \overline {U ( \bar z ) } = - U ( - z ) , \ \ \ 
\omega = e^{ 2 \pi i/3},  \ \\ \gamma \in \Lambda := \omega \mathbb Z \oplus \mathbb Z , \ \ 
\omega K \equiv K \not \equiv 0 \!\!\! \mod \Lambda^* , \ \ 
\Lambda^* :=  \frac {4 \pi i}  {\sqrt 3}  \Lambda , \ \ \langle z , w \rangle := \Re ( z \bar w ) .
\end{gathered}
\end{equation}

The most studied case is  the Bistritzer--MacDonald potential 
which in the convention of \eqref{eq:propU} corresponds to
\begin{equation}
\label{eq:BMU}
U ( z ) =  - \tfrac{4} 3 \pi i \sum_{ \ell = 0 }^2 \omega^\ell e^{ i \langle z , \omega^\ell K \rangle }, \ \ \ K = \tfrac43 \pi  ,
\end{equation}
see the Appendix for the translation of the conventions.

\medskip

\noindent
{\bf Definition.} \emph{A value of $ \alpha $ is called {\em magical} if 
the Hamiltonian $ H ( \alpha ) $ has a {\em flat band} at zero energy $($see \eqref{eq:eigs} below$)$.
This is equivalent to $ \Spec_{ L^2 ( \mathbb C/3 \Lambda ) } D ( \alpha ) 
= \mathbb C $.}

\medskip
In the physics literature -- see \cite{magic} -- $ \alpha $ is a dimensionless parameter
which, modulo physical constants, is proportional to the angle of twisting of the two sheets
of graphene. Hence, large $ \alpha$'s correspond to small angles.

We know from \cite{beta} that the set of magic $ \alpha$'s,  $ \mathcal A $, is a discrete
subset of $ \mathbb C $. In \cite{bhz1} we proved that for the potential
\eqref{eq:BMU} $ \mathcal A $ is in fact infinite. Existence and estimates for the
first {\em real} magic $ \alpha $ were obtained by Luskin and Watson \cite{lawa} 
who implemented the method of \cite{magic} with computer assistance (see also Remarks 
following Theorem \ref{theo:Chern}). We also remark that
a rigorous derivation of the full Bistritzer--MacDonald model was provided in \cite{CGG, Wa22}

 Following the physics literature we consider (unlike in 
\cite{beta}) Floquet theory with respect to moir\'e translations: for 
$  u \in L^2_{\rm{loc}}  ( \mathbb C ; \mathbb C^2 ) $ we put 
\begin{equation}
\label{eq:defLag}  
 {\mathscr L}_{\gamma } u  :=  
\begin{pmatrix} e^{ i \langle \gamma, K \rangle }   & 0  \\
0 & e^{-  i \langle \gamma , K \rangle } 
\end{pmatrix}  u ( z + \gamma )   ,   \ \ \ 
\gamma \in \Lambda,  \ \  K = \tfrac43 \pi .  \end{equation}
(Here and elsewhere $ \langle z , w \rangle := \Re z \bar w,$  $ z , w \in \mathbb C$.)
The action is extended diagonally for $ \mathbb C^4 = \mathbb C^2 \times \mathbb C^2 $
and we use the same notation. We then have have 
$ \mathscr L_\gamma D ( \alpha ) = D ( \alpha ) \mathscr L_\gamma  $ and 
$ \mathscr L_\gamma H ( \alpha ) = H ( \alpha ) \mathscr L_\gamma $. 

It is then natural to look at the spectrum of $ H ( \alpha ) $ satisfying the following boundary 
conditions:
\begin{equation}
\begin{gathered} 
\label{eq:FL_ev}  H ( \alpha ) u = E u , \ \ 
u \in H^1_{k} ( \mathbb C/\Lambda, \mathbb C^4 ) ,
\ \ \ H^s_k ( \mathbb C, \mathbb C^4 ) := L^2_k ( \mathbb C ; 
\mathbb C^4 ) 
\cap H^s_{\rm{loc} } ( \mathbb C ; \mathbb C^4 ), 
 \\
L^2_{k} ( \mathbb C/\Lambda, \mathbb C^4 ) := 
\{ u = L^2_{\rm{loc}} ( \mathbb C ; \mathbb C^4 ) : \mathscr L_\gamma u = 
e^{ i \langle k, \gamma \rangle } u \}  . 
\end{gathered}
 \end{equation}
The spectrum is discrete and symmetric with respect to the origin and we index it as follows
(with $ \mathbb Z^* :=  \mathbb Z \setminus \{ 0 \} $)
\begin{equation}
\label{eq:eigs} 
\begin{gathered} \{ E_{ j } ( \alpha, k ) \}_{ j \in  \mathbb Z^* } ,  \ \ \  E_{j } ( \alpha, k ) 
= - E_{-j} ( \alpha , k ) , \\ 0 \leq E_1 ( \alpha, k ) \leq E_2 ( \alpha, k ) \leq \cdots , \ \ \  E_1 ( \alpha, K) =  E_1 ( \alpha, - K ) = 0 , 
\end{gathered} \end{equation}
see \S \ref{s:BF} for more details. The points $ K, -K  $
are called the {\em Dirac points} and are typically denoted by $ K $ and $ K' $ in the physics
literature. (See the appendix to see different $ K $ and $ K'$ when different representation 
of $ \Lambda $ is used.)

\begin{center}
\begin{figure}
\includegraphics[width=11cm]{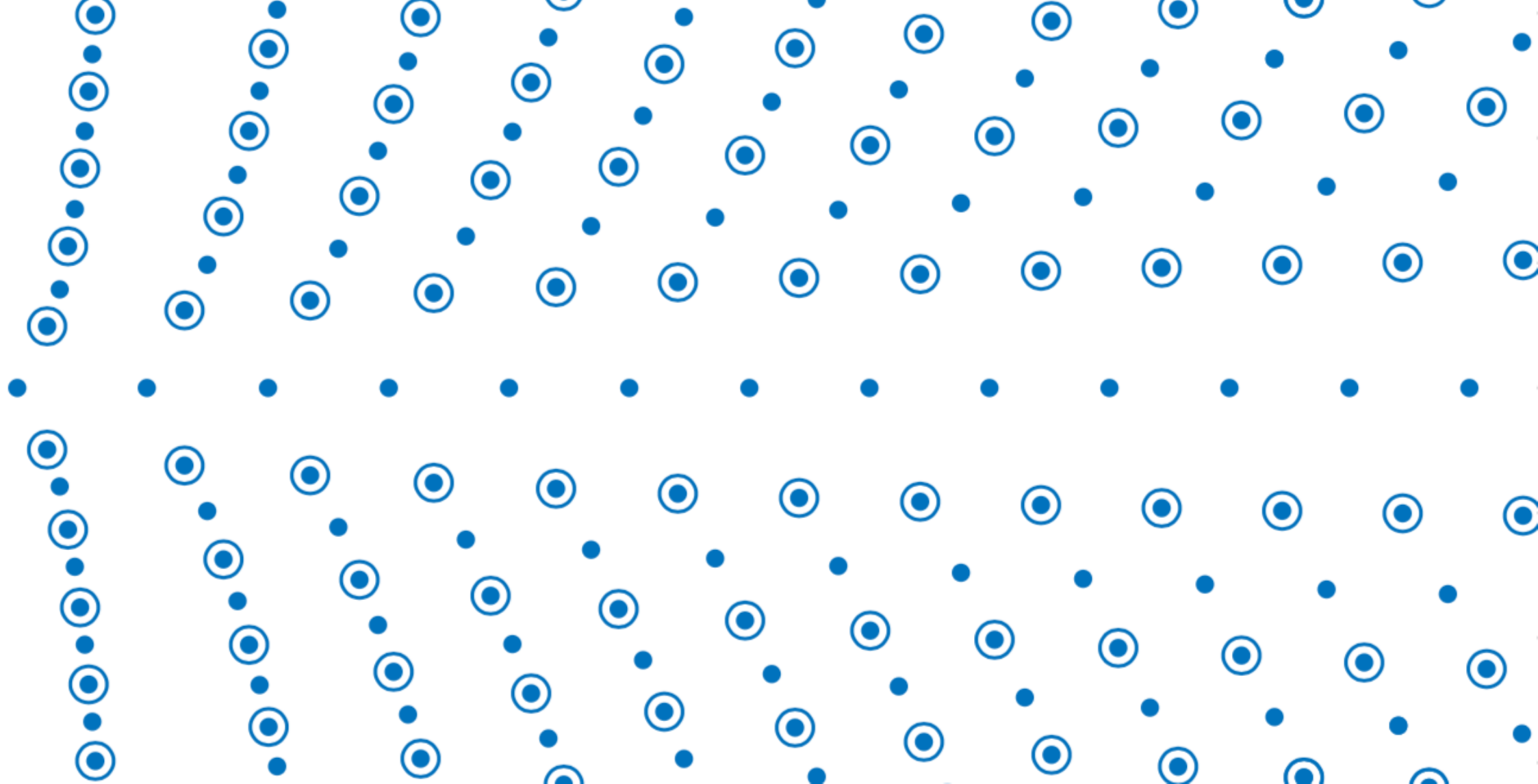}
\caption{\label{f:doub} The multiplicity of the flat band for complex values of $ \alpha $
can be double as illustrated here.
When the potential is 
replaced by $ U_\theta ( z ) = 
\cos \theta U ( z ) +  \sin \theta  \sum_{k=0}^2 \omega^k e^{\bar z \omega^k-  z \bar \omega^k} $ 
the symmetries
 \eqref{eq:oldD} (we are using coordinates of \cite{beta} -- see Appendix A) are preserved but the dynamics of $ \alpha$'s is interesting
 when $ \mu $ varies. A movie showing $ \mathcal A $
 as $ \theta $ varies with multiplicities color coded can be found at
  \url{https://math.berkeley.edu/~zworski/multi.mp4}.
 }
\end{figure}
\end{center}

The definition of the set of magical $ \alpha $'s can now be rephrased as follows
\begin{equation}
\label{eq:defA}  \mathcal A := \{ \alpha \in \mathbb C : \forall \, k \in \mathbb C, \ \
E_1 ( \alpha, k ) \equiv 0  \} \end{equation}

  Our first theorem states that if the Bloch eigenvalue vanishes 
away from the Dirac points then it vanishes identically, that is the 
band is flat:
\begin{theo}
\label{t:1}
Suppose $ \alpha \in \mathbb C $ and $ E_1 ( \alpha, k ) $ is 
defined using \eqref{eq:FL_ev},\eqref{eq:eigs} for $ H ( \alpha ) $ given 
by \eqref{eq:defD} with $ U $ satisfying \eqref{eq:propU}.  Then 
\begin{equation}  
\label{eq:alphac1}  
\exists \, \ k \notin \{- K, K  \} + \Lambda^*   \ \ \  E_1 ( \alpha, k ) =  0 
\ \Longleftrightarrow \ \forall \, \ k \in \mathbb C  \ \ \  E_1 ( \alpha, k ) =  0 
. \end{equation}
In other words, zero energy band is flat if and only if the Bloch eigenvalue is $ 0 $ at 
some $ k \notin \{ -K, K \} + \Lambda^*  $, 
which is the lattice  of conic points $($see Figure \ref{f:pbands}$)$.
\end{theo}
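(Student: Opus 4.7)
The strategy is to exhibit an entire function $F_\alpha : \mathbb{C} \to \mathbb{C}$ whose zero set (with multiplicity) coincides with $\{ k : E_1(\alpha, k) = 0 \}$, and to identify $F_\alpha$ as a Jacobi theta function of degree exactly $2$. Since the Dirac points $\pm K$ always provide two zeros modulo $\Lambda^*$, this exhausts the ``theta budget,'' so any further zero must force $F_\alpha \equiv 0$, which is precisely the flat-band condition \eqref{eq:defA}. This follows the Becker--Embree--Wittsten--Zworski framework of \cite{beta}.

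First, I would pass from the Bloch space $L^2_k$ to a fixed Hilbert space by a gauge transformation that absorbs the $k$-dependent phase $e^{i \langle k, \gamma \rangle}$ in \eqref{eq:FL_ev}, converting $D(\alpha)$ into an operator pencil $D_k(\alpha)$ depending holomorphically on $k$ on a $k$-independent domain. A Grushin problem for $D_k(\alpha)$ along the lines of \cite{beta, bhz1} then produces a finite matrix $E_{-+}(k, \alpha)$ whose determinant $F_\alpha(k) := \det E_{-+}(k, \alpha)$ is entire in $k$ and vanishes precisely when $D_k(\alpha)$ has a nontrivial kernel. By the block structure of $H(\alpha)$ in \eqref{eq:defD}, this is equivalent to $E_1(\alpha, k) = 0$.

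Next, I would track the behavior of $F_\alpha$ under $k \mapsto k + \lambda^*$, $\lambda^* \in \Lambda^*$. Such translations are implemented on $L^2_k$ by multiplication by unitary characters of $\mathbb{C}/\Lambda$, which forces a theta-function transformation law
\[
F_\alpha(k + \lambda^*) \;=\; e^{a(\lambda^*)\bar k + b(\lambda^*)}\, F_\alpha(k).
\]
A nonzero entire function obeying such a cocycle has a fixed number $N$ of zeros per fundamental domain of $\Lambda^*$, determined by $a(\lambda^*)$; a careful tally (following \cite{beta}) should give $N = 2$, matching the two protected zeros at $\pm K$ coming from $E_1(\alpha, \pm K) = 0$ in \eqref{eq:eigs}.

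With these ingredients the theorem is immediate: if $E_1(\alpha, k_0) = 0$ for some $k_0 \notin \{-K, K\} + \Lambda^*$, then $F_\alpha$ has at least three zeros $K, -K, k_0$ modulo $\Lambda^*$ in a fundamental domain, exceeding the allowed count $N = 2$, so $F_\alpha \equiv 0$ and $E_1(\alpha, k) \equiv 0$ for all $k$. The main obstacle is the combination of the first two steps: rigorously setting up the Grushin reduction for the $\mathbb{C}^4$-valued operator with the non-standard Bloch boundary conditions \eqref{eq:defLag}, and extracting the precise transformation cocycle that confirms $N = 2$. The expected tools are available in \cite{beta, bhz1}, but the matrix structure of $\mathscr L_\gamma$ and the symmetries \eqref{eq:propU} require careful bookkeeping.
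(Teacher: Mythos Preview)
Your approach is genuinely different from the paper's. The paper does not construct a theta-type determinant in $k$ at all; it deduces Theorem~\ref{t:1} in one line from the dichotomy~\eqref{eq:A2S} of Proposition~\ref{p:spec}: $\Spec_{L^2_0} D(\alpha)$ is either exactly $\mathcal K_0 = \{K,-K\}+\Lambda^*$ or all of $\mathbb C$. That dichotomy is obtained by writing $D(\alpha)-z = (D(0)-k)(I + \alpha T_k + R(k)(k-z))$ with $T_k = R(k)V$ compact, invoking analytic Fredholm theory, and running an open--closed argument on the connected set $\mathbb C \setminus \mathcal A_k$ to show that the discrete spectrum cannot move away from $\mathcal K_0$. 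The key step is that $\Spec_{L^2_p} T_k$ is independent of $k$ and $p$, so $\mathcal A_k$ does not depend on $k$. No Grushin determinant or zero-counting enters.

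Your theta-counting strategy is plausible in spirit but has two real obstacles beyond bookkeeping. First, the multiplier you write, $e^{a(\lambda^*)\bar k + b(\lambda^*)}$, is incompatible with $F_\alpha$ being entire in $k$: a holomorphic function cannot acquire an antiholomorphic factor under translation, so the cocycle must be holomorphic in $k$. Second, and more seriously, for $F_\alpha(k)=\det E_{-+}(k,\alpha)$ to be entire you need the Grushin problem to be well-posed for \emph{every} $k\in\mathbb C$ with a fixed two-dimensional augmentation. If $\alpha\in\mathcal A$ with $\dim\ker_{L^2_0}(D(\alpha)+k)\ge 3$ at some $k$ (this occurs---see Figure~\ref{f:doub} and \cite{bhz3}), a rank-two augmentation cannot make the Grushin invertible there. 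You might restrict to $\alpha\notin\mathcal A$, but proving global well-posedness in that case without already knowing $\Spec_{L^2_0} D(\alpha)=\mathcal K_0$ is essentially the theorem itself. Finally, neither \cite{beta} nor \cite{bhz1} contains the Grushin-plus-theta tally you invoke for $N=2$; \cite{beta} establishes precisely the resolvent/connectedness argument reproduced in Proposition~\ref{p:spec}.
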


The next theorem gives a useful criterion for simplicity. It is used in 
\cite{bhz1} to prove existence and  {\em simplicity} of the first magic $ \alpha $ 
and also in \cite{dynip}.

\begin{theo}
\label{t:HtL}
If $ \alpha \in \mathcal A $ 
then, in the notation of \eqref{eq:FL_ev}, 
\begin{equation}
\label{eq:HtL}
\begin{split} 
\forall \, j> 1, \,  k \in \mathbb C  \ \ E_j ( \alpha,  k ) > 0  & \ \Longleftrightarrow \ \forall \,  k \in \mathbb C 
\ \ 
\dim \ker_{L^2_{k} ( \mathbb C/\Lambda) } D ( \alpha ) = 1 \\
& \ \Longleftrightarrow \ \exists \,  p \in \mathbb C  \ \ 
\dim \ker_{L^2_{p } ( \mathbb C/\Lambda) } D ( \alpha ) = 1. \end{split} 
\end{equation}
\end{theo}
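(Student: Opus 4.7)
The plan is to reduce \eqref{eq:HtL} to a statement about $\dim \ker_{L^2_k} D(\alpha)$ alone, and then to identify this kernel at any momentum with the space of global sections of a holomorphic line bundle of fixed degree on the elliptic curve $\mathbb{C}/\Lambda$.

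First, $D(\alpha)$ is an elliptic first-order operator, hence Fredholm of index zero on each Bloch space $L^2_k(\mathbb{C}/\Lambda; \mathbb{C}^2)$, so $\dim \ker_{L^2_k} D(\alpha) = \dim \ker_{L^2_k} D(\alpha)^*$. The block structure $H(\alpha)^2 = \mathrm{diag}(D(\alpha)^* D(\alpha),\, D(\alpha) D(\alpha)^*)$ yields $\dim \ker_{L^2_k} H(\alpha) = 2 \dim \ker_{L^2_k} D(\alpha)$; in view of \eqref{eq:eigs} and the fact that $E_1 \equiv 0$ for $\alpha \in \mathcal A$, the condition $E_j(\alpha, k) > 0$ for all $j > 1$ at a given $k$ is equivalent to $\dim \ker_{L^2_k} D(\alpha) = 1$. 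This establishes the first equivalence of \eqref{eq:HtL}; the implication from the second condition to the third is tautological.

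The main content is the converse. Fix a nonzero $u = (u_1, u_2) \in \ker_{L^2_p} D(\alpha)$ and, for any $v = (v_1, v_2) \in \ker_{L^2_k} D(\alpha)$, set $W(z) := u_1(z) v_2(z) - u_2(z) v_1(z)$. Substituting $2 D_{\bar z} u_i$ and $2 D_{\bar z} v_i$ from $D(\alpha) u = D(\alpha) v = 0$, one checks $D_{\bar z} W = 0$, so $W$ is entire. Tracking the twists in \eqref{eq:defLag}, the $e^{\pm i \langle K, \gamma\rangle}$ factors cancel between $u_1 v_2$ and $u_2 v_1$ to give $W(z + \gamma) = e^{i\langle p + k, \gamma\rangle} W(z)$ for $\gamma \in \Lambda$. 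When $p + k \notin \Lambda^*$, a compact-torus Liouville argument applied to $\log|W|$ forces $W \equiv 0$. When $p + k \in \Lambda^*$, $W$ is $\Lambda$-periodic and entire, hence a constant which must vanish as soon as $u_1, u_2$ share a common zero on $\mathbb{C}/\Lambda$. Once $W \equiv 0$, the vectors $u(z)$ and $v(z)$ are pointwise parallel, so $v = F \cdot u$ with $F$ holomorphic off the common zero locus of $u$ and obeying $F(z + \gamma) = e^{i\langle k - p, \gamma\rangle} F(z)$.

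Such $F$ are precisely the global sections of the holomorphic line bundle $L_k := \mathcal{O}(\mathrm{div}\, u) \otimes \chi_{k-p}$ over $\mathbb{C}/\Lambda$, where $\mathrm{div}\, u := \sum_{z_0} \min(\mathrm{ord}_{z_0} u_1, \mathrm{ord}_{z_0} u_2)[z_0]$ is the common-zero divisor and $\chi_{k-p}$ is the flat line bundle of character $k-p$. Its degree $d := \deg \mathrm{div}\, u$ is independent of $k$. Riemann--Roch together with Serre duality on the elliptic curve (whose canonical bundle is trivial) gives $h^0(L_k) = d$ for every $k$ as soon as $d \geq 1$. The magic condition $\alpha \in \mathcal A$ forces $\dim \ker_{L^2_k} D(\alpha) \geq 1$ for every $k$, which rules out $d = 0$ (otherwise $h^0(L_k) = 0$ at generic $k$), and thereby produces the common zero of $u$ that was needed to close the Wronskian argument at the resonant momenta $p + k \in \Lambda^*$; the proof is logically a bootstrap. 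The hypothesis $\dim \ker_{L^2_p} D(\alpha) = 1$ then pins $d = 1$, so $\dim \ker_{L^2_k} D(\alpha) = h^0(L_k) = 1$ for every $k \in \mathbb{C}$. The step that most deserves care is the Wronskian analysis at the resonant momenta; everything else is either a direct computation or a standard application of Riemann--Roch.
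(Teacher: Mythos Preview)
Your argument is correct and takes a genuinely different, more conceptual route from the paper's. The paper argues by contradiction: it uses the protected state $u_K$ and its zero at $\pm z_S$ (Proposition~\ref{p:vanish}) to build an explicit $v\in\ker_{L^2_0}(D(\alpha)+k)$ via a theta-function multiplier, pairs it with an independent $w$, shows their Wronskian vanishes from the built-in zero, writes $w=gv$ with $g$ meromorphic, and then pushes both back to momentum $p$ by further theta-function ratios to contradict simplicity via Lemma~\ref{l:zeros} (which the paper itself notes is Riemann--Roch in disguise). You instead identify $\ker_{L^2_k}D(\alpha)$ directly with $H^0\bigl(\mathcal O(\mathrm{div}\,u)\otimes\chi_{k-p}\bigr)$ and read off the constant dimension $d$ from Riemann--Roch; this explains \emph{why} the dimension is independent of $k$ and bypasses the protected state and the explicit theta manipulations. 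Two points in your write-up deserve a sentence each, though they are not gaps: the notion of $\mathrm{ord}_{z_0}u_i$ for a smooth (non-holomorphic) function is supplied by iterating Lemma~\ref{l:van}, which also guarantees $u=(z-z_0)^m\tilde u$ with $\tilde u$ smooth so that every $F\in H^0$ gives a smooth $Fu$; and the meromorphy of $F$ across the common zeros follows because $(z-z_0)^mF$ is locally bounded there, hence extends holomorphically. Your bootstrap for the resonant momenta $p+k\in\Lambda^*$ is logically sound; the paper sidesteps this issue by forming the Wronskian of two elements at the \emph{same} momentum, where the built-in zero of $F_{k-K}$ already kills the resulting constant.
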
 
In other words, the simplicity of
 $ 0 $ as the eigenvalues of 
$ D ( \alpha ) $ on $ L^2_{k } ( \mathbb C /\Gamma; \mathbb C^4 )$ for all $ k$ 
is equivalent to the simplicity of the zero eigenvalue of $ D( \alpha ) $ on
$ L^2_{ p } ( \mathbb C/\Lambda; \mathbb C^2 ) $, for any
 one $ p $.

The symmetries of the potential $ U $ imply that $ U $ vanishes at the {\em stacking point} of high
symmetry: $ z_S := i /\sqrt 3=(\omega-\omega^2)/3\in \Lambda/3$:
\begin{equation}
\label{eq:stacking}  \omega z_S = z_S - 1 - \omega \equiv z_S \mod \Lambda 
\ \Longrightarrow \ U ( - z_S ) = 0 . 
\end{equation}
(To obtain this conclusion use \eqref{eq:propU} to see that $ U ( z_S + \omega \zeta ) = 
\bar \omega U ( z_S + \zeta ) $.)

In the work of Tarnopolsky et al \cite{magic}, flat bands were characterized by vanishing
of a distinguished element of the kernel of $ D( \alpha ) $ at the stacking points $ \pm z_S$. 
For the potential \eqref{eq:BMU} it was claimed that the vanishing of an eigenvector $u\in \mathrm{ker}_{L^2_0}(D(\alpha)-K)$ occurs precisely at $z_S$ . This is equivalent to showing that the zero of $  u \in \ker_{ L^2_{-K} } D ( \alpha ) $ 
occurs precisely at $   z_S$.  We show that 
this is indeed true when $ \alpha \in \mathcal A $ is simple and formulate it more generally: 
\begin{theo}
\label{t:zS}
Suppose the equivalent conditions in \eqref{eq:HtL} hold. 
Then, non-trivial elements of $($one dimensional$)$ space 
$ \ker_{ H_k^1 ( \mathbb C/\Lambda;
\mathbb C^2 )} D ( \alpha )  $
 have zeros of order one at 
\begin{equation} 
\label{eq:t3}  \frac{ \sqrt 3 k }{ 4 \pi i } + \Lambda \end{equation} 
and nowhere else. In particular for $ k = - K $ the zeros occur precisely at the 
stacking points $  z_S + \Lambda $. 
\end{theo}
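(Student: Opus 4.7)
The core observation is that since $D_{\bar z}$ annihilates holomorphic functions, multiplication by an entire $f$ commutes with $D(\alpha)$ in the sense that
\[
D(\alpha)(f u) = 2 (D_{\bar z} f) u + f D(\alpha) u = f D(\alpha) u ;
\]
hence multiplication by $f$ preserves $\ker D(\alpha)$. Under the one-dimensionality assumption \eqref{eq:HtL}, a quasi-periodic meromorphic multiplier will relate the (unique up to scalar) kernel elements in different Floquet sectors, and this is the engine of the proof.

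Fix a non-trivial $u_* \in \ker_{H^1_{-K}} D(\alpha)$. I would first establish the base case: $u_*$ has a simple zero at $z_S$ modulo $\Lambda$ and no other zero in the fundamental domain. Existence of a zero at $z_S$ is extracted from the $C_3$ rotational symmetry of $D(\alpha)$ (acting essentially by $z \mapsto \omega z$): one-dimensionality forces $u_*$ to be a $C_3$-eigenvector, and since $-K$ is $C_3$-invariant modulo $\Lambda^*$, one can evaluate the relation at the $C_3$-fixed point $z_S \in \mathbb{C}/\Lambda$ (using $\omega z_S \equiv z_S \mod \Lambda$ together with the Floquet condition). The resulting compatibility between the global $C_3$-eigenvalue and the twist accumulated at $z_S$ forces $u_*(z_S) = 0$. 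Uniqueness and simplicity of this zero then follows from a zero-counting argument: if $u_*$ had additional zeros, elliptic function multipliers absorbing one of the zeros and introducing a new zero at a free point would produce a second, linearly independent kernel element in $\ker_{H^1_{-K}} D(\alpha)$, contradicting the hypothesis.

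For arbitrary $k \in \mathbb{C}$, I would then construct a meromorphic multiplier $F_k$ on $\mathbb{C}$ satisfying
\[
F_k(z + \gamma) = e^{i \langle k + K, \gamma \rangle} F_k(z), \quad \gamma \in \Lambda,
\]
with a single simple pole at $z_S$ and a single simple zero in each fundamental domain. Classical theta function theory (Weierstrass $\sigma$ or Jacobi $\vartheta$) produces such $F_k$ uniquely up to a scalar, and the Abel--Jacobi constraint on the divisor of a quasi-periodic meromorphic function with the prescribed multiplier forces the position of the zero to satisfy $(\mathrm{zero})-(\mathrm{pole}) = \frac{\sqrt 3 (k+K)}{4 \pi i} \bmod \Lambda$, i.e., the zero sits at $\frac{\sqrt 3 k}{4 \pi i}$ modulo $\Lambda$ (using $z_S = \frac{\sqrt 3 (-K)}{4 \pi i}$). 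Then $v_k := F_k u_*$ is smooth (the pole of $F_k$ is cancelled by the zero of $u_*$ at $z_S$), is non-zero, and lies in $\ker_{H^1_k} D(\alpha)$. One-dimensionality of the latter makes every non-trivial element proportional to $v_k$, and the zero set of $v_k$ is precisely $\frac{\sqrt 3 k}{4 \pi i} + \Lambda$, all simple.

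The main obstacle is the base case, in particular ruling out additional zeros of $u_*$. The $C_3$-symmetry argument pins the first zero at $z_S$, but showing there are no others requires the theta-multiplier reductio above (or, equivalently, a Chern number / degree argument for the associated line bundle on the Brillouin torus identifying its degree with the multiplicity one of the flat band). A secondary technical point is the Abel--Jacobi computation that identifies the slope of the linear map $k \mapsto z_k$ as $\frac{\sqrt 3}{4\pi i}$; this is a direct consequence of the duality $\Lambda^* = \frac{4\pi i}{\sqrt 3}\Lambda$ between the position and momentum lattices, but must be carried out consistently with the pairing $\langle \cdot , \cdot \rangle$.
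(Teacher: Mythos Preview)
Your transport step (multiplying by a theta-quotient $F_k$ to move from the base case at $k=-K$ to general $k$) and your zero-counting reductio (extra zeros would yield an independent kernel element via an elliptic-function multiplier, contradicting one-dimensionality) are correct and coincide with the paper's mechanism.

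The gap is in the base case. The $C_3$-compatibility argument you sketch does \emph{not} force $u_*(z_S)=0$. With $u_*=(\psi_1,\psi_2)^t\in L^2_{-K,0}$ one has $\Omega u_*=u_*$, and combining $\psi_j(\omega z)=\psi_j(z)$ with the twisted periodicity $\psi_1(z+\gamma)=e^{-2i\langle K,\gamma\rangle}\psi_1(z)$, $\psi_2(z+\gamma)=\psi_2(z)$ at the fixed point $z_S$ (where $\omega z_S=z_S-(1+\omega)$) gives a nontrivial constraint only on $\psi_1$; the relation for $\psi_2$ is a tautology. So $C_3$ alone yields $\psi_1(\pm z_S)=0$ (this is exactly Proposition~\ref{p:prote}) but says nothing about $\psi_2(\pm z_S)$. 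To get full vanishing at one of $\pm z_S$ the paper uses the Wronskian $W(u_K,u_{-K})$: it is constant, and for $\alpha\in\mathcal A$ it vanishes, which combined with $\psi_1(\pm z_S)=0$ forces $u_*(z_S)=0$ \emph{or} $u_*(-z_S)=0$.

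Even granting that, your outline omits what is in fact the most delicate step: pinning the zero at $z_S$ rather than $-z_S$ (and excluding $0$). Your $C_3$ argument treats $z_S$ and $-z_S$ symmetrically, and the zero-counting reductio only bounds the \emph{number} of zeros, not their location among the fixed points. The paper spends most of its proof on this: it introduces the meromorphic function $f(z)=\psi_2(-z)/\psi_1(z)$ (satisfying $f(z)f(-z)=-1$ and $f(\omega z)=f(z)$), analyses its Laurent expansions at $\pm z_S$, and shows that a zero of $u_*$ at $-z_S$ would force order $\geq 2$ there, allowing a $\wp$-multiplier contradiction. The paper's remark after the proof suggests a cleaner alternative you might prefer: work instead with $u_0\in\ker_{L^2_0}D(\alpha)$, use the $\mathscr E$-symmetry $\mathscr E u_0=\pm i u_0$ to force $u_0(0)=0$ directly, and then transport by $F_k$.
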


\noindent
{\bf Remark.} We consider the zero at $ z_0 $  to be of order one 
if  $ \partial_z u (z_0) \neq 0 $; the equation implies 
 at zeros $ \partial_{\bar z}^\ell u = 0$ for all $ \ell $ -- see Lemma \ref{l:van}.
This implies that  $ u ( z_0 + \zeta  ) = \zeta w ( \zeta , \bar \zeta ) $,  
$ w ( 0  ) \neq 0 $ and $ w $ is holomorphic near $ 0 \in \mathbb C^2 $. Theorem \ref{t:zS}
is illustrated by Figure~\ref{f:1D}.

As a  consequence we find (in \S \ref{s:ccc}) 
\begin{theo}
\label{theo:Chern}
If  $   \dim \ker_{L^2_{  0} } D( \alpha ) = 1 $, then the Chern number associated with the Bloch function
$ u_{k} \in  \ker_{L^2_{\bf 0} }( D ( \alpha) + k) $, is equal to one $($see \S \ref{s:ccc} for a
precise formulation.$)$ 
\end{theo}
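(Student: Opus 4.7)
The plan is to identify the Bloch line bundle over the Brillouin zone with an explicit degree-one line bundle and then extract its Chern number by counting zeros of a natural section. First, Theorem \ref{t:HtL} promotes the hypothesis $\dim\ker_{L^2_{\mathbf 0}}D(\alpha)=1$ to $\dim\ker_{L^2_{\mathbf 0}}(D(\alpha)+k)=1$ for every $k\in\mathbb C$. Since $D(\alpha)+k$ depends holomorphically on $k$ with the zero eigenvalue remaining simple, Kato perturbation theory assembles these one-dimensional kernels into a holomorphic line bundle $\mathcal L\to\mathbb C/\Lambda^{*}$ whose first Chern number is the invariant we have to evaluate.

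The key geometric input is that the zero of the Bloch function sweeps out the position torus exactly once as $k$ traverses the Brillouin zone. By Theorem \ref{t:zS}, for each $k$ the spanning vector $u_k\in\mathcal L_k$ has a unique, simple zero at $z_k:=\frac{\sqrt 3\,k}{4\pi i}\pmod\Lambda$, and the relation $\Lambda^{*}=\frac{4\pi i}{\sqrt 3}\Lambda$ makes $Z\colon k\mapsto z_k$ an orientation-preserving biholomorphism $\mathbb C/\Lambda^{*}\to\mathbb C/\Lambda$ of degree one.

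To translate this into a Chern number I would fix a base point $z_0\in\mathbb C$ and a generic linear functional $\ell\in(\mathbb C^2)^{*}$, and consider
\[
s(k):=\ell\bigl(u_k(z_0)\bigr),
\]
which defines a holomorphic morphism $\mathcal L\to\underline{\mathbb C}$ once a local frame for $\mathcal L$ is chosen. The zeros of $s$ in a fundamental domain of $\mathbb C/\Lambda^{*}$ are exactly the preimages $Z^{-1}(z_0)$: for generic $\ell$ they coincide with the common zeros of both components of $u_k(z_0)$, and each is of first order because $u_k$ vanishes to order one in position and $Z$ is linear in $k$. Since $\deg Z=1$, the section has exactly one simple zero.

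The hard part will be pinning down the sign and matching conventions. The evaluation $s$ is naturally a section of the dual bundle $\mathcal L^{*}$, so the raw zero count computes $c_1(\mathcal L^{*})$ rather than $c_1(\mathcal L)$; in addition, one must reconcile the two Floquet formulations $L^2_k$ and $L^2_{\mathbf 0}$ via the gauge $u\mapsto e^{-i\langle k,z\rangle c_{0}}u$, and track the orientation on $\mathbb C/\Lambda^{*}$ induced by its complex structure. All three effects conspire to upgrade the zero count $|c_1(\mathcal L)|=1$ to the signed answer $c_1(\mathcal L)=+1$ in the convention adopted in \S\ref{s:ccc}.
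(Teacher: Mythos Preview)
Your zero-counting idea is a genuine alternative to the paper's route, which works instead by identifying the transition cocycle of the Bloch bundle explicitly. The paper writes down the global holomorphic frame $u(k,z)=F_{k-K}(z-z_S)\,u_K(z)$ of \eqref{eq:defuk}, reads off from Lemma~\ref{l:Fk2p} the multipliers $e_p(k)=\theta(z(k))/\theta(z(k+p))$ governing $u(k+p)=e_p(k)^{-1}\tau(p)^{-1}u(k)$, and then plugs these into the boundary formula \eqref{eq:c1L} of Appendix~B to obtain $c_1(L)=-1$ directly. Neither Theorem~\ref{t:zS} nor any genericity argument is needed; only Proposition~\ref{p:vanish} (the existence of \emph{some} zero of $u_K$) enters.

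Your argument, as written, has a gap at the zero count. The zeros of $s(k)=\ell(u_k(z_0))$ are the $k$ with $u_k(z_0)\in\ker\ell$, which a priori properly contains $Z^{-1}(z_0)$: nothing you cite rules out the direction $[u_k(z_0)]\in\mathbb{CP}^1$ hitting $[\ker\ell]$ at further values of $k$, and since $k$ ranges over a real-two-dimensional torus the image of $k\mapsto[u_k(z_0)]$ could in principle be all of $\mathbb{CP}^1$, defeating every choice of $\ell$. What actually rescues the count is precisely the paper's factorisation: $u_k(z_0)=F_{k-K}(z_0-z_S)\,u_K(z_0)$ is a scalar function of $k$ times the \emph{fixed} vector $u_K(z_0)$, so the direction is constant and, for any $\ell$ with $\ell(u_K(z_0))\neq0$, the only zero of $s$ is the single simple zero of $k\mapsto F_{k-K}(z_0-z_S)$. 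Once this factorisation is invoked, however, you have the multipliers $e_p(k)$ in hand and the paper's computation is shorter and delivers the sign without the bookkeeping you defer to ``the hard part''. (For the record, that computation gives $c_1(L)=-1$ for the bundle \eqref{eq:defL}; it is the dual bundle, associated with $\ker_{L^2_0}(D(\alpha)^*+\bar k)$, that carries $c_1=+1$.)
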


\noindent
{\bf Remarks} 1. Theorem \ref{t:HtL} shows that the assumption of
Theorem \ref{t:zS} 
is equivalent to 
the minimal multiplicity of the flat band, that is to $ |E_j ( \alpha, k) |> 1 $
for $|j | > 1$.


\noindent
2. Numerical results suggest that the first string of complex $ \alpha$'s in $ \mathcal A $ for 
\eqref{eq:BMU} have higher multiplicities 
(see Figure \ref{f:doub} where double $ \alpha$'s are indicated) and in that case the zeros of  $ u_K \in 
\ker_{ H^1_0 } ( D ( \alpha ) + K ) $ appear at $ - z_S + \Lambda $.

\noindent
3. In \cite{bhz1}, we show that the first real angle (existence of which was first 
established by Watson--Luskin \cite{lawa}) is in fact simple. For higher real $ \alpha$'s 
for the potential \eqref{eq:BMU} numerical experiments \cite{beta} provide strong
evidence of simplicity. 

We also make two numerical observations presented in \S \ref{s:chern}.  
The first one is illustrated by Figure \ref{f:pbands} and the movie referenced there.
We see that the rescaled first band is nearly constant close to magic angles
and its shape is closed to that of $ | U | $ after a linear changes of variables
$ z \mapsto k $.

The second observation concerns the behaviour of the curvature of the hermitian 
holomorphic line bundle (somewhat informally) defined by $ k \mapsto u_{k} $
\cite{led} (with hermitian structured inherited from $ L^2 $). We observe that the curvature
peaks at the $ \Gamma $ point, that is, in our notation, at $ k =   i $ -- see Figure
\ref{f:curv}. It is also interesting to note that the curvature does not change much at different
magic $ \alpha$'s -- see \S \ref{s:chern} for definition and computational details.

\begin{center}
\begin{figure}
\includegraphics[width=8cm]{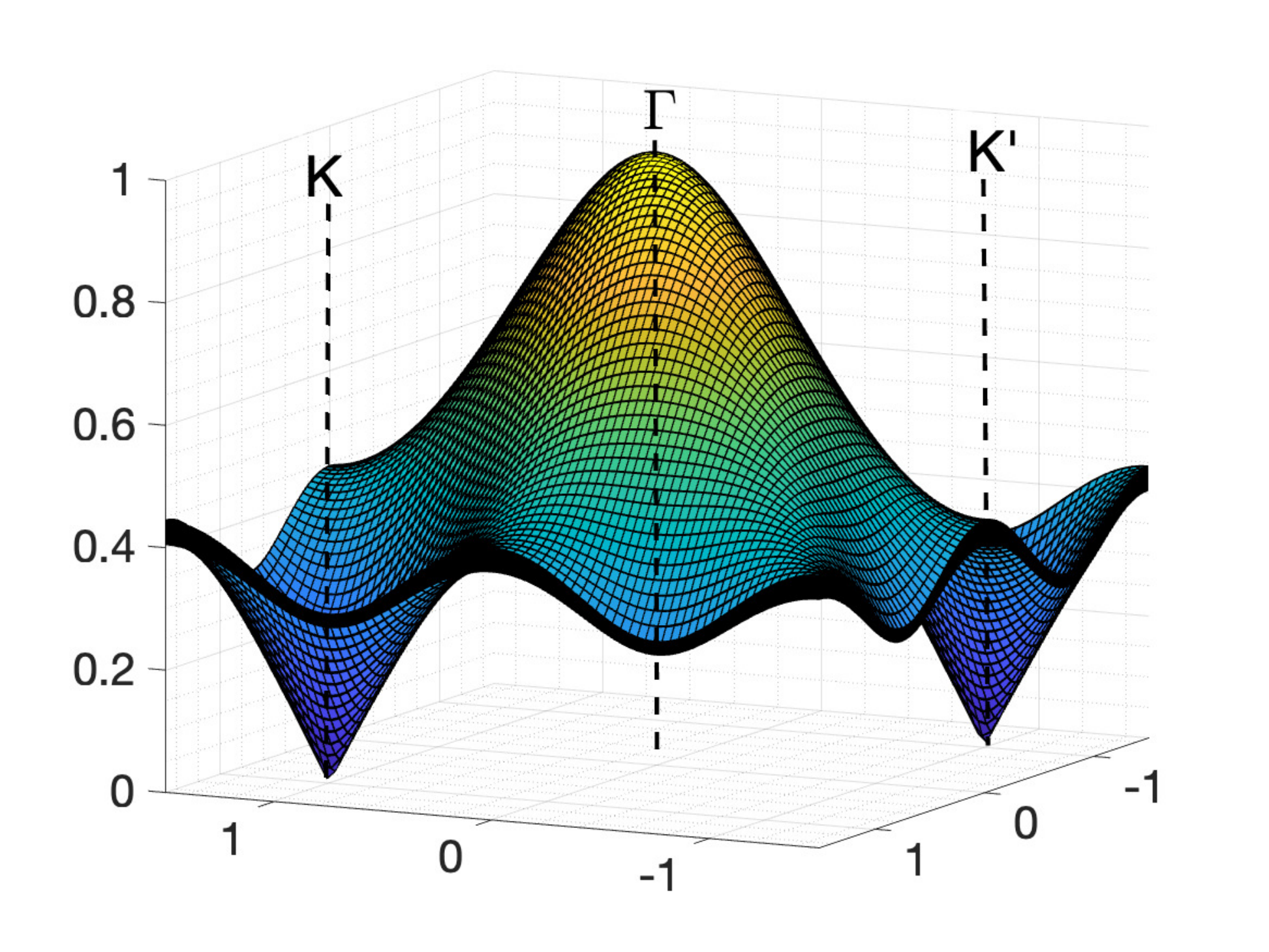}\includegraphics[width=8cm]{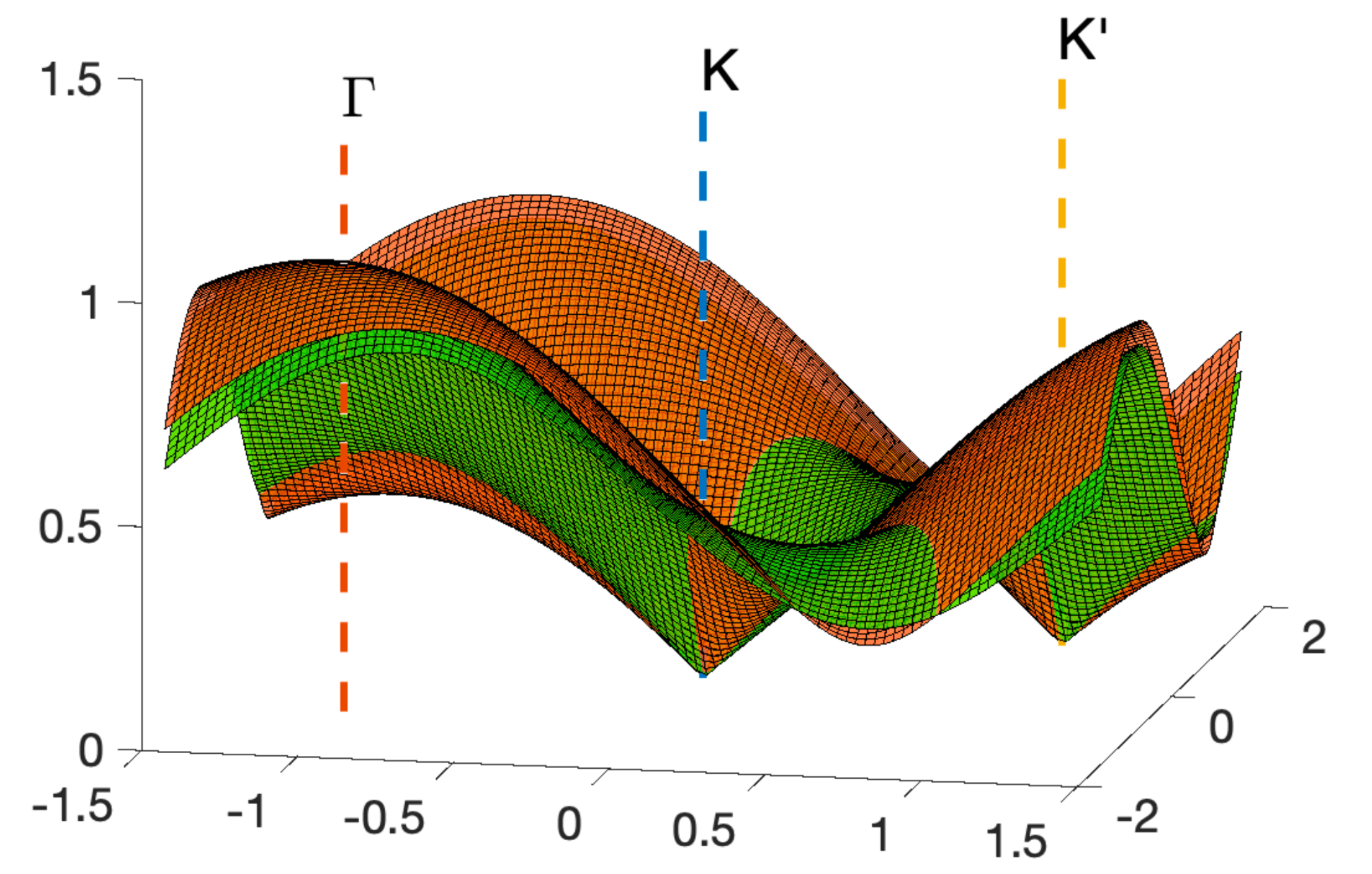}
\caption{\label{f:pbands} Plots of $ k \mapsto E_1 ( \alpha, k )
/(\max_{k} E_1 ( \alpha, k ) ) $ for $  0.4 < \alpha < 0.6 $ (left)
($ k = ( \omega^2 k_1 - \omega k_2 )/\sqrt 3 $, $ |k_j| \leq \frac32 $ and 
we use the coordinates $ k_j $).
Although
the band becomes flat at the first magic $ \alpha \simeq 0.586$, the rescaled plots remain
almost fixed and close to $ {k } \mapsto | 2 \partial_z U ( -4 \sqrt 3 \pi i k/9   ) |$ (right, blue-coloured) compared with $E_1(0.58,k)$ (right, orange-coloured).  For an animated version see 
\url{https://math.berkeley.edu/~zworski/KKmovie.mp4}.}
\end{figure}
\end{center}

\noindent
{\bf Comments on an earlier version of this paper.} We now concentrate exclusively
on the case of simple bands with an expanded discussion of multiplicities moved to \cite{bhz3}.
That paper will also include a modified version of generic multiplicities. Contrary to our
earlier statement, certain double $ \alpha$'s are protected (for instance the ones
marked as double in Figure~\ref{f:doub}).

We conclude this introduction by discussing relation to some physics issues.

\noindent
{\bf The anomalous quantum Hall effect.}
The analysis of the multiplicity of the flat band has immediate implications on the transport properties of twisted bilayer graphene.
In the case of a simple magic angle, the two bands have Chern numbers $\pm 1$ resulting in a net Chern number zero. While this cancellation may sound discouraging at first, it has been recently discovered that twisted bilayer graphene hosts an anomalous quantum Hall effect when it is aligned with hexagonal Boron nitride (hBN) \cite{Serlin}. In that case, an additional sublattice potential of strength $m>0$ is added to the Hamiltonian, that is, the Hamiltonian in \eqref{eq:defD}  is replaced by 
\[ H_{  m}(\alpha) = \begin{pmatrix}m & D(\alpha)^*  \\   D(\alpha) & -m \end{pmatrix}.\]
This effective mass splits the two flat bands at zero energy to one at energy $m$ and one at $-m$, respectively. It follows then from Theorem \ref{theo:Chern} that the anomalous Hall conductivity $\sigma$ of any individual flat band at energy $m$ has Chern number -1 which by the Kubo formula corresponds to a Hall conductivity
\[ \sigma= -\frac{e^2}{2\pi \hbar } c_1.\]
For the band at energy $-m$ the Chern number is $+1.$

\begin{center}
\begin{figure}
\includegraphics[width=10cm]{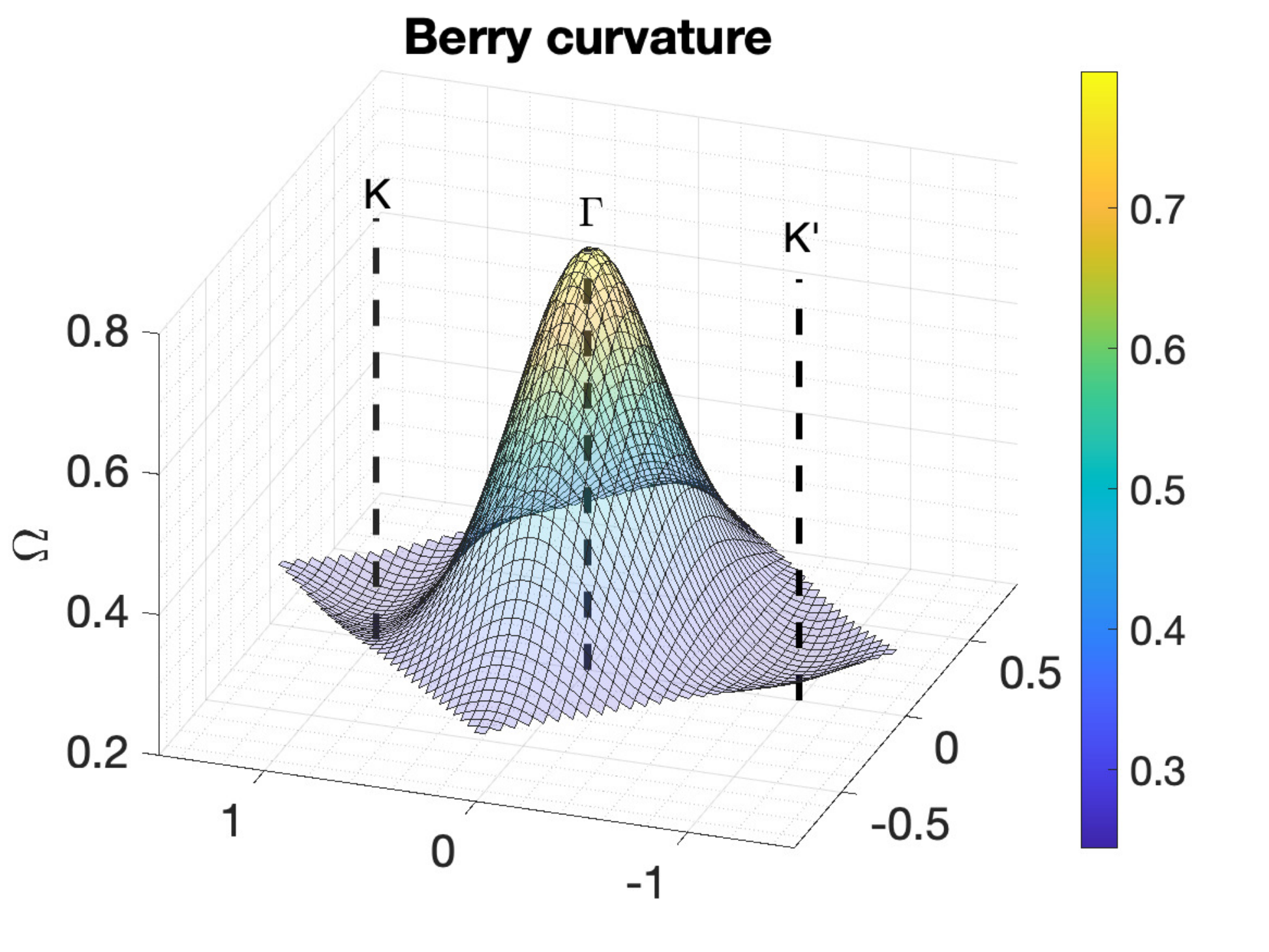}
\caption{\label{f:curv}  The plot of the curvature of the holomorphic line bundle 
corresponding to the first simple band, defined in \eqref{eq:defL}. The extrema at $K,\Gamma,K'$ follow from Prop. \ref{p:H} and the subsequent discussion.}
\end{figure}
\end{center}

\noindent
{\bf Superfluid weights.}
Twisted bilayer graphene exhibits a form of superconductivity at the magic angles. The Bardeen-Cooper-Schrieffer (BCS) theory states \cite{S} that the critical temperature of a superconductor satisfies $T_C \propto e^{-1/(n_F U)}$ where $n_F$ is the DOS at Fermi level and $U$ the interaction between electrons forming a Cooper pair showing why flat bands are promising candidates for high-temperature superconductors. Although this identifies the critical temperature it does not explain whether superconductivity actually exists. Another necessary condition for superconducting states in flat bands has recently been discussed in a series of works by Peotta, Törmä, and collaborators \cite{PT,PT2,PT3,PT4}, see also \cite{PTB} for an analysis in the context of moire materials, stressing the importance of the flat band geometry characterized by the \emph{quantum geometric tensor}. 
The electrodynamic properties of a superconductor are captured by the London equation
$ j=-D_s A$ where $j$ is the current density, $A$ the vector potential in the London gauge and $D_s$ the superfluid weight. In \cite[(22)]{PTB} it is argued that, under some approximations, for a flat band with filling factor $\nu$
\[ D_s \propto \nu(1-\nu) \int_{\text{B.Z.}} g \text{ with metric } g=\partial_{\bar z} \partial_z \log(h(z)) \vert dz \vert^2,\]
that is,  Cooper pairs can support transport in bands with non-trivial topology.
In particular, the volume of the metric is then proportional to the first Chern number \[ D_s \propto \nu(1-\nu) \vert c_1\vert,\]
which by Theorem \ref{theo:Chern} is equal to $1$ for an isolated flat band emphasizing the importance of a non-zero first Chern number in such systems. 

\noindent
{\bf Fractional Quantum Hall effect.}
The multiplicity of the flat band has also implications for other many-body phenomena. Unlike the integer quantum Hall effect which can be understood in a single-particle picture, the fractional quantum Hall effect is a many-body effect conjectured to appear in twisted bilayer graphene \cite{led}. In its original formulation, Laughlin \cite{laugh} constructed under the assumption of a sufficiently large gap of the flat bands, a many-particle wavefunction using the lowest landau levels which was then generalized by Haldane and Rezayi to the torus \cite{HR85}, see also \cite{F15}. Theorem \ref{t:1} together with \cite[Theorem $3$]{bhz1} ensures the existence of such a gap at the first magic angle.
Let us briefly explain the construction in \cite{led}: One defines $\Gamma_{N}:=\frac{4\pi i N_1}{3}\ZZ \omega + \frac{4\pi i N_2}{3}\ZZ \omega^2 $ with $N_s:=N_1N_2$ and $\tau = N_2\omega/N_1.$ If $N_e$ is the number of electrons occupying the band, then we require $m:=N_s/N_e \in 2\mathbb N_0+1.$ 
The ansatz for the Laughlin state of the interacting $N_e$-body electron system, depends on the multiplicity and zero set of the Bloch function, identified in Theorems \ref{t:zS}, 
\[\begin{split} \psi(z_1,...,z_{N_e}) &= F(z_1,...,z_{N_e})  \prod_{i=1}^{N_e} \frac{u(z_i)}{\vartheta_1 \Big(3(z_i+z_S)/(4\pi i \omega) \vert \omega\Big)}\\
F(z_1,...,z_N) &= G_m(Z) \prod_{i<j} g(z_i-z_j).  \end{split}\]
The Bloch conditions $\mathscr L^{(i)}_{N_1a_1+N_2a_2} \psi(z_1,...,z_N) = \psi(z_1,...,z_N),$ where $\mathscr L^{(i)}$ acts like $\mathscr L$ on the $i$-th coordinate $z_i$, is then assumed to hold for each particle and implies that $g$ has a zero of order $m.$ A Laughlin state is then obtained by assuming that all zeros occur at the origin which implies, by assuming $g$ to be holomorphic, that $g(z)= \theta_1 \big(3z/(4\pi i N_1 \omega) \vert \tau\big)^m.$ An easy computation shows that this leaves a $m$-fold degeneracy in the choice of $G$ which is called the \emph{topological order} of the Laughlin state.

\section{Spectral theory and symmetries of the Hamiltonian}
\label{s:specH}

In this section we review symmetries of the Hamiltonian, present a more detailed
discussion of different approaches to Floquet theory, recall the
spectral characterization of magic angles and prove Theorem \ref{t:1}.

\begin{figure}
\includegraphics[height=10cm, width=6cm]{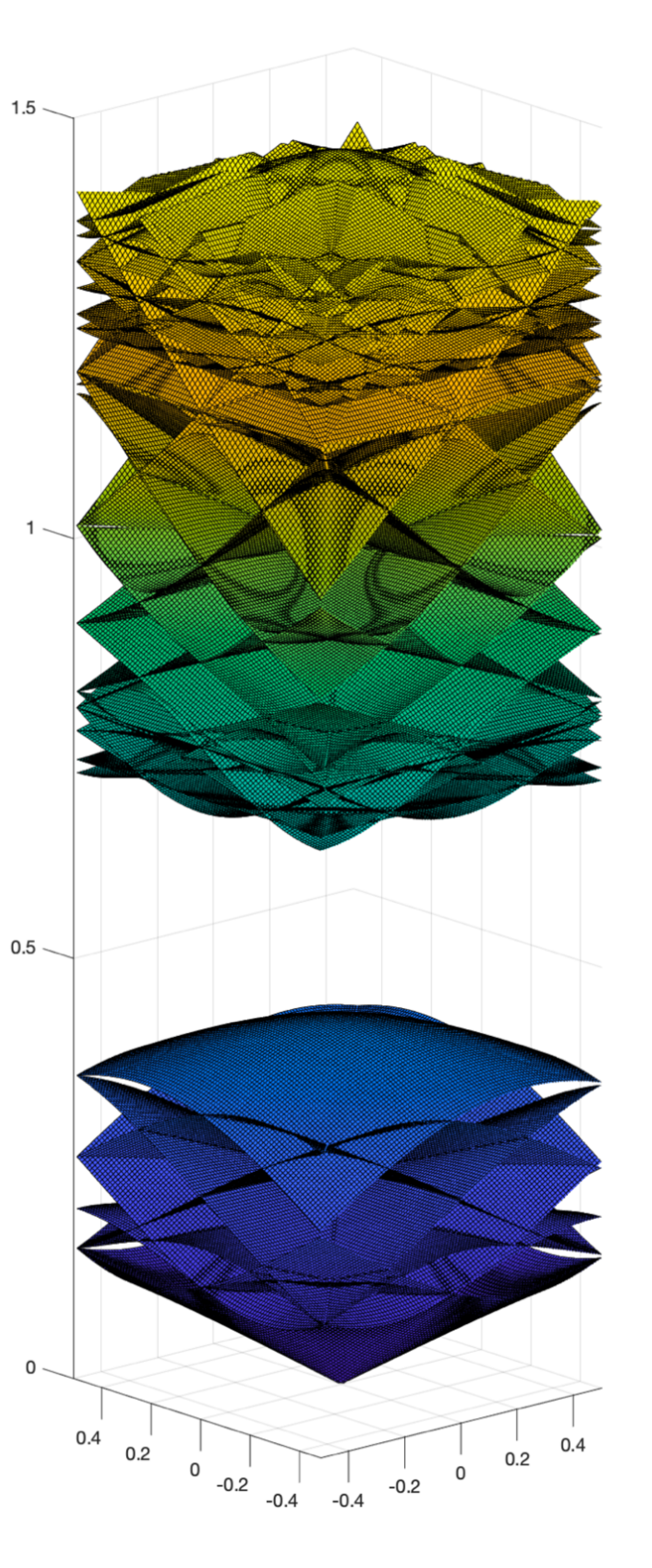}\includegraphics[trim={0cm 1cm 0 0},height=10.5cm, width=7cm]{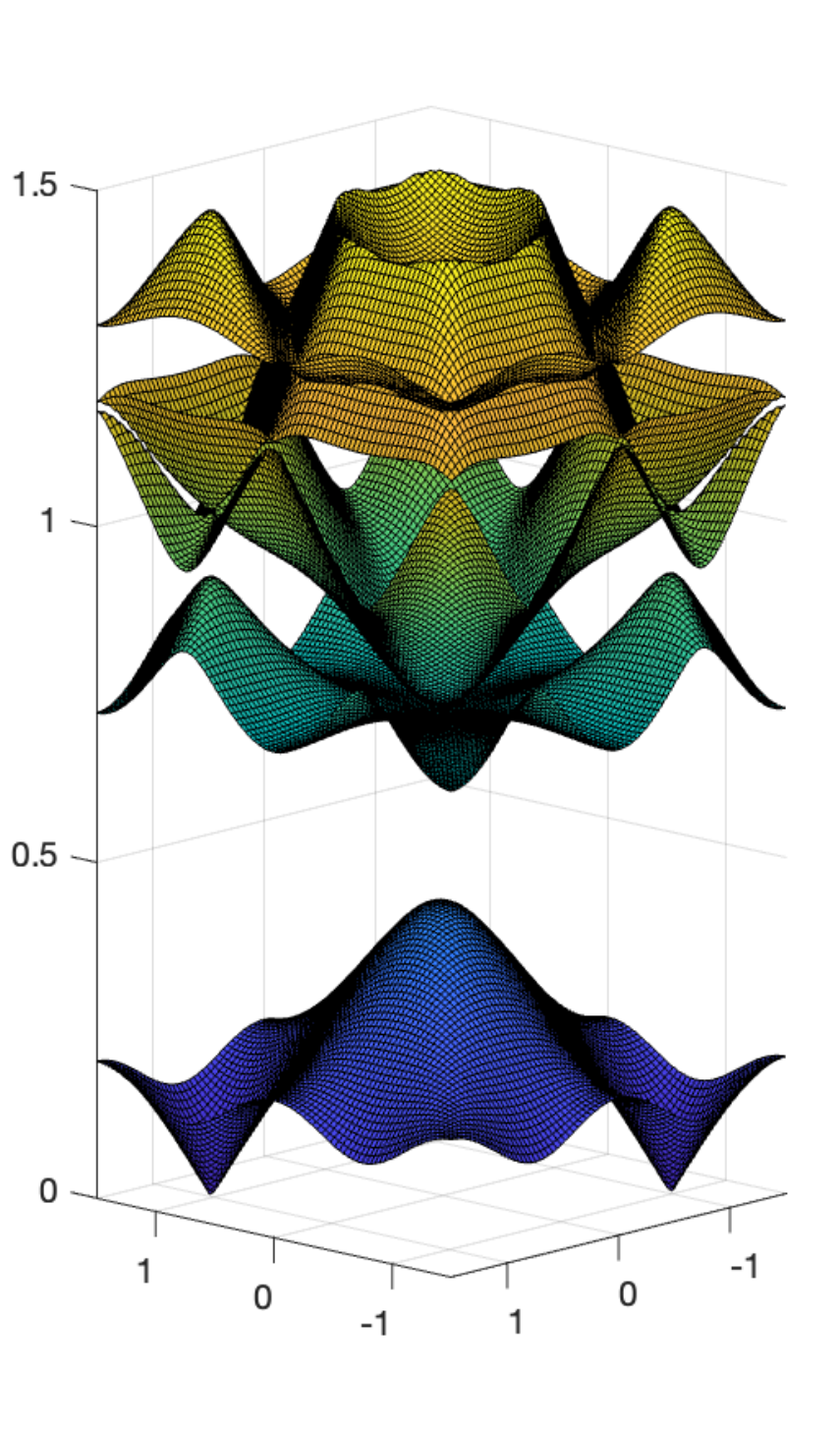}
\caption{\label{f:bands} 
The bands are the functions $ k \mapsto E_j ( \alpha, k )$ where 
$ E_j $ are defined in \eqref{eq:eigs}. 
On the left 
the plot the first 45 bands for $ \alpha = 0.3 $. defined using the boundary condition 
$ u ( z + \gamma ) = e^{i \langle \gamma, k \rangle} u( z ) $, 
$ \gamma \in \Gamma $, $ k \in \mathbb C/ \Gamma^* $,
corresponding to the lattice of exact periodicity of $ D ( \alpha ) $, 
in the convention of \cite{beta} (see Appendix B). 
The fundamental cell of 
$ \Gamma^* $,  parametrized by 
$ ( k_1, k_2 ) \mapsto k = ( \omega^2 k_1 - \omega k_2 ) /\sqrt 3$, 
$ |k_j| < \frac12 $. On the right the plot of $ k \mapsto  
E_j ( 0.3, k ) $, defined using the boundary condition \eqref{eq:FL_ev}
for $ 1 \leq j \leq 5 $, where $ k $ in the fundamental cell of 
$ 3 \Gamma^* $,  parametrized by 
$ ( k_1, k_2 ) $
$ |k_j| < \frac32 $. 
A movie version of the picture on the right can be found at
\url{https://math.berkeley.edu/~zworski/chiral_bands.mp4}. It is interesting to compare this 
to the case of the full Bistritzer--MacDonald model \cite{BM11} 
\url{https://math.berkeley.edu/~zworski/BM_bands.mp4} where, in the notation of
\cite[(1)]{suppl} we put $ w_1 = \alpha$, $ w_0 = 0.7 \alpha $ and $ \varphi = 0 $.
Remarkably, the low magic $ \alpha$'s of the chiral model seem to provide a good
approximation for the nearly flat bands of the Bistritzer--MacDonald model.
For completeness, the bands for the anti-chiral model $ w_1 = 0 $, $ w_0 = \alpha $, 
$ \varphi = 0 $,  can
be found at \url{https://math.berkeley.edu/~zworski/antichiral_bands.mp4}. As shown 
in \cite{suppl} there are no exact flat bands in that case.}
\end{figure}

\subsection{Symmetries revisited} 

We already recalled that $ \mathscr L_\gamma $ defined in \eqref{eq:FL_ev} 
commutes with $ D ( \alpha ) $ and (extended diagonally) with $ H( \alpha ) $. 
The rotation 
\[   \Omega u ( z ) := u ( \omega  z) , \ \  u \in \mathscr S' ( \mathbb C; \mathbb C^2 ) , \]
satisfies 
\[  \Omega D ( \alpha ) = \omega D ( \alpha ) \Omega , \]
and produces a commuting action on $ H ( \alpha ) $ as follows
\begin{equation}
\label{eq:defC}  \mathscr C H ( \alpha ) = H( \alpha ) \mathscr C , \ \ \ \mathscr C := \begin{pmatrix}
\Omega &  \ 0 \\
 0 & \bar \omega \Omega \end{pmatrix} : L_{\rm{loc}} ^2 ( \mathbb C ; \mathbb C^4 ) \to L^2_{\rm{loc}} ( \mathbb C ; \mathbb C^4 ) . \end{equation}
We then have
\[  \mathscr L_\gamma \Omega = \Omega \mathscr L_{\omega \gamma} , \ \ \
\mathscr L_\gamma \mathscr C = \mathscr C \mathscr L_{\omega \gamma}, 
\ \ \  \mathscr C \mathscr L_\gamma  =  \mathscr L_{\bar \omega \gamma} \mathscr C.
\]
The chiral symmetry is given by 
\begin{equation}
\label{eq:defW}
\begin{gathered}   H (\alpha) = - \mathscr W H (\alpha) \mathscr W, \ \ \ \mathscr W := \begin{pmatrix}
1 & 0 \\
0 & -1 \end{pmatrix} : \mathbb C^n \times \mathbb C^n \to  \mathbb C^n \times \mathbb C^n , 
\\  
\ \ \ \mathscr W  \mathscr C  = 
 \mathscr C  \mathscr W , 
 \  
 \ \ \mathscr L_{\gamma } \mathscr W = \mathscr W \mathscr L_{\gamma }.
\end{gathered}
\end{equation}

We follow \cite[\S 2.1]{beta} combine the $\Lambda $ and $ \mathbb Z_3 $ actions into a group of unitary action which
commute with $ H ( \alpha ) $:
\begin{equation}
\label{eq:defG}  
\begin{gathered}  G := \Lambda \rtimes \ZZ_3 , \ \ 
  \ZZ_3 \ni \ell : \gamma \to \bar \omega^\ell  \gamma , \ \ \ ( \gamma , \ell ) \cdot ( \gamma' , \ell' ) = 
( \gamma + \bar \omega^\ell \gamma' , \ell + \ell' ) ,
\\  ( \gamma, \ell ) \cdot u = 
\mathscr L_{\gamma } \mathscr C^\ell u, \ \ \ u \in L^2_{\rm{loc}} ( \mathbb C ; \mathbb C^4 ) . 
\end{gathered}
\end{equation}
By taking a quotient by $ 3 \Lambda $ we obtain a finite group acting 
unitarily on $ L^2 ( \CC/ 3\Lambda ) $ and commuting with $ H ( \alpha ) $:
\begin{equation}
\label{eq:defG3}
G_3 :=  G/3 \Lambda  = \Lambda /3\Lambda \rtimes \ZZ_3 \simeq \ZZ_3^2 \rtimes \ZZ_3. 
\end{equation}
In addition to the spaces $ L^2_k $ defined in \eqref{eq:FL_ev}, we introduce 
\begin{equation}
\label{eq:defLpk} 
\begin{gathered} 
L^2_{ k,p} ( \mathbb C/\Lambda ; \mathbb C^4 ) := 
\{ u \in L^2_{\rm{loc} } ( \mathbb C ; \mathbb C^4 ) : 
\mathscr L_\gamma \mathscr C^\ell  u = e^{ i \langle k , \gamma \rangle } 
\bar  \omega^{\ell p} u \}, \\  H_{k,p}^s := L^2_{k,p} \cap H^s_{\rm{loc}}, \ \ \
k \in (\tfrac13 \Lambda^*)/\Lambda^* \simeq \mathbb Z^2_3, \ \  p \in \mathbb  Z_3 . 
 \end{gathered}
 \end{equation}
This streamlines the notation of \cite{beta} and concentrates on the most
relevant representations of $ G_3 $. We use the same notation for $ \mathbb C^2 $
valued or scalar functions with $ \mathscr C $ replaced by $ \Omega$. 

We have the following orthogonal decompositions 
\begin{equation}
\label{eq:ortho} 
\begin{gathered}
 L^2 ( \mathbb C/ 3 \Lambda ) = \bigoplus_{ k \in \frac13 \Lambda^*/\Lambda^* }
 L^2_k ( \mathbb C/\Lambda ) , \\
 L^2_k ( \mathbb C/\Lambda ) = \bigoplus_{ p \in \mathbb Z_3 } 
 L^2_{k,p}  ( \mathbb C/\Lambda ), \ \ \ 
 k \in \mathcal K/\Lambda^* , 
 \end{gathered}
 \end{equation}
 where
 \begin{equation}
 \label{eq:defK} 
 \  \mathcal K := \{ k \in \mathbb C: \omega k
 \equiv k \!\! \mod \Lambda^* \} =  \{ K, -K, 0 \} + \Lambda^* . 
 \end{equation}

\noindent
{\bf Remark.} Decompositions \eqref{eq:ortho} do not provide a decomposition 
of $ L^2 ( \mathbb C/3 \Lambda ) $ into representations of $ G_3 $ given by 
\eqref{eq:defG3}. In addition to $ L^2_{k,p} $, $ k \in \mathcal K $ and 
$ p \in\mathbb Z^3 $, we also have two irreducible representations of dimension
three -- see \cite[\S 2.2]{beta}. These representations appear 
in $ \ker_{ L^2 ( \mathbb C/ 3 \Lambda, \mathbb C^2 )} D ( \alpha ) $
when $ \alpha \in \mathcal A $ is simple. Since this observation does not play 
a role in our proofs we do not provide details.

W also recall from \cite[\S 1]{beta} the additional 
symmetry 
\begin{equation}
\label{eq:defsE} \mathscr E D ( \alpha ) \mathscr E^* = - D( \alpha )   , \ \ \    \mathscr E v ( z ) := J v ( - z ) , \ \   J : = \begin{pmatrix}  \ \ 0 & 1 \\ -1  &  0 \end{pmatrix} ,
\end{equation} 
noting that it plays a crucial role in \cite{waz}. We have
\begin{equation}
\label{eq:symmE}
\begin{gathered}    L^2_{{K,\ell} } ( \mathbb C/\Lambda, \mathbb C^2 ) \xrightarrow{ \mathscr E  } 
L^2_{{-K,\ell} } ( \mathbb C/\Lambda, \mathbb C^2 ) 
\xrightarrow{ \mathscr E  } 
L^2_{{K,\ell} } ( \mathbb C/\Lambda, \mathbb C^2 ) , \\
L^2_{{0,\ell} } ( \mathbb C/\Lambda, \mathbb C^2 ) \xrightarrow{ \mathscr E  } 
L^2_{{0,\ell} } ( \mathbb C/\Lambda, \mathbb C^2 ) . 
\end{gathered}
\end{equation}
Finally we recall the anti-linear symmetries
\begin{equation}
\label{eq:H2Q}  
\begin{gathered}
Q v ( z ) = \overline{ v ( - z ) } , \ \ \ 
\mathscr Q u ( z ) := 
\begin{pmatrix} 0 & Q \\ 
Q & 0 \end{pmatrix} u ( z ), \\
Q D( \alpha ) Q =  D (\alpha )^* , \ \ \ 
H(\alpha)\mathscr Q=\mathscr Q H (\alpha),
 \end{gathered}  \end{equation}
 and
 \begin{equation} 
\label{eq:mapQ}
\begin{split}
& Q : L^2_{ k, p } ( \mathbb C/\Lambda ; \mathbb C^2 ) \to 
 L^2_{ k, -p } ( \mathbb C/\Lambda ; \mathbb C^2 ) ,  \\
&  \mathscr Q :
 L^2_{ k, p } ( \mathbb C/\Lambda ; \mathbb C^4 ) \to 
 L^2_{ k, -p+1 } ( \mathbb C/\Lambda ; \mathbb C^4 ) ,   \end{split} \end{equation}
for  $ k \in \mathcal K$, $  p \in \mathbb Z_3 $. 
 For another useful antilinear symmetry see \cite[\S 3.5]{dynip}.

\subsection{Bloch--Floquet theory}
\label{s:BF}

In \cite{beta}, the band theory was based on lattice of periodicity of $ D ( \alpha) $
and $ H ( \alpha ) $ given by $ 3 \Lambda $ (see Appendix A a
translation of notations). That meant that Bloch eigenvalues were functions of $ k \in 
\mathbb C/ \frac 13\Lambda^*$, a small torus. 
Following the physics literature we now consider Bloch--Floquet theory
based on \eqref{eq:FL_ev}, using the commuting operators $ \mathscr L_\gamma $. 
The two approaches are equivalent but Figure \ref{f:bands} 
illustrates the advantages of the latter: the bands have a much cleaner structure
and eigenvalues are functions on a larger torus, $ \mathbb C / \Lambda^* $. 

We first recall that the eigenvalues in \eqref{eq:FL_ev} are the same as 
the eigenvalues of
\begin{equation} \label{eq:defHk} \begin{gathered} 
H_k ( \alpha )  : H^1_0 ( \mathbb C/\Lambda ; \mathbb C^4 ) 
\to  L^2_0 ( \mathbb C/\Lambda ; \mathbb C^4 ) , \\
( H_k ( \alpha )  - E_j ( \alpha, k ) )e_j ( \alpha, k ) = 0 , \ \ 
e_j ( \alpha ,  k ) \in H^1_0  ( \mathbb C/\Lambda ; \mathbb C^4 )  , 
\\
H_k ( \alpha )  := e^{-i \langle z, k \rangle } H ( \alpha ) 
e^{  i \langle z, k \rangle} = \begin{pmatrix} \ \ 0 & D ( \alpha )^* + \bar k \\
D( \alpha ) + k & \ \ 0 \end{pmatrix}
. \end{gathered}
\end{equation}
The eigenvalues of $ H_k ( \alpha ) $ on $ L^2_0 ( \mathbb C /\Lambda; \mathbb C^4 )$
(with the domain given by $ H^1_0 ( \mathbb C/\Lambda; \mathbb C^4 ) $ -- see \eqref{eq:FL_ev}) 
are given by \eqref{eq:eigs}. We note that
\begin{equation}
\label{eq:symeigs} 
\begin{gathered}
E_j ( \alpha, k + p ) = E_j ( \alpha, k ) , \  \ p \in \Lambda^* , \ \ \ 
E_j ( \alpha , \omega k ) = E_j ( \alpha, k ) , \ \ \ k \in \mathbb C . 
\end{gathered}
\end{equation}
The last property follows from checking that
$ \mathscr C H_{ \omega k } ( \alpha ) \mathscr C^* = H_k ( \alpha ) $, 
where $ \mathscr C $ was defined in \eqref{eq:defC}. This shows that $ k \mapsto E_j ( \alpha, k ) $
is either singular or critical at $ K , - K , 0 $ ($ K = 4 \pi /3 $ -- see \eqref{eq:propU} and 
the end of Section \ref{s:ccc}; that is also nicely seen in the animation \url{https://math.berkeley.edu/~zworski/KKmovie.mp4}.).

The key fact used in \cite{magic} and \cite{beta} 
is the existence of protected states. We recall it in the current convention:
\begin{prop}
\label{p:prote}
For every $ \alpha \in \mathbb C $ there exists $ u_{\pm K} ( \alpha ) \in  H^1_{0 } ( \mathbb C /\Lambda; \mathbb C^2 ) $ such that
$ \tau ( K ) u_K ( 0 ) = \mathbf e_1 $, $ \tau(-K ) u_{-K} ( 0 ) = \mathbf e_2 $, 
\begin{equation} \label{eq:deftau} 
\begin{gathered}
 ( D( \alpha ) \pm K ) u_{ \pm K} ( \alpha ) = 0 , \ \ \ 
\tau ( k ) v ( z ) := e^{ i \langle z, k \rangle } v ( z ) , 
\end{gathered} 
\end{equation}
where we note that $ \tau ( k ) : L^2_{p} \to L^2_{ p+k} $, $ p , k \in \mathbb C $. In addition, 
\begin{equation}
\label{eq:protau}   \tau( \pm K ) u_{\pm K } ( \alpha ) \in { 
\ker_{ H^1_{ \pm K,  0 } ( \mathbb C/\Lambda; \mathbb C^2 ) } D ( \alpha )} , \end{equation} 
and if 
$ \tau(\pm K)   u_{\pm K } ( \alpha, z ) = ( u_1^\pm ( \alpha, z ) , u_2^\pm ( \alpha , z ) )^t $ then 
\begin{equation}
\label{eq:defzS}  u_2^+ ( \alpha,  \pm z_S ) = u_1^- ( \alpha, \pm z_S ) = 0 , \ \ 
z_S := i / \sqrt 3, \ \ \ \omega z_S = z_S - ( 1 + \omega ) . \end{equation}
\end{prop}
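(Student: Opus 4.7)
The plan is to reduce everything to a homogeneous equation via the gauge transformation $\tau$ and then extract the existence and the zero structure from the symmetries. Concretely, using $\partial_{\bar z}(e^{-i\langle z,k\rangle}) = -\tfrac{ik}{2} e^{-i\langle z, k\rangle}$ one checks the intertwining $\tau(k)\,D(\alpha)\,\tau(-k) = D(\alpha) - k$, so setting $v_\pm := \tau(\pm K) u_{\pm K}$ converts $(D(\alpha)\pm K)u_{\pm K}=0$ on $H^1_0$ into the homogeneous equation $D(\alpha) v_\pm = 0$ on $H^1_{\pm K,0}(\mathbb{C}/\Lambda;\mathbb{C}^2)$. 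It therefore suffices to produce a nontrivial $v_+\in\ker_{H^1_{K,0}} D(\alpha)$ for every $\alpha$ (the case of $-K$ then follows immediately from the symmetry $\mathscr{E}$ via \eqref{eq:symmE}).

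For existence I would use a Fredholm index argument combined with the $G_3$-representation decomposition. The intertwining $\Omega D(\alpha) = \omega D(\alpha)\Omega$ implies $D(\alpha):L^2_{K,0}\to L^2_{K,1}$, a Fredholm operator depending holomorphically on $\alpha$, so its index is $\alpha$-independent and can be computed at $\alpha=0$, where $D(0) = 2D_{\bar z}\otimes I_2$. Writing $v_+ = (v_1,v_2)^T$, the constraints $\mathscr{L}_\gamma v_+ = e^{i\langle K,\gamma\rangle}v_+$ and $\Omega v_+ = v_+$ force $v_1$ to be a $\Lambda$-periodic, rotation-invariant holomorphic function (hence constant), and $v_2$ to be a holomorphic section of the flat degree-zero line bundle on $\mathbb C/\Lambda$ with multipliers $e^{2i\langle K,1\rangle} = e^{2i\langle K,\omega\rangle} = \omega$, which has trivial $H^0$. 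Thus $\ker D(0)|_{L^2_{K,0}}$ is one-dimensional and spanned by $\mathbf{e}_1$; a parallel check shows the cokernel is trivial (or alternatively one uses the antilinear $Q$ of \eqref{eq:H2Q} to pair the two), giving index $1$. Consequently $\ker D(\alpha)|_{L^2_{K,0}} \neq 0$ for every $\alpha$, and we may normalize the protected vector $v_+(\alpha)$ by the condition $v_+(\alpha,0) = \mathbf{e}_1$ (the first component cannot vanish at the fixed point $z=0$ of the rotation because, under $\Omega v_+ = v_+$, the value $v_+(0)$ lies in the scalar representation).

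The vanishing conditions \eqref{eq:defzS} come from evaluating the combined symmetry at the fixed points of $z\mapsto \omega z$ in $\mathbb{C}/\Lambda$. From \eqref{eq:stacking}, $\omega z_S \equiv z_S - (1+\omega)\mod\Lambda$, so combining $\Omega v_+ = v_+$ with the quasi-periodicity $v_2^+(z+\gamma) = e^{2i\langle K,\gamma\rangle} v_2^+(z)$ yields
\[
v_2^+(z_S) = v_2^+(\omega z_S) = e^{-2i\langle K, 1+\omega\rangle} v_2^+(z_S).
\]
A direct computation gives $\langle K, 1+\omega\rangle = K/2 = 2\pi/3$, so the multiplier is $e^{-4\pi i/3} = \omega\neq 1$, forcing $v_2^+(z_S) = 0$. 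The same computation at $-z_S$, using $\omega(-z_S)\equiv -z_S + (1+\omega)$, gives $v_2^+(-z_S)=0$. The statement for $u_1^-$ follows by transporting via $\mathscr{E}$ using \eqref{eq:defsE}.

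The main obstacle will be the index computation at $\alpha=0$; in particular, showing the cokernel vanishes requires pinning down the precise target representation of $\Omega$ and verifying that the sections of the "shifted" flat line bundle in $L^2_{K,1}$ do not accidentally contain an element orthogonal to the range. This is cleanest via the full decomposition of $L^2(\mathbb{C}/3\Lambda)$ under $G_3$ carried out in \cite[\S 2]{beta}. A secondary subtle point is that $\ker D(\alpha)$ can jump at magic $\alpha$'s, so one must phrase the "continuous choice of $v_+(\alpha)$" carefully—either via analytic Fredholm theory applied to the family, or simply by noting the index is $\geq 1$ and taking any normalized element of the (then possibly larger) kernel.
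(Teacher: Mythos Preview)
Your existence argument via the Fredholm index of $D(\alpha)$ is correct in spirit and genuinely different from the paper's route. The paper works instead with the self-adjoint $H(\alpha)$ on $L^2_{\pm K,p}(\mathbb C/\Lambda;\mathbb C^4)$ and uses the chiral symmetry $\mathscr W$ (which preserves each $L^2_{k,p}$ and anticommutes with $H$) to force the spectrum on each block to be symmetric about $0$; since the kernel is one-dimensional at $\alpha=0$ (spanned by $\mathbf e_1$ on $H^1_{K,0}$), continuity of eigenvalues keeps a zero eigenvalue for all $\alpha$. Your argument achieves the same nonvanishing of the kernel by homotopy invariance of the index rather than by spectral parity, at the cost of having to verify the cokernel explicitly at $\alpha=0$. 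Either route works, and they are of course related: the parity of $\dim\ker H(\alpha)$ on a chiral block is exactly the index of $D(\alpha)$ mod $2$.

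Two small corrections. First, the target representation: from $\Omega D(\alpha)=\omega D(\alpha)\Omega$ one gets $D(\alpha):L^2_{K,0}\to L^2_{K,2}$, not $L^2_{K,1}$ (if $\Omega u=u$ then $\Omega D(\alpha)u=\omega D(\alpha)u$, and $\bar\omega^{\,p}=\omega$ forces $p\equiv 2$). This does not affect your index computation, which still gives $1$ once you redo the cokernel check on $L^2_{K,2}$. Second, the normalization ``$v_+(\alpha,z=0)=\mathbf e_1$ for all $\alpha$'' is both unjustified and unnecessary: the condition $\Omega v_+=v_+$ is automatic at the fixed point $z=0$ and places no constraint on the vector $v_+(0)\in\mathbb C^2$, and the proposition only asks that $\tau(K)u_K|_{\alpha=0}=\mathbf e_1$ as a function of $z$, which is immediate from your identification $\ker D(0)|_{L^2_{K,0}}=\mathbb C\,\mathbf e_1$. (The continuous or holomorphic dependence on $\alpha$ is a separate statement, Proposition~\ref{p:protea}, and is not part of what you need here.) Your derivation of the vanishing \eqref{eq:defzS} coincides with the paper's.
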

\begin{proof}
We  decompose $ \ker_{H^1 ( \mathbb C / 3 \Lambda; \mathbb C^{4} )}
H ( \alpha) $ into representation of $ G_3 $ (see \eqref{eq:defG3}
and \cite[\S 2.2]{beta} for a review
of representations of $ G_3 $ -- we only use representation appearing in \eqref{eq:ortho}
so that is all that is needed here).
From \eqref{eq:defW} we see that the spectrum of $ H( \alpha ) $ restricted to representations of
$ G_3 $ is symmetric with respect to the origin. 
The kernel of $ H ( 0 ) $ on $ H^1 ( \mathbb C / 3 \Lambda; \mathbb C^{4} )$ is given 
by by the standard basis vectors in $ \mathbb C^4 $, $ \mathbf e_j $. They satisfy
\[  \mathbf e_1 \in H^1_{ {K, 0} }, \ \  \mathbf e_2 \in H^1_{{-K, 0 } }, \ \ 
\mathbf e_3 \in H^1_{{K, 1} } , \ \  \mathbf e_4 \in H^1_{{\blue{ - } K , 1 } }, \]
and all these spaces are mutually orthogonal. 
Since the spectrum of $ H ( \alpha ) |_{  L^2_{ {k,p} } } $ is even, 
continuity of eigenvalues shows that
$   \dim \ker_{ L^2_{\pm K, p } } H( \alpha ) \geq 1$, $ \alpha \in \mathbb C $, 
$ p = 0,1 $. 
Since $  \tau(\mp K) : \ker_{ H^1_{\pm K } } H( \alpha ) \to \ker_{ H^1_0 } ( H ( \alpha ) \pm  K ) $ 
this gives \eqref{eq:deftau} and \eqref{eq:protau}. 

For \eqref{eq:defzS} we give an argument in the case of $ u_K $: 
$ u_2 ( \pm z_S ) = u_2 ( \pm \omega z_S ) = u_2 ( \pm z_S \mp ( 1 + \omega ) ) $
and in view of $ ( u_1, u_2 )^t \in L^2_{ K } $, the right hand side is
equal to $  e^{ \mp 2 i \langle ( 1 + \omega ), K \rangle }  u_2 ( \pm z_S ) $. Since 
$ e^{ i \langle 1 + \omega, K \rangle } = e^{  \frac 43 i \pi \Re ( 1 + \omega ) } = 
\omega $, we see that $ u_2 ( \pm z_S ) = 0 $.
\end{proof}

As a consequence of Proposition \ref{p:prote} we have
\begin{equation}
\label{eq:defK0}
\forall \, \alpha \in \mathbb C  \ \ \Spec_{ L^2_0 ( \mathbb C /\Lambda; \mathbb C^2 )} D( \alpha ) \supset \mathcal K_0 := 
\{ K, -K \} + \Lambda^* . 
\end{equation}

\subsection{Spectral characterization of magic angles}
\label{s:specc}

In \cite{magic} magic angles were
computed by analyzing $ u_{\pm K } $ (see Proposition \ref{p:prote})
 and identifying $ \mathcal A $ with the 
zeros of the Wronskian, 
\begin{equation}
\label{eq:Wron}  v ( \alpha ) =  W ( \tau(K ) u_K ( \alpha ) , \tau(-K) u_{-K } ( \alpha ) )  , \ \ \
W ( v , w ) := \det ( u , v ) , \ \ u , v \in \mathbb C^2.  \end{equation}
The function $ \alpha \mapsto v ( \alpha ) \in \mathbb C  $ was also identified with the 
physical quantity called the {\em Fermi
velocity} \cite[(7),(8)]{magic}. That led 
to a rough 
(three digits) computation of the first five $ \alpha$'s \cite{magic}
and then 
a computer assisted rigorous proof of the
existence of the first magic $ \alpha $ \cite{lawa}. Proposition \ref{p:protea} below
shows that we can choose $ u_K ( \alpha ) $ so that $ v ( \alpha ) $ in an entire function.

The approach taken in \cite{suppl,beta} was different and was based on 
identifying magic $ \alpha $'s with reciprocals of eigenvalues of a family of
compact operators. Crucially, the eigenvalues are independent of the 
elements of the family and that lies behind Theorem \ref{t:1}. 
 We recall this
in a form generalizing \eqref{eq:defD}:
\begin{equation}
\label{eq:newU}
\begin{gathered} 
D_{V } ( \alpha ) := 2 D_{\bar z } + {\alpha} V ( z ) , \ \ \ 
V ( z ) = \begin{pmatrix} 0 & U_+ ( z ) \\
U_- ( z ) & 0 \end{pmatrix} , \ \ \  U_\pm ( \omega z ) = \omega U_\pm ( z ) , 
 \\
U_\pm ( z + \gamma ) = e^{ \pm i \langle \gamma, K \rangle}  U_\pm ( z ) , \ \ 
\gamma \in \Lambda . 
\end{gathered}
\end{equation}
We also define $ H ( \alpha ) $ and note that the results of the previous
sections apply without modification. We define the set $ \mathcal A := 
\mathcal A ( V ) $ by \eqref{eq:defA}. Since
\[   \begin{pmatrix} 1 & \ \ 0 \\ 0 & -1 \end{pmatrix} D_{V}
 ( \alpha ) \begin{pmatrix} 1 & \ \ 0 \\ 0 & -1 \end{pmatrix} = D_{V}
 ( - \alpha ) , \]
we still have the symmetry $ \mathcal A ( {V}
 ) = - \mathcal A ( {V}
 ) $. 

If in addition to \eqref{eq:newU} we also have 
\begin{equation}
\label{eq:Vreality}  \overline{ U_\pm (  \bar z ) } = - U_\pm ( {-}z )  \ \Longleftrightarrow \ 
V ( z ) = -  \overline{V({-}\bar z)} , 
\end{equation}
then we get
$  \widetilde \Gamma D_{V}
 ( \alpha ) \widetilde \Gamma = - D_{V}
 ( - \bar \alpha )$, 
$  \widetilde \Gamma {v} ( z ) := \overline{ {v} ( \bar z )} $, 
 and hence $ \mathcal A ( {V}
 ) = \overline{ \mathcal A ( {V}
) } $. 

The following result is a generalized formulation of \cite[Theorem 2]{beta}. To state it 
we define 
\begin{equation}
\label{eq:defres} 
\begin{gathered}  R ( k ) := ( 2 D_{\bar z } - k )^{-1} : L^2_p ( \mathbb C/\Lambda, \mathbb C^2 ) :
\longrightarrow L^2_p ( \mathbb C/\Lambda ; \mathbb C^2 ) ,  \ \ \ p \in \mathbb C , \\
k \notin \mathcal K_0 + p  , \ \ \  \mathcal K_0 := \{ K, - K \} + \Lambda^* . 
\end{gathered} 
\end{equation}
This follows from the fact that $ ( 2 D_{\bar z } - k )^{-1} : L^2_0 \to L^2_0 $ for $
k \notin \mathcal K_0 $. We then 
have $ \tau ( p ) : L^2_0 \to L^2_p $ and $ \tau(- p ) R ( k ) \tau  ( p ) =
R ( k - p ) $. 

\begin{prop}
\label{p:spec}
In the notation of \eqref{eq:newU} and \eqref{eq:defres} the following compact 
operators are well defined
\begin{equation}
\label{eq:defTk}  T_k := R ( k ) V :  L^2_p ( \mathbb C/\Lambda, \mathbb C^2 ) \to 
L^2_p ( \mathbb C/\Lambda, \mathbb C^2 ) , \ \ k \notin \mathcal K_0 + p . 
\end{equation}
Moreover, 
\begin{equation}
\label{eq:p2q}  
\Spec_{ L^2_p } ( T_k ) = \Spec_{ L^2_q } ( T_{k' } ) , \ \ 
k \notin \mathcal K_0 + p , \ \ k' \notin \mathcal K_0 + q , \end{equation}
\begin{equation}
\label{eq:spect}
\mathcal A ( V ) = \{ \alpha \in \mathbb C :  \alpha^{-1} \in \Spec_{ L^2_p }( T_k  ) \}, \ \ 
k \notin p + \mathcal K_0 , \ \ p \in \mathbb C,
\end{equation}
and
\begin{equation}
\label{eq:A2S}
\Spec D_V ( \alpha ) = \left\{ \begin{array}{ll} \mathcal K_0 , & \alpha \notin \mathcal A ( V ) \\
\mathbb C, & \alpha \in \mathcal A ( V ) , \end{array} \right. 
\end{equation}
with simple eigenvalues when $ \alpha \notin \mathcal A( V ) $. 
\end{prop}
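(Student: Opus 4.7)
The plan is to adapt \cite[Theorem~2]{beta} to the generalized potential of \eqref{eq:newU}. There are four assertions to establish: compactness of $T_k$, the spectral equivalence \eqref{eq:p2q}, the characterization \eqref{eq:spect}, and the dichotomy \eqref{eq:A2S} with simplicity outside $\mathcal{A}(V)$.

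Compactness is immediate: Fourier analysis on $L^2_p(\mathbb{C}/\Lambda)$ exhibits the eigenbasis $\{e^{i\langle z,q\rangle}\}_{q\in\mathcal{K}_0+p}$ of $2D_{\bar z}$ with eigenvalues $q$, whence $R(k)$ is well-defined for $k\notin\mathcal{K}_0+p$, gains one derivative, and is therefore compact on $L^2_p$ by Rellich's theorem on the compact quotient; boundedness of the matrix multiplier $V$ gives compactness of $T_k=R(k)V$. For \eqref{eq:p2q} I would reduce to a single Floquet sector using conjugation by $\tau(s)$: a short computation yields $\tau(s)^{-1}(2D_{\bar z})\tau(s)=2D_{\bar z}+s$, and $\tau(s)$ commutes with the multiplication operator $V$, so $\tau(s)^{-1}T_k\tau(s)=T_{k-s}$. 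Since $\tau(s)\colon L^2_p\to L^2_{p+s}$ is unitary, it suffices to prove that $\Spec_{L^2_0}(T_k)$ is independent of $k\in\mathbb{C}\setminus\mathcal{K}_0$.

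For the link with $\mathcal{A}(V)$ I would use the crucial factorization
\[
D_V(\alpha)-k=(2D_{\bar z}-k)\bigl(I+\alpha T_k\bigr)\qquad\text{on } L^2_0,\ k\notin\mathcal{K}_0,
\]
which, combined with invertibility of $2D_{\bar z}-k$, yields $k\in\Spec_{L^2_0}(D_V(\alpha))\Longleftrightarrow -\alpha^{-1}\in\Spec_{L^2_0}(T_k)$. The sign symmetry $\mathcal{A}(V)=-\mathcal{A}(V)$ coming from conjugation by $\operatorname{diag}(1,-1)$ then yields \eqref{eq:spect}, pending $k$-independence. The inclusion $\mathcal{K}_0\subset\Spec_{L^2_0}(D_V(\alpha))$ holds for every $\alpha$ by an extension of the protected-state argument of Proposition \ref{p:prote}, whose proof uses only the $G_3$-representation theory of \eqref{eq:defG3} and the chiral symmetry, both of which persist for \eqref{eq:newU}.

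The main obstacle is the dichotomy \eqref{eq:A2S}: one must show that if $\Spec_{L^2_0}(D_V(\alpha))$ contains some $k_0\notin\mathcal{K}_0$, then it equals all of $\mathbb{C}$. I would follow the Tarnopolsky--Kruchkov--Vishwanath theta-function construction: given a non-protected eigenvector $u\in L^2_0$ with $(D_V(\alpha)-k_0)u=0$, one multiplies $u$ by a ratio of Jacobi theta functions on $\mathbb{C}/\Lambda$ with prescribed quasi-periodicity, producing a family of eigenvectors realizing every $k\in\mathbb{C}$. This forces $\Spec_{L^2_0}(D_V(\alpha))=\mathbb{C}$, i.e.\ $\alpha\in\mathcal{A}(V)$, and simultaneously establishes the $k$-independence of $\Spec T_k$. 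Simplicity for $\alpha\notin\mathcal{A}(V)$ then follows because an additional kernel element in any $L^2_{\pm K,p}$ beyond the protected one would, by the same theta argument, force $\alpha\in\mathcal{A}(V)$. The delicate technical point is ensuring that the theta-multiplied function carries the correct Floquet labels and remains in $L^2$ on the torus.
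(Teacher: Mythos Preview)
Your reduction via $\tau(s)$-conjugation and the factorization $D_V(\alpha)-k=(2D_{\bar z}-k)(I+\alpha T_k)$ match the paper. The gap is in the dichotomy step. The theta-function construction you invoke requires a zero of the eigenfunction: multiplying $u$ by $F_{k-k_0}(z-z_0)$ yields an element of $\ker_{L^2_0}(D_V(\alpha)-k)$ only if $u(z_0)=0$, since $F_{k-k_0}(\,\cdot\,-z_0)$ has a simple pole at $z_0$ (see \eqref{eq:delta}). You have not explained why a non-protected eigenvector $u$ at some $k_0\notin\mathcal K_0$ must vanish somewhere; for $\mathbb C^2$-valued functions on a torus there is no topological obstruction to being nowhere zero. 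The ``delicate technical point'' is not the Floquet bookkeeping you flag but the existence of such a zero. The same issue afflicts your simplicity claim: an extra kernel element at $\pm K$ does not by itself exhibit a zero, and you would have to run a Wronskian argument carefully to extract one.

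The paper sidesteps theta functions here and proceeds by analytic Fredholm theory together with a connectedness argument in $\alpha$. Fix $k\notin\mathcal K_0$ and set $\mathcal A_k:=\{\alpha:-\alpha^{-1}\in\Spec_{L^2_0}T_k\}$. For $\alpha\notin\mathcal A_k$ one writes
\[
D_V(\alpha)-z=(2D_{\bar z}-k)\bigl(I+\alpha T_k+R(k)(k-z)\bigr),
\]
a holomorphic family (in $z$) of identity-plus-compact operators, invertible at $z=k$; hence $\Spec_{L^2_0}D_V(\alpha)$ is discrete. One then shows that
\[
\Omega:=\bigl\{\alpha\in\mathbb C\setminus\mathcal A_k:\Spec_{L^2_0}D_V(\alpha)=\mathcal K_0\ \text{with simple eigenvalues}\bigr\}
\]
is open and closed in the connected set $\mathbb C\setminus\mathcal A_k$ and contains $\alpha=0$, so $\Omega=\mathbb C\setminus\mathcal A_k$. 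This delivers \eqref{eq:A2S} and the simplicity statement simultaneously, without ever producing a zero of an eigenfunction. Finally, $k$-independence of $\mathcal A_k$ follows: $-\alpha^{-1}\in\Spec_{L^2_0}T_k$ forces $k\in\Spec_{L^2_0}D_V(\alpha)$ with $k\notin\mathcal K_0$, so the spectrum cannot be discrete (else it would equal $\mathcal K_0$), hence it is all of $\mathbb C$, and in particular $-\alpha^{-1}\in\Spec_{L^2_0}T_{k'}$ for every $k'\notin\mathcal K_0$.
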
 
\begin{proof}
We first note that the definition of $ \mathscr L_\gamma $ 
in \eqref{eq:FL_ev} and \eqref{eq:newU} show that $ \mathscr L_\gamma V = 
 V \mathscr L_\gamma $ and hence $ V : L^2_p \to L^2_p $. This and \eqref{eq:defres}
give the mapping property \eqref{eq:defTk}.

We first consider  \eqref{eq:p2q}  and \eqref{eq:spect} for $ q= p = 0 $ and $ k \notin \mathcal K_0 $ 
(this spectral characterization was proved in \cite{beta} but we include a streamlined proof using the 
current convention). For a fixed $ k \notin \mathcal K_0 $, 
we define a discrete set $ \mathcal A_k := \{ \alpha \in \mathbb C  : - \alpha^{-1} \in \Spec_{L^2_0 }  T_k \} $. 
For $ \alpha \notin \mathcal A_k $ the spectrum of $ D ( \alpha ) $ is then discrete since
\begin{equation}
\label{eq:defKk} D ( \alpha ) - z = ( D (0 ) - k ) ( I + K ( z ) ) , \ \ \ K ( z) :=  \alpha T_k + R ( k ) ( k - z )  , 
\end{equation}
and $ z \mapsto   K ( z) $ is a holomorphic family of compact 
operators with $ I + K ( k ) $ invertible (since $ - \alpha^{-1} \notin \Spec_{L^2_0 } ( T_k ) $).
But that implies (see for instance \cite[Theorem C.8]{res}) that $ (D ( \alpha ) - z )^{-1} = ( I+ K ( z) )^{-1} R ( k ) $ is a meromorphic family of operators, and in particular, the spectrum of $ D ( \alpha ) $ is
discrete. 

We now put 
\[  \Omega := \{ \alpha \in \mathbb C \setminus \mathcal A_k : \Spec_{L^2_0} ( D ( \alpha ) ) = 
\mathcal K_0 \text{ with simple eigenvalues} \} , \]
noting that $ 0 \in \Omega $. We  claim that $ \Omega $ is open and closed in the
relative topology of the connected topological space $ \mathbb C \setminus \mathcal A_k $.
That will imply that $ \Omega = \mathbb C \setminus \mathcal A_k $. 

To prove the claim, we note that for 
$ \alpha_0 \in \Omega $ there exists a neighbourhood of $ \alpha_0 $, $U $, 
such that for $ \alpha \in U $, the spectrum of $ D ( \alpha ) $ is discrete and changes
continuously with $ \alpha $. From Proposition \ref{p:prote} we also know that
$ \Spec_{L^2_0 } ( D ( \alpha ) ) \supset \mathcal K_0 $. But as it is equal to $ \mathcal K_0 $
at $ \alpha = \alpha_0 $ it has be equal to $ \mathcal K_0 $ in $ U $. To see that $ \Omega $
is closed, assume that $ \{ \alpha_j \}_{j=1}^\infty 
\in \Omega $, $ \alpha_j \to \alpha_0 \in \mathbb C \setminus 
\mathcal A_k $. But this means that there exists an open neighbourhood of $ \alpha_0 $,  $ U $, 
such that for $ \alpha \in U $ the spectrum is discrete and hence depends continuously
on $ \alpha $. Since $ \Spec_{L^2_0 } ( D ( \alpha_j ) ) = \mathcal K_0 $, 
we conclude that $ \Spec_{L^2_0 } D ( \alpha_0 ) = \mathcal K_0  $ (all with agreement of 
simple multiplicities),
that is, $ \alpha_0 \in \Omega $. 

It remains to show that $ \mathcal A_k $ is independent of $ k $. For that we note that
 $ -\alpha^{-1}  \in \Spec_{L^2_0 } T_k $ is equivalent to $ k \in \Spec_{L^2_0 } ( D ( \alpha ) ) $
(see $ K ( k ) $ in \eqref{eq:defKk}). Since $ k \notin \mathcal K_0 $ the spectrum cannot
be discrete, as then it would be equal to $ \mathcal K_0 $. Hence, it has to be equal to 
$ \mathbb  C $ (if there were any points at which $ D ( \alpha ) - z $ were invertible then 
the compactness of the inverse and an argument similar to that after \eqref{eq:defKk} 
would show the spectrum is discrete). But that means that any $ k' \in \Spec_{L^2_0 } ( D ( \alpha ) )$ 
and the equivalence above shows that $ -\alpha^{-1} \in \Spec_{L^2_0} ( T_{k'} ) $.

 In particular, 
$ \Spec_{ L^2_0 } ( T_{k_1} ) = \Spec_{ L^2_0 } ( T_{k_2} )  $ for any $ k_j \notin \mathcal K_0 $.
To establish \eqref{eq:p2q} we can take $ q = 0 $ and note that that $ \tau ( -p ) : L^2_p \to L^2_0 $ (see \eqref{eq:deftau}) $ \tau(p) T_{k_1} \tau( p)^{-1} = T_{k_1 + p } : L_p^2 \to {L^2_p} $, 
$ k_1 \notin \mathcal K_0 $ (and hence $ k_1 + p \notin \mathcal K_0+ p $). Hence
to see \eqref{eq:p2q} with $ q = 0 $ we take $ k_2 = k' $ and $ k_1 = k - p $. 
\end{proof}

\begin{proof}[Proof of Theorem \ref{t:1}]
This is immediate from from \eqref{eq:A2S}: if $ E_1 ( \alpha, k ) = 0 $ 
for $ k \notin \mathcal K_0 $, $ 0 \in \Spec_{ L^2_0 } H_k ( \alpha ) $,  
then  $ \ker_{ H^1_0}  (  D ( \alpha ) + k ) $ or 
$ \ker_{ H^1_0 } ( (D ( \alpha )+ k)^*) $ are non zero. Since, $ D ( \alpha ) + k $ is
a Fredholm operator of index zero (see \cite[Proposition 2.3]{beta}) the two 
statements are equivalent. But $ \ker ( D ( \alpha ) + k) \neq \{ 0 \} $, $ k \notin \mathcal K_0 $,
implies in view of \eqref{eq:A2S} that $ E_1 ( \alpha, k ) \equiv 0 $, $ k \in \mathbb C $.
\end{proof}

Combining Propositions \ref{p:prote} and \ref{p:spec} we obtain a stronger statement
about protected states:
\begin{prop}
\label{p:protea}
Suppose that $ D ( \alpha ) $ is given by \eqref{eq:defD}, with $ U $ satisfying
\eqref{eq:propU}. Then,
for $ \alpha \notin \mathcal A $, $ u_{\pm K } ( \alpha ) $ are unique up to 
multiplicative constants, and we can choose
\begin{equation}
\label{eq:+2-}
u_{-K} ( \alpha ) = \tau ( K ) \mathscr E \tau ( K ) u_{ K } ( \alpha) .
\end{equation} 
Moreover, $ \alpha \mapsto u_{\pm K } ( \alpha ) $ can be chosen to be holomorphic
as a function of $ \alpha \in \mathbb C $ with values in $ H^1_0 ( \mathbb C/\Lambda; 
\mathbb C^2 ) $.
\end{prop}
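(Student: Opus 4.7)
The proof naturally splits into three parts: uniqueness (for $\alpha\notin\mathcal{A}$), the symmetry identity \eqref{eq:+2-}, and the holomorphic extension to all of $\mathbb{C}$.

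For uniqueness I would invoke Proposition \ref{p:spec}: the characterization \eqref{eq:A2S} gives $\Spec_{L^2_0}D(\alpha)=\mathcal{K}_0$ with simple eigenvalues whenever $\alpha\notin\mathcal{A}$, so $\dim\ker_{H^1_0}(D(\alpha)\pm K)=1$ and $u_{\pm K}(\alpha)$ is pinned down up to a multiplicative constant. For the symmetry identity I would compute directly using three ingredients: the conjugation $D(\alpha)\tau(k)=\tau(k)(D(\alpha)+k)$, which is a rewriting of \eqref{eq:defHk}; the intertwining $\mathscr{E}D(\alpha)=-D(\alpha)\mathscr{E}$ from \eqref{eq:defsE}; and the mapping property \eqref{eq:symmE}. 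Starting from $(D(\alpha)+K)u_K(\alpha)=0$ and applying $\tau(K)$, $\mathscr{E}$, $\tau(K)$ in turn while tracking the momentum labels ($L^2_0\to L^2_{K}\to L^2_{-K}\to L^2_0$), one obtains $(D(\alpha)-K)\,\tau(K)\mathscr{E}\tau(K)u_K(\alpha)=0$ with the result in $H^1_0$. By the one-dimensionality of $\ker_{H^1_0}(D(\alpha)-K)$ the expression is a scalar multiple of $u_{-K}(\alpha)$, and rescaling $u_K$ makes \eqref{eq:+2-} an identity (the normalization at $\alpha=0$ is verified against $\mathscr{E}\mathbf{e}_1=J\mathbf{e}_1=-\mathbf{e}_2$).

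The core of the argument is the holomorphy on $\mathbb{C}$. The family $\alpha\mapsto D(\alpha)+K:H^1_0\to L^2_0$ is affine in $\alpha$, hence holomorphic, and consists of Fredholm operators of index zero (by \cite[Prop.~2.3]{beta}, as used in the proof of Theorem \ref{t:1}); by Proposition \ref{p:prote} the kernel is nonzero at every $\alpha$. I would construct a nonvanishing holomorphic section locally by a Grushin problem, then globalize. At $\alpha_0\in\mathbb{C}$, with $N:=\dim\ker(D(\alpha_0)+K)$, pick $R_+:H^1_0\to\mathbb{C}^N$ restricting to an isomorphism on this kernel, and $R_-:\mathbb{C}^N\to L^2_0$ whose image complements $\mathrm{range}(D(\alpha_0)+K)$, and set
\begin{equation*}
\mathcal{P}(\alpha):=\begin{pmatrix} D(\alpha)+K & R_- \\ R_+ & 0\end{pmatrix}:H^1_0\oplus\mathbb{C}^N\to L^2_0\oplus\mathbb{C}^N.
\end{equation*}
Then $\mathcal{P}(\alpha_0)$ is invertible; the $N\times N$ block $E_{-+}(\alpha)$ of $\mathcal{P}(\alpha)^{-1}$ vanishes at $\alpha_0$, and persistence of $\ker(D(\alpha)+K)\neq 0$ forces $\det E_{-+}(\alpha)\equiv 0$ in a neighborhood of $\alpha_0$ (otherwise a standard Schur-complement calculation would force the kernel to be trivial for that $\alpha$). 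Smith normal form for matrices over the DVR of germs of one-variable holomorphic functions then delivers a holomorphic nonvanishing null vector $v(\alpha)\in\ker E_{-+}(\alpha)$, so that $u_K(\alpha):=E_+(\alpha)v(\alpha)$ is a local nonvanishing holomorphic section of $\ker(D(\alpha)+K)$.

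To globalize, I would use that $\ker(D(\alpha)+K)$ is one-dimensional on $\mathbb{C}\setminus\mathcal{A}$, so any two local Grushin sections differ there by a holomorphic scalar on their overlap; analytic continuation from the normalization $u_K(0)$ along paths in the simply connected plane $\mathbb{C}$ produces a single-valued global holomorphic family, and any isolated zeros $\alpha_j$ of the resulting section are removed by dividing out the factors $(\alpha-\alpha_j)^{m_j}$ (the quotient still lies in $\ker(D(\alpha)+K)$ by continuity). Finally $u_{-K}(\alpha)$ is defined through \eqref{eq:+2-}, inheriting holomorphy from $u_K$. The main obstacle is precisely this extension across magic angles: a spectral-projection formula for $\ker(D(\alpha)+K)$ blows up at $\alpha\in\mathcal{A}$ because the kernel dimension jumps there, which is exactly why the Grushin plus Smith-normal-form step is needed, converting the analytic obstruction into an elementary algebraic one for a matrix-valued holomorphic function.
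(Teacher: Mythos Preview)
Your treatment of uniqueness and of the identity \eqref{eq:+2-} is correct and essentially the paper's argument (the paper just states that the right-hand side satisfies the defining properties; you spell out the intertwining relations).

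The holomorphy argument is where you and the paper part ways. For the \emph{local} construction near a magic $\alpha_0$ the paper does not use a Grushin problem. It introduces the auxiliary family
\[
\widetilde H(\alpha_0,\zeta)=\begin{pmatrix}0 & D(\alpha_0+\bar\zeta)^*\\ D(\alpha_0+\zeta) & 0\end{pmatrix},
\]
which is holomorphic in $\zeta$ and self-adjoint for real $\zeta$, and applies Rellich's theorem \cite[VII, Thm.~3.9]{kato} to obtain an analytic element of the kernel on $H^1_{K,0}$. Your Grushin plus Smith-normal-form route is also correct and has the advantage of not invoking self-adjointness: it would apply to any holomorphic Fredholm family with persistent nontrivial kernel.

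Your \emph{globalization}, however, is not quite an argument. ``Analytic continuation along paths in the simply connected plane $\mathbb C$'' does not by itself trivialize a holomorphic line bundle: simple connectedness gives $H^2(\mathbb C,\mathbb Z)=0$, but you also need $H^1(\mathbb C,\mathcal O)=0$ (solvability of $\bar\partial$) to conclude $H^1(\mathbb C,\mathcal O^*)=0$. There is no monodromy interpretation that replaces this. The paper does exactly this step, concretely: it patches the local sections into a \emph{smooth} nonvanishing family $\widetilde u_K$ via a partition of unity, observes from one-dimensionality of the kernel on $\mathbb C\setminus\mathcal A$ (and continuity across $\mathcal A$) that
\[
\partial_{\bar\alpha}\widetilde u_K(\alpha)=f(\alpha)\,\widetilde u_K(\alpha),\qquad f=\frac{\langle \partial_{\bar\alpha}\widetilde u_K,\widetilde u_K\rangle}{\|\widetilde u_K\|^2}\in C^\infty(\mathbb C),
\]
solves $\partial_{\bar\alpha}F=f$ on $\mathbb C$, and sets $u_K=e^{-F}\widetilde u_K$. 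This is the $\bar\partial$-argument behind the triviality of holomorphic line bundles over $\mathbb C$, and it is what your globalization step needs. Once you have a global nonvanishing section, your ``divide out zeros'' step is superfluous.
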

\begin{proof} 
Since for $ \alpha \notin \mathcal A $, the eigenvalues of $ D( \alpha ) $ are simple
and the right hand side of \eqref{eq:+2-} has all the properties of $ u_{-K} ( \alpha )  $ in 
Proposition \ref{p:prote}, we can choose it to be $ u_{-K } ( \alpha ) $.

To find a holomorphic family $ \alpha \mapsto u_K ( \alpha )$ we proceed as follows. 
We first note that for $ \alpha_0  \notin \mathcal A $,  $( \tau ( K ) u_K ( \alpha_0 + \zeta ) , 0 )^t $
spans $ \ker  \widetilde H( \alpha_0 , \zeta  )|_{ H^1_{ K, 0 } } $, 
\[ \widetilde H ( \alpha_0 , \zeta ) := \begin{pmatrix} 0 & D ( \alpha_0 + \bar \zeta )^* \\
D ( \alpha_0 + \zeta ) & 0 \end{pmatrix} . \]
Since $ \zeta \mapsto H ( \alpha_0 , \zeta ) $ is a holomorphic family of operators
it follows that we can choose $ \zeta \mapsto  \tau ( K ) u_K ( \alpha_0 + \zeta ) $ holomorphic in 
$ \zeta $ for $ |\zeta | < \delta $ (note that $ ( \tau ( K ) u_K ( \alpha_0 + \zeta ) , 0 )^t 
\in \ker H ( \alpha_0 + \zeta ) |_{ H^1_{ K, 0 } } $). When $ \alpha_0 \in \mathcal A $, 
 $ \zeta \mapsto H ( \alpha_0, \zeta )$ is a holomorphic family of operators, 
which is self-adjoint for $ \zeta \in \mathbb R $. Rellich's theorem 
\cite[Chapter VII, Theorem 3.9]{kato}, then shows that an element of the kernel of
$ H( \alpha_0 , \zeta  )|_{ H^1_{ K, 0 } } $ can be chosen to be holomorphic near $ 
\zeta = 0 $. In view of simplicity for $ 0 < |\zeta| < \delta $, it has to 
coincide with a choice of $ \tau( K ) u_K ( \alpha_0 + \zeta) $. 

The local constructions above and a partition of unity on $ \mathbb C $ show
that we can choose $ \tau ( K ) \widetilde u_K \in C^\infty ( \mathbb C ; H^1_{K,0} )$
and it remains to modify it so that it becomes holomorphic. We have
\[  0 =  \partial_{\bar \alpha } ( D ( \alpha ) \tau( K ) u_K ( \alpha ) ) = 
D ( \alpha ) ( \tau ( K ) \partial_{\bar \alpha } u_K ( \alpha ) ) . \]
For $ \alpha \notin \mathcal A $ the kernel on $ H^1_K $ is one dimensional 
and hence 
\begin{equation}
\label{eq:falfa} \partial_{\bar \alpha } \widetilde u_K ( \alpha ) =  f( \alpha ) \widetilde 
u_K ( \alpha ) , \ \ \alpha \notin \mathcal A , \ \ \ f ( \alpha ) = 
\frac{ \langle  \partial_{\bar \alpha }\widetilde u_K ( \alpha ), 
  \widetilde u_K ( \alpha )\rangle}{ \| u_K ( \alpha ) \|^2}  . \end{equation}
In the formula for $ f ( \alpha ) $, the right hand side is smooth in $ \alpha $ and that 
shows that the first formula in \eqref{eq:falfa} holds for all $ \alpha \in \mathbb C $. 
The equation $ \partial_{\bar \alpha} F ( \alpha )  = f ( \alpha ) $ 
(see for instance \cite[Theorem 4.4.6]{H1} applied with $ P = \partial_{\bar \alpha } $
and $ X = \mathbb C $)
can be solved 
with $ F \in C^\infty ( \mathbb C ) $. This shows that 
$  u_K ( \alpha ) = \exp ( - F ( \alpha ) ) \widetilde u_K ( \alpha ) $ is indeed 
holomorphic.
\end{proof}

\section{Theta function argument revisited}
\label{s:theta}

In \cite{magic} a theta function argument was used to explain the formation of
flat bands and in \cite{beta} that approach was shown to be equivalent to the 
spectral characterisation  in Proposition \ref{p:spec}. We review it here from the point
of view of \S \ref{s:specH} and \cite{led}, where the holomorphic dependence of
eigenvectors on the Floquet parameter (Bloch pseudo-momentum) $  k $ was stressed.

\subsection{Theta functions}
\label{s:theta} 
To simplify notation we put
$ \theta ( z ) :=  \theta_{1} ( z | \omega ) := - \theta_{\frac12,\frac12} ( z | \omega ) $, 
and recall that 
\begin{equation}
\label{eq:theta}
\begin{gathered} 
\theta ( z )
= - \sum_{ n \in \mathbb Z } \exp ( \pi i (n+\tfrac12) ^2 \omega+ 2 \pi i ( n + \tfrac12 ) (z + \tfrac
12 )  ) , \ \ \ \theta ( - z ) = - \theta ( z ) \\
\theta  ( z + m ) = (-1)^m \theta   ( z ) , \ \ \theta  ( z + n \omega ) = 
(-1)^n e^{ - \pi i n^2 \omega - 2 \pi i z  n } \theta   ( z ) ,
\end{gathered}
\end{equation}
and that $ \theta $ vanishing simply on $ \Lambda $  and nowhere else 
see \cite{tata}. 

We now define 
\begin{equation}
\label{eq:defFk}  F_k ( z ) = e^{\frac i 2   (  z -  \bar z ) k } \frac{ \theta ( z - z ( k ) ) }{
\theta( z) } ,  \ \ \ z (k):=  \frac{ \sqrt 3 k }{ 4 \pi i } , \ \  z:  \Lambda^* \to \Lambda . 
\end{equation}
Then, using \eqref{eq:theta} and differentiating in the sense of distributions, 
\begin{equation}
\label{eq:propFk}
\begin{gathered}F_k ( z + m + n \omega ) = e^{  -  n k \Im \omega } 
e^{ 2 \pi i n z ( k ) } F_k ( z ) = F_k ( z ) , \\
  ( 2 D_{\bar z } + k ) F_k ( z ) 
 = c(k)   \delta_0 ( z ) , \ \ c(k) :=  2 \pi i {\theta ( z ( k ) ) }/{ \theta' ( 0 ) } .
\end{gathered} \end{equation}
This follows from the fact that $ 1/(\pi z ) $ is a fundamental solution 
of $ \partial_{\bar z } $ --  see for instance \cite[(3.1.12)]{H1}.  In other, 
words, for $ k \notin \Lambda^* $, $ F_k $ gives the Green kernel of 
$2 D_{\bar z }  + k $ on the torus $ \mathbb C/\Lambda $:
\[   ( D_{\bar z } + k )^{-1} f  ( z ) = c(k)^{-1} \int_{ \mathbb C/\Lambda } 
F_k ( z - z' ) f ( z' ) dm ( z ) ,\]
$  d m ( z ) = dx dy $,  $  z= x+ iy$. 
For future use we record some properties of $ F_k $:
\begin{lemm}
\label{l:van1}
\label{eq:FkFp}
 For $  u \in C^\infty ( \mathbb C )$, we have, in the sense of distributions, 
 and in the notation of \eqref{eq:propFk}, 
\begin{equation}
\label{eq:delta1} 
\begin{split}  &  ( 2  D_{\bar z } + q - \ell ) \left(\frac{ F_q ( z - z_0 )}{F_\ell ( z - z_0 ) } u ( z ) \right) = 
\frac{ F_q ( z - z_0 )}{F_\ell ( z - z_0 ) } 2 D_{\bar z } u ( z )  + c(q, \ell ) u ( z_1 )  \delta (z - z_1 )   , \end{split}  \end{equation} 
where $ z_1 = z(p) + z_0 
 $ and $c ( k , p ) = 2 \pi i \theta ( z ( q - \ell ) ) /  \theta' ( 0 ) $.
In particular, by taking $ \ell = 0 $ and $ q = k $, 
\begin{equation}
\label{eq:delta}   ( 2  D_{\bar z } +  k ) \left(F_k ( z - z_0 ) u ( z ) \right) = F_k ( z - z_0 ) 2 D_{\bar z } u ( z ) + c(k ) u ( z_0 ) 
\delta ( z- z_0 )  .  \end{equation} 
\end{lemm}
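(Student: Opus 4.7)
My plan is to derive both identities from the Leibniz rule together with the fundamental distributional identity in \eqref{eq:propFk}.

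I would start with \eqref{eq:delta}, which is the shorter of the two. Since $2D_{\bar z}$ is a first-order derivation, for $u\in C^\infty(\mathbb C)$ and the locally integrable function $F_k(\,\cdot\,-z_0)$ the Leibniz rule gives
\[
(2D_{\bar z}+k)\bigl(F_k(z-z_0)u(z)\bigr) = F_k(z-z_0)\,2D_{\bar z}u(z) + \bigl((2D_{\bar z}+k)F_k(z-z_0)\bigr)u(z).
\]
Translating the identity $(2D_{\bar z}+k)F_k = c(k)\delta_0$ from \eqref{eq:propFk} by $z_0$ and multiplying the resulting delta by the smooth function $u$ yields the claimed formula \eqref{eq:delta}.

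For \eqref{eq:delta1} I would first identify the quotient $G(z):=F_q(z-z_0)/F_\ell(z-z_0)$ with a translated copy of $F_{q-\ell}$. Using \eqref{eq:defFk} and the fact that $z(q)-z(\ell)=z(q-\ell)$, setting $w=z-z_0-z(\ell)$ one gets
\[
\frac{\theta(z-z_0-z(q))}{\theta(z-z_0-z(\ell))} = \frac{\theta(w-z(q-\ell))}{\theta(w)},
\]
and the exponential prefactor of $G$ combines with this theta quotient to reproduce $F_{q-\ell}(w)$ up to a multiplicative scalar coming from $\tfrac{i}{2}\bigl(z(\ell)-\overline{z(\ell)}\bigr)(q-\ell)$. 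Consequently $G(z)=C\,F_{q-\ell}(z-z_1)$ with $z_1:=z_0+z(\ell)$, and a second application of \eqref{eq:propFk} (this time for $F_{q-\ell}$ centered at $z_1$) gives
\[
(2D_{\bar z}+q-\ell)G = c(q,\ell)\,\delta(z-z_1),
\]
where $c(q,\ell)$ collects $C$ together with $c(q-\ell)=2\pi i\,\theta(z(q-\ell))/\theta'(0)$. Applying Leibniz to $Gu$ as before, and evaluating the resulting delta against the smooth factor $u$, produces \eqref{eq:delta1}.

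The main subtlety is bookkeeping: one must verify that the $\Lambda$-pseudoperiodic structure of the shifted $F_{q-\ell}$ matches that of the quotient $G$, and that the constants combine correctly to give the stated $c(q,\ell)$. It is also worth noting what does \emph{not} contribute: the common pole of $F_q(z-z_0)$ and $F_\ell(z-z_0)$ at $z=z_0$ cancels so that $G$ is smooth there, and the zero of $F_q$ at $z_0+z(q)$ only makes $G$ vanish, not singular, so the right-hand side really produces only a single delta, at $z_1=z_0+z(\ell)$, where $F_\ell$ vanishes and $G$ has its unique simple pole in a fundamental domain.
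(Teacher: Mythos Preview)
The paper does not give an explicit proof of this lemma; it is recorded as an immediate consequence of \eqref{eq:propFk}. Your approach---Leibniz rule plus the fundamental identity $(2D_{\bar z}+k)F_k=c(k)\delta_0$---is precisely the intended one, and your reduction of the quotient $F_q(z-z_0)/F_\ell(z-z_0)$ to a constant multiple of $F_{q-\ell}(z-z_1)$ with $z_1=z_0+z(\ell)$ is the correct key computation.

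One small remark on the bookkeeping you flag: your constant is $C\cdot c(q-\ell)$ with $C=e^{\frac{i}{2}(z(\ell)-\overline{z(\ell)})(q-\ell)}$, whereas the paper's stated $c(q,\ell)$ is just $2\pi i\,\theta(z(q-\ell))/\theta'(0)=c(q-\ell)$, i.e.\ omits $C$. Since $\tfrac{i}{2}(z(\ell)-\overline{z(\ell)})=-\Im z(\ell)$ is real, $|C|=1$ when $q-\ell\in\mathbb R$, and in any case the precise value of this multiplicative constant is never used later in the paper (only that the right-hand side is a multiple of $\delta(z-z_1)$ matters for the applications in \S\ref{s:prooft}). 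So your observation is accurate, and the discrepancy is a harmless imprecision in the statement rather than a flaw in your argument.
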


The following simple lemma is implicit in \cite{magic}:
  \begin{lemm}
 \label{l:van}
 Suppose that that $   w \in C^\infty ( \mathbb C; \mathbb C^2 ) $
 and  that $ ( D ( \alpha ) +   k )   w = 0 $ for some $   k $ and 
 that $   w ( z_0 ) = 0 $. Then  
 $     w ( z ) = ( z - z_0 )   w_0 ( z ) $, where $ 
   w_0 \in C^\infty ( \mathbb C; \mathbb C^2 ) $. 
 \end{lemm}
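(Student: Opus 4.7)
\noindent
\textbf{Proof proposal for Lemma \ref{l:van}.}
The plan is to first show that every antiholomorphic derivative of $w$ vanishes at $z_0$, and then deduce smooth divisibility by $(z-z_0)$ using Taylor/Borel.

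Write $w = (w_1, w_2)^t$. The equation $(D(\alpha) + k) w = 0$ reads, componentwise,
\[
 2D_{\bar z} w_1 = -k w_1 - \alpha U(z) w_2, \qquad 2D_{\bar z} w_2 = -k w_2 - \alpha U(-z) w_1,
\]
so up to a multiplicative constant $\partial_{\bar z} w_j = a_j(z,\bar z) w_1 + b_j(z,\bar z) w_2$ with $a_j, b_j$ real-analytic in $(z,\bar z)$. The first step is to prove by induction on $n \ge 0$ that $\partial_{\bar z}^n w_j(z_0) = 0$ for $j=1,2$. The base case $n=0$ is the hypothesis $w(z_0)=0$. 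For the inductive step, apply $\partial_{\bar z}^n$ to the above identity; by the Leibniz rule every resulting term contains a factor $\partial_{\bar z}^\ell w_j$ with $\ell \le n$, and all such factors vanish at $z_0$ by the induction hypothesis. This is the content of the remark after Theorem~\ref{t:zS}.

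The second step is a purely real-analysis factorization: if $f \in C^\infty(\mathbb{C})$ satisfies $\partial_{\bar z}^n f(z_0)=0$ for all $n\ge 0$, then $f(z) = (z-z_0)\,g(z)$ for some $g \in C^\infty(\mathbb{C})$. For this, let $b_{m,n} := \frac{1}{m! n!}\partial_z^m \partial_{\bar z}^n f(z_0)$, so that by hypothesis $b_{0,n} = 0$. By Borel's theorem there exists $g_1 \in C^\infty(\mathbb{C})$ whose Taylor series at $z_0$ equals the formal series $\sum_{m, n \ge 0} b_{m+1, n}(z-z_0)^m (\bar z - \bar z_0)^n$. Then $\phi := f - (z-z_0) g_1$ is smooth and flat at $z_0$ (all derivatives vanish). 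A standard argument using $|D^\beta \phi(z)| \le C_{N,\beta}|z-z_0|^N$ together with $|D^\gamma (z-z_0)^{-1}| \le C_\gamma |z-z_0|^{-|\gamma|-1}$ shows that $\phi/(z-z_0)$ extends to a smooth function $g_2$ on $\mathbb{C}$. Setting $g := g_1 + g_2$ yields $f = (z-z_0) g$.

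Applying this factorization separately to $w_1$ and $w_2$ produces $w_{0,1}, w_{0,2} \in C^\infty(\mathbb{C})$ with $w_j(z) = (z-z_0) w_{0,j}(z)$, i.e.\ $w(z) = (z-z_0)\,w_0(z)$ with $w_0 \in C^\infty(\mathbb{C};\mathbb{C}^2)$, as required. The main obstacle is the inductive vanishing of antiholomorphic derivatives; once that is in place, the factorization is a standard $C^\infty$ division argument and requires no special structure from the equation. One subtle point worth flagging is that we are dividing by a holomorphic factor $(z-z_0)$ rather than by $|z-z_0|$, which is precisely why the one-sided condition $\partial_{\bar z}^n w(z_0) = 0$ suffices.
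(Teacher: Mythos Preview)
Your proof is correct and follows the same strategy as the paper: first show by induction that $\partial_{\bar z}^{\ell} w(z_0)=0$ for all $\ell\ge 0$, using that the equation expresses $\partial_{\bar z} w$ as a smooth (matrix) multiple of $w$; then deduce the factorization $w(z)=(z-z_0)w_0(z)$. The paper's proof is a three-line version of exactly this, simply asserting that the vanishing of all antiholomorphic derivatives at $z_0$ is \emph{equivalent} to smooth divisibility by $(z-z_0)$ and then carrying out the induction.

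The only difference is in how much detail is given for the second step. The paper treats the factorization as a known fact (and, in light of the Remark after Theorem~\ref{t:zS}, implicitly has the real-analytic picture in mind: since $U$ is real-analytic and the equation is elliptic, $w$ is real-analytic, so its Taylor series at $z_0$ converges and one can factor out $(z-z_0)$ directly). Your Borel-plus-flat-remainder argument supplies a purely $C^\infty$ proof of this step, which is slightly more than needed here but perfectly valid and matches the hypotheses of the lemma as stated.
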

 \begin{proof}
 The conclusion of the lemma is equivalent to $ (2 D_{\bar z})^\ell   w ( z_0) = 0$ for 
 all $ \ell $. Since 
 $  (2 D_{\bar z })^\ell     w ( z )  =  (2 D_{\bar z })^{\ell-1}  \left[ \left(    U -   k 
 \right)    w \right]  ( z )  $
 that follows by induction on $ \ell $. 
 \end{proof}

 These two lemmas are the basis of the {\em theta function argument} in \cite{magic} (see also
\cite{dun} for an earlier version of a similar method). Suppose $ D ( \alpha ) u = 0 $, $ u 
\in H^1_0 $. 
and $ u ( z_0 ) = 0 $. Lemma \ref{l:van} shows that, near $ z_0 $, $ u ( z ) = ( z - z_0 ) w ( z ) $, 
$ w \in C^\infty $. But then \eqref{eq:delta} shows that
\[  ( D ( \alpha ) + k ) ( F_k ( z -z_0 ) u ( z ) ) = 0 , \ \  F_k ( z -z_0 ) u ( z ) \in H^1_0 , \]
and from an element of the kernel of $ u $ on $ H_0^1 $ we obtained eigenfunction in 
$ H^1_0 $ for all $ k $. (Strictly speaking we do not even need Lemma \ref{l:van} since
elliptic regularity guarantees smoothness of $ z \mapsto F_k ( z -z_0 ) u ( z ) $.)


We will also need the properties of $ F_k $ when $ k $ is translated, this will allow us to define a natural hermitian line bundle over $\mathbb C/\Lambda$ studied in subsection \ref{s:ccc}:
\begin{lemm}
\label{l:Fk2p}
For $ p \in \Lambda^* $, 
\begin{equation}
\label{eq:Fk2p}
\begin{gathered}
F_{ k + p } ( z ) = e_{p} ( k)^{-1} \tau ( p )^{ {-1}} F_k ( z ) , \\ e_p ( k ) := \frac{\theta ( z ( k ) ) }{ \theta (z ( k + p )) } = ( -1 )^n ( -1)^m e^{ i \pi n^2 \omega + 2 \pi z ( k ) }, 
\end{gathered} 
\end{equation}
where $  z ( p ) = m + n \omega$, $n, m \in \mathbb Z$. 
\end{lemm}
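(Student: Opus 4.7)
The plan is to prove the transformation law by unpacking both sides directly from the definition of $F_k$ and using the quasi-periodicity of $\theta$. The identity is of ``multiplier form,'' so everything amounts to tracking exponential factors carefully.

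\textbf{Step 1.} Since $z(k+p) = z(k) + z(p) = z(k) + m + n\omega$, I first compute $\theta(z - z(k) - m - n\omega)$ by applying the quasi-periodicity identities in \eqref{eq:theta} $m$ times in the $1$-direction and $n$ times in the $\omega$-direction (or by inverting the translation-by-$n\omega$ relation). The outcome is
\[
\theta\bigl(z - z(k) - m - n\omega\bigr) = (-1)^{n+m}\,e^{-\pi i n^2 \omega + 2\pi i n\bigl(z - z(k)\bigr)}\,\theta\bigl(z - z(k)\bigr).
\]

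\textbf{Step 2.} Plug this into the definition \eqref{eq:defFk} of $F_{k+p}$ and split $e^{\frac{i}{2}(z-\bar z)(k+p)} = e^{\frac{i}{2}(z-\bar z)p}\cdot e^{\frac{i}{2}(z-\bar z)k}$. The $F_k(z)$ factor drops out, leaving
\[
F_{k+p}(z) = (-1)^{n+m}\,e^{-\pi i n^2 \omega - 2\pi i n z(k)}\cdot e^{\,\frac{i}{2}(z-\bar z)p + 2\pi i n z}\cdot F_k(z).
\]

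\textbf{Step 3.} The same quasi-periodicity gives $\theta(z(k+p)) = (-1)^{n+m}e^{-\pi i n^2 \omega - 2\pi i n z(k)}\theta(z(k))$, so that
\[
e_p(k)^{-1} = (-1)^{n+m}\,e^{-\pi i n^2 \omega - 2\pi i n z(k)},
\]
which is precisely the first (constant in $z$) factor in Step 2. Thus proving \eqref{eq:Fk2p} reduces to the identity
\[
e^{\,\frac{i}{2}(z-\bar z)p + 2\pi i n z} \;=\; e^{-i\langle z,p\rangle},
\]
which is the action of $\tau(p)^{-1}$.

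\textbf{Step 4.} Expanding using $\langle z, p\rangle = \tfrac12(z\bar p + \bar z p)$ shows this last identity is equivalent to the purely arithmetic statement $\operatorname{Re}(p) + 2\pi n = 0$. This is where the specific form of the dual lattice enters: since $p \in \Lambda^* = \frac{4\pi i}{\sqrt 3}\Lambda$ with $z(p) = m + n\omega$, one has $p = \frac{4\pi i}{\sqrt 3}(m+n\omega)$, and a direct computation using $\omega = -\tfrac12 + \tfrac{\sqrt 3}{2}i$ gives $\operatorname{Re}(p) = -2\pi n$, closing the argument.

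The only mildly tricky part is the bookkeeping of signs and exponentials in Step 1/2; the conceptual step is Step 4, where the specific scaling in $\Lambda^* = \frac{4\pi i}{\sqrt 3}\Lambda$ is exactly what matches the $2\pi i n z$ coming from the $\theta$-transformation law to the $\tau(p)^{-1}$ multiplier.
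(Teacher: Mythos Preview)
Your proof is correct, but it takes a different route from the paper's. The paper argues functionally: since both $c(k+p)^{-1}F_{k+p}$ and $\tau(p)^{-1}c(k)^{-1}F_k$ are periodic distributions solving $(2D_{\bar z}+k+p)\,G=\delta_0$, uniqueness of the Green kernel forces $F_{k+p}=\frac{c(k+p)}{c(k)}\,\tau(p)^{-1}F_k$, and the ratio $c(k+p)/c(k)=\theta(z(k+p))/\theta(z(k))=e_p(k)^{-1}$ is then read off from the theta transformation law. Your approach is instead a direct unwinding of the definition of $F_k$ together with the quasi-periodicity of $\theta$, culminating in the lattice identity $\Re(p)=-2\pi n$ that matches the leftover $e^{2\pi i n z}$ to the multiplier $\tau(p)^{-1}$. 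The paper's argument is shorter and conceptually cleaner (no exponential bookkeeping, and it makes transparent why the transition functions are holomorphic in $k$), while yours is more elementary, self-contained, and has the side benefit of explicitly verifying the closed formula for $e_p(k)$ and pinpointing exactly where the specific normalization $\Lambda^*=\tfrac{4\pi i}{\sqrt 3}\Lambda$ is used.
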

\begin{proof}
Since, for $ k \notin \Lambda^* $,  $ ( 2D_{\bar z } + k + p ) \tau ( p )^{ {-1}} c(k)^{-1} F_k = \delta_0 $
and $ ( 2 D_{\bar z } + k + p ) c( k + p )^{-1} F_{k+p} = \delta_0 $,
the uniqueness of the kernel of the resolvent of $ 2 D_{\bar z }$ shows that
\[ F_{ k+p } ( z) =  \frac{ c ( k + p)}{ c( k ) } [ \tau( p)^{ {-1}}  F_k ] ( z )  = \frac{\theta ( z ( k + p ) ) }{\theta ( z ( k ) )}
[ \tau( p)^{ {-1}} F_k ] ( z ) , \]
and \eqref{eq:Fk2p} follows.
\end{proof}

\subsection{Flat bands and theta functions}
\label{s:flat}

We now reformulate the characterization of magic angles using the vanishing of 
$ u_K ( \alpha ) $ (in \cite{beta} this was established only for $ \alpha \in \mathbb R $): 
\begin{prop}
\label{p:vanish}
Let $ \alpha \mapsto u_K ( \alpha ) \in 
\ker_{ H_0^1 ( \mathbb C/\Lambda , \mathbb C^2 ) } ( D ( \alpha ) + K ) $
be a smooth family given in Proposition \ref{p:protea}. Then 
\begin{equation}
\label{eq:ukzS} 
\begin{split} \alpha \in \mathcal A & \ \Longleftrightarrow \ \exists \, \varepsilon \in \{ \pm 1 \} \ \ 
u_K ( \alpha , \varepsilon z_S ) = 0 , \ \ \  z_S := i / \sqrt 3 \\
& \ \Longleftrightarrow \ \exists \, z_0 \ \ u_K ( \alpha ,  z_0 ) = 0 .
\end{split} 
\end{equation} 
\end{prop}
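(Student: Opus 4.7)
The plan is to establish the cyclic chain
$\alpha\in\mathcal A \Rightarrow \exists\,\varepsilon\in\{\pm1\}\text{ with }u_K(\alpha,\varepsilon z_S)=0 \Rightarrow \exists\,z_0\text{ with }u_K(\alpha,z_0)=0 \Rightarrow \alpha\in\mathcal A$.
The middle implication is trivial. The last is the theta function argument already sketched after Lemma~\ref{l:van}: if $u_K(\alpha,z_0)=0$, Lemma~\ref{l:van} gives the local factorization $u_K(z)=(z-z_0)\tilde w(z)$, and \eqref{eq:delta} combined with $(D(\alpha)+K)u_K=0$ shows that $F_k(z-z_0)u_K(z)$ is a nonzero element of $\ker(D(\alpha)+K+k)\cap H_0^1$ for every $k\in\mathbb C$. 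Hence $\Spec_{L_0^2}D(\alpha)=\mathbb C$ and \eqref{eq:A2S} places $\alpha$ in $\mathcal A$.

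For the first implication I would use a Wronskian argument. Set $A(z):=\tau(K)u_K(\alpha,z)$, $B(z):=\tau(-K)u_{-K}(\alpha,z)$, and $W(z):=\det(A,B)=A_1B_2-A_2B_1$. A direct computation using $D(\alpha)A=D(\alpha)B=0$ and the off-diagonal form of $V$ yields $D_{\bar z}W=0$, while the $\mathscr L_\gamma$-transformation rules of $A\in L^2_K$ and $B\in L^2_{-K}$ combine into $W(z+\gamma)=W(z)$ for every $\gamma\in\Lambda$. A holomorphic function on the compact torus $\mathbb C/\Lambda$ is constant, so $W$ does not depend on $z$. Using $u_2^+(\pm z_S)=u_1^-(\pm z_S)=0$ from Proposition~\ref{p:prote} and the identity $u_{-K}=\tau(K)\mathscr E\tau(K)u_K$ from Proposition~\ref{p:protea} (which unfolds to $u_1^-(z)=u_2^+(-z)$ and $u_2^-(z)=-u_1^+(-z)$), evaluating $W$ at $z_S$ gives $W=-u_1^+(\alpha,z_S)\,u_1^+(\alpha,-z_S)$. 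Since $u_K(\varepsilon z_S)=0$ is equivalent to $u_1^+(\varepsilon z_S)=0$ (the other component vanishing automatically), this identifies $\{W=0\}$ with $\{\exists\,\varepsilon:\,u_K(\varepsilon z_S)=0\}$; and $u_K(\alpha,z_0)=0$ at any $z_0$ collapses the first column of $W(z_0)$, hence also forces $W=0$.

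The final and main hurdle is the implication $\alpha\in\mathcal A\Rightarrow W=0$. Under the simplicity assumption on the flat band that the paper focuses on, $\ker_{H^1_0}(D(\alpha)+K)=\mathbb C u_K$ is one-dimensional, and $\alpha\in\mathcal A$ supplies a nonzero $w\in\ker_{H^1_0}D(\alpha)$ via Proposition~\ref{p:spec} and \eqref{eq:A2S}. Decompose $w=w_0+w_1+w_2$ under the $\Omega$-action with $w_p\in L^2_{0,p}$. The $p=0$ component must vanish: $z_S$ is an $\Omega$-fixed point modulo $\Lambda$ (from $\omega z_S=z_S-(1+\omega)$), so the combined $\Omega$- and $\mathscr L$-equivariance of $L^2_{0,0}$ forces $w_0(z_S)=0$, and \eqref{eq:delta} at $z_0=z_S$ then gives $F_K(z-z_S)w_0(z)=cu_K(z)$; evaluating at $z=0$, where $F_K(\cdot-z_S)$ vanishes since $z(K)=-z_S$, forces $c\mathbf e_1=cu_K(0)=0$, hence $c=0$ and then $w_0\equiv 0$. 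With $w\in L^2_{0,1}\oplus L^2_{0,2}$ the $\Omega$-action at the fixed point $z=0$ forces $w(0)=0$, and a final application of \eqref{eq:delta} at $z_0=0$ yields $F_K(z)w(z)=cu_K(z)$ with $c\neq 0$ (as $F_K\,w\not\equiv 0$); evaluating at $z=-z_S=z(K)$ where $F_K$ vanishes gives $u_K(\alpha,-z_S)=0$, closing the cycle with $\varepsilon=-1$. The essential use of simplicity is in collapsing every theta-function-produced element of $\ker(D(\alpha)+K)|_{H^1_0}$ to a scalar multiple of the specific $u_K$ from Proposition~\ref{p:protea}.
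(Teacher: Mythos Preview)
Your handling of $\exists\,z_0\Rightarrow\alpha\in\mathcal A$ and of the Wronskian identity $W=-u_1^+(z_S)u_1^+(-z_S)$ matches the paper's. The gap is in the implication $\alpha\in\mathcal A\Rightarrow W=0$.

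You prove this only under the extra hypothesis $\dim\ker_{H^1_0}(D(\alpha)+K)=1$, which is \emph{not} part of the proposition. The statement must cover every $\alpha\in\mathcal A$, and magic $\alpha$ with higher-multiplicity flat bands do occur (see Figure~\ref{f:doub}). Worse, Proposition~\ref{p:vanish} is invoked in the proof of Theorem~\ref{t:HtL}, so you cannot appeal to any simplicity result here without circularity. Under higher multiplicity your $\Omega$-decomposition argument breaks down: $F_K(z-z_S)w_0$ and $F_K(z)w$ land in a kernel that is no longer spanned by the specific holomorphic family $u_K(\alpha)$, so nothing forces $u_K$ itself to vanish.

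There is also a minor slip: you write ``$c\mathbf e_1=cu_K(0)$'', but $u_K(0)=\mathbf e_1$ in Proposition~\ref{p:prote} refers to $u_K$ at $\alpha=0$, not to $u_K(\alpha,z{=}0)$. (The step can be repaired: either $u_K(\alpha,0)=0$ and you are done via the Wronskian, or $u_K(\alpha,0)\neq 0$ and then $c=0$ yields $w_0\equiv 0$.)

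The paper closes the gap by the contrapositive without any simplicity assumption: if $u_K(\pm z_S)\neq 0$ then $W=v(\alpha)\neq 0$, and one can then write down $(D(\alpha)-k)^{-1}$ for $k\notin\mathcal K$ explicitly in terms of $u_K$ and $u_{-K}$ (this is \cite[Proposition~3.3]{beta}). That shows $\Spec_{L^2_0}D(\alpha)$ is discrete, hence $\alpha\notin\mathcal A$. This resolvent construction is the missing ingredient in your argument.
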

\begin{proof}
Suppose first that there exists $ z_0 $ at which $ u_K ( \alpha )  $ vanishes. Since
$ (D ( \alpha ) + K )  u_K = 0 $ we then see that for every $ {k'} \in \mathbb C $
\begin{equation}
\label{eq:K2k}  ( D ( \alpha ) + K + k' ) ( F_{k'} ( z - z_0 )  u_K ( z ) ) = 0 , \end{equation} 
and the solution of this elliptic equation is automatically in $ H^1_0 $ (since $ u_K \in H^1_0 $ and the scalar valued function $ F_{k'} $ is periodic {by \eqref{eq:propFk}}).

 Hence $ \Spec_{ L^2_0 }  D( \alpha ) = 
\mathbb C $. Using \eqref{eq:+2-} and putting
$ u_K = ( u_1, u_2 )^t $,  the Wronskian \eqref{eq:Wron}, which is constant (apply $ \partial_{\bar z } $ 
to both sides and use periodicity), is given by
\[    v ( \alpha ) = u_1 ( z ) u_1 (- z) + u_2 ( -z ) u_2 ( z ) = \left\{ \begin{array}{ll} 
u_1 ( z_0) u_1 ( -z_0 ) + u_2 ( -z_0) u_2 ( z_0 ) = 0 ,\\
\ \  u_1 ( z_S ) u_1 ( -z_S ) , \end{array} \right. \]
where we used \eqref{eq:defzS}. Hence $ u_K $ has to vanish at either $ z_S $ or $ - z_S $.

It remains to show that if $ u_K ( z_S ) u_K ( -z_S ) \neq 0 $ then $ \alpha \notin \mathcal A $.
That is equivalent to the Wronskian, $ v ( \alpha ) \neq 0 $ in which case we can 
express $ ( D ( \alpha ) - k )^{-1} $, $ k \notin \mathcal K$ using $ u_K $ and $ u_{-K} $ -- 
\cite[Proposition 3.3]{beta}.
\end{proof}

\section{Proofs of Theorems \ref{t:HtL} and \ref{t:zS}}
\label{s:prooft}

The theta function argument of \cite{magic} which we reviewed in the previous
section relies on vanishing of both component of
$   u_K  \in L^2_{ 0 } ( \mathbb
C /\Lambda; \mathbb C^2 ) $, $ (  D ( \alpha ) + K )  u_K = 0 $, 
or equivalently of vanishing of $  u_{- K } =\tau ( K )  \mathscr E  \tau ( K) u_{ K}$
 -- see Proposition \ref{p:vanish}, Figure \ref{f:1D} and the movie referenced there. 

We start with a general fact:
\begin{lemm}
\label{l:zeros}
Suppose that $ u \in \ker_{L^2_p } D ( \alpha ) \setminus \{ 0 \} $, $ p \in \mathbb C $, 
has $ k $ zeros (counted with multiplicity). Then 
\[ \dim_{ L^2_p } \ker D ( \alpha ) \geq k . \]
\end{lemm}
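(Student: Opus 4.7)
The plan, in the spirit of the theta-function argument of \S\ref{s:theta}, is to promote the zero set of $u$ to a divisor $\Delta = \sum_j m_j[z_j]$ of degree $k$ on the torus $\mathbb{C}/\Lambda$, and then to embed the corresponding $k$-dimensional space $L(\Delta)$ of $\Lambda$-periodic meromorphic functions with pole divisor $\leq \Delta$ into $\ker_{L^2_p} D(\alpha)$ via multiplication by $u$. To set this up, I would iterate Lemma \ref{l:van}: if $u(z_j)=0$ then $u=(z-z_j)w_1$ with $w_1$ smooth and $D(\alpha)w_1 = 0$ (this follows from $(z-z_j)D(\alpha)w_1 = D(\alpha)u = 0$); continuing as long as $w_n(z_j)=0$ produces $u(z)=(z-z_j)^{m_j} w_{m_j}(z)$ with $w_{m_j}$ smooth and $w_{m_j}(z_j)\neq 0$. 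In particular $\partial_z^a u(z_j) = 0$ for all $a \leq m_j - 1$. Summing the orders yields $\Delta$ with $\sum_j m_j = k$.

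Next I consider the linear map $\phi : L(\Delta) \to \ker_{L^2_p} D(\alpha)$, $\phi(f) := fu$. Three checks are needed: (i) $fu$ is smooth, since at each $z_j$ a pole of $f$ of order $\leq m_j$ is cancelled by the zero of $u$ of order $m_j$; (ii) $fu \in L^2_p$, because $f$ is genuinely $\Lambda$-periodic so $\mathscr L_\gamma(fu) = f\,\mathscr L_\gamma u = e^{i\langle p,\gamma\rangle}fu$; (iii) $D(\alpha)(fu) = 0$, since Leibniz gives
\[
D(\alpha)(fu) = f\,D(\alpha)u + 2u\,D_{\bar z}f = 2u\,D_{\bar z}f,
\]
and $D_{\bar z}f$ is supported on the $z_j$ as a linear combination of $\partial_z^{\ell-1}\delta(\,\cdot\,-z_j)$ for $1\leq\ell\leq m_j$, each of which is annihilated by multiplication by $u$ because pairing against a test function only produces $\partial_z^a u(z_j)$ with $a\leq m_j-1$. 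Injectivity of $\phi$ is immediate ($u\not\equiv 0$ forces $f\equiv 0$), so $\dim\ker_{L^2_p}D(\alpha) \geq \dim L(\Delta)$.

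Finally, $\dim L(\Delta) = k$. The quickest route is Riemann--Roch on the genus-one quotient $\mathbb{C}/\Lambda$ with trivial canonical bundle: $\dim L(\Delta) = \deg\Delta + \dim L(-\Delta) = k + 0$ for $k\geq 1$. More in keeping with the paper's theta-function toolkit, one can instead exhibit $k$ explicit linearly independent elliptic functions as ratios $\prod_i \theta(z-a_i)/\prod_j\theta(z-z_j)^{m_j}$, with $\Lambda$-periodicity secured by $\sum_i a_i \equiv \sum_j m_j z_j \pmod{\Lambda}$ using the quasi-periodicity in \eqref{eq:theta}. The main obstacle is the opening step: upgrading Lemma \ref{l:van} iteratively so that zeros of $u$ are genuinely holomorphic zeros of integer order $m_j$ with $\partial_z^a u(z_j) = 0$ for $a\leq m_j - 1$; once that holomorphic structure is in hand, the rest is a standard combination of distributional Leibniz with the Riemann--Roch dimension count.
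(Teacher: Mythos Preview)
Your argument is correct and coincides with the paper's: where you use the Riemann--Roch space $L(\Delta)$ of $\Lambda$-periodic meromorphic functions with poles bounded by $\Delta$ and send $f \mapsto fu$, the paper fixes $F(z)=\prod_j\theta(z-z_j)$, takes the $k$-dimensional space $\mathscr G$ of holomorphic sections of the associated theta line bundle, and sends $G\mapsto (G/F)\,u$; the two constructions are identified via $f=G/F$. You are in fact slightly more careful than the paper about iterating Lemma~\ref{l:van} to handle zeros of multiplicity greater than one.
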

\begin{proof}
Define a holomorphic function $ F ( z ) := \prod_{ j=1}^p \theta ( z - z_j ) $. From \eqref{eq:theta} we
see that 
\[  F ( z + \gamma ) = e_\gamma ( z ) F ( z ) , \ \  e_{1 } ( z ) = (-1)^{p},  \ \ 
e_{\omega } ( z ) = e^{ i \beta - 2 \pi i z p },  \]
where $ \beta := - \pi i ( n^2 \omega + 1 ) p - 2 \pi i p \sum_{ j=1}^p 
z_j  $, and where $ z \mapsto e_\gamma( z )  $ satisfies \eqref{eq:CD}. For this $ e_\gamma $ define
\[  \mathscr G := \{  G \in \mathscr O ( \mathbb C ) : G ( z + \gamma ) = 
e_\gamma( z ) G ( z ) , \ \ \gamma \in\Lambda \} \]
which can be interpreted as the space of holomorphic sections for the line bundle defined using
the multiplier $ e_\gamma $ (see \eqref{eq:secti}). The dimension of the vector
space $ \mathscr G $ is given by $ p $ - see  \cite[Proposition 7.9]{notes} for an elementary 
argument (this can be seen from the Riemann--Roch theorem). 

Lemma \ref{l:van} shows that for $ F $ above and any $ G \in \mathscr G $, 
\[  \widetilde u ( z ) := \frac{ G ( z) u ( z ) } { F ( z ) }  \in 
\ker_{ L^2_p } D ( \alpha ) . \]
Hence the dimension of that kernel is at least $ p $.
\end{proof}

\begin{center}
\begin{figure}
\includegraphics[width=15cm]{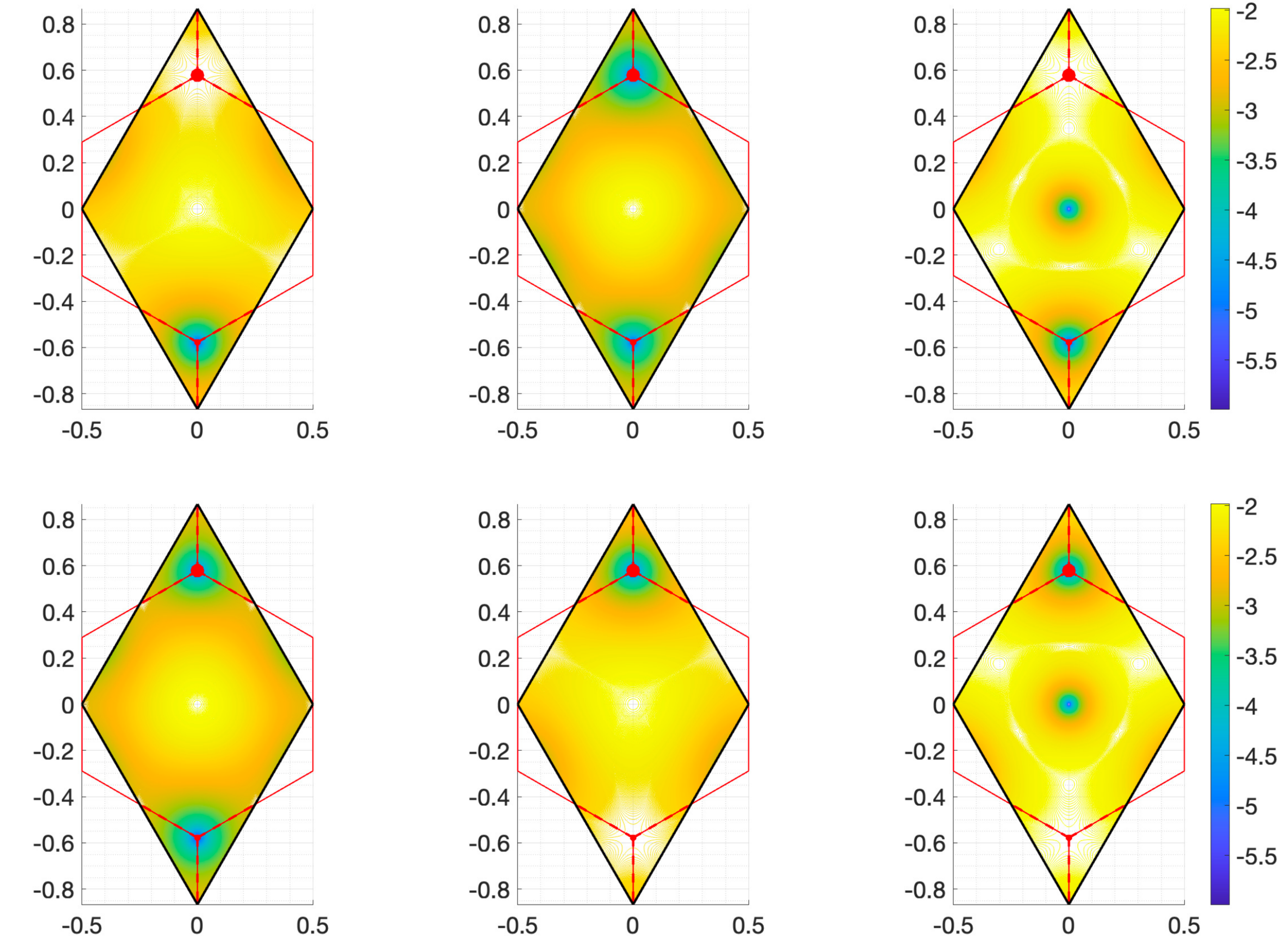}
\caption{\label{f:1D} On top/bottom, the first/second components of $ \log|u_\bullet |$ 
for $ \bullet = K, -K , 0 $, respectively, where $ u_\bullet $ spans 
the kernel of $ D ( \alpha ) - \bullet   $ on 
$ L^2_0 ( \mathbb C/\Lambda ; \mathbb C^2 ) $,
and $ \alpha $ is the first 
real magic angle for \eqref{eq:BMU}; $ u_{K}, u_{-K}, u_0  $ vanish at $-z_S ,z_S$ (marked by \red{$\bullet$}), and $ 0 $, respectively. We also indicate (\red{$-$}) the hexagon spanned by 
$ \pm z_S + \Lambda $.
The states $ u_{\pm K } $ exist for all $ \alpha$'s (Proposition \ref{p:protea}) and, 
in the case of a simple $ \alpha \in \mathcal A $ 
have zeros at $ \pm z_S $ (Theorem \ref{t:zS});
 see \url{https://math.berkeley.edu/~zworski/magic.mp4}
for the plot of $ \log | u_{-K} |$ as $ \alpha $ changes. }
\end{figure}
\end{center}

\begin{proof}[Proof of Theorem \ref{t:HtL}] We first note that if $ E_j ( \alpha,  p ) > 0$ for $ j > 1 $
and $\alpha \in A$ then $ E_1 ( \alpha, p ) = 0 $ is a double eigenvalue of $ H_p ( \alpha ) $.
But that means that  $ D ( \alpha ) +  p  $ on $ L^2_{0 } $ is one dimensional (see
the Proof of Theorem \ref{t:1} in \S \ref{s:specH}). 

Hence we need to show that if $ \alpha \in \mathcal A $ and
$ \dim \ker_{L^2_{ p } ( \mathbb C/\Gamma; \mathbb C^2)} D ( \alpha ) = 1 $, for a {\em fixed} 
$p \in \mathbb C $ then
$ E_j ( \alpha,  k ) > 0$ for all $  k $ and $ j > 1 $. To do that we proceed by contradiction 
and suppose that
there exists $  k $ such that $ E_1 ( \alpha,  k ) = E_2 ( \alpha, 
 k) $.

First consider the easy case of $  k =  p $ and we have two 
independent $  v_j $, $ j =1,2 $ in $ \ker_{L^2_{ 0} } ( D ( \alpha ) + p  ) $.
Then $ \widetilde { v}_j ( z ) = \tau(p)  v_j ( z ) $ satisfy 
$ D ( \alpha ) \widetilde { v}_j = 0 $ and $  v_j \in L^2_{ k}=L^2_{ p} $, which gives the desired contradiction.

Now assume that $  k \neq   p$.  Propositions \ref{p:protea} and \ref{p:vanish} give a nontrivial $  u_{ K } \in L^2_{ 0 } $ such that 
$  u_{ K} ( \varepsilon z_S ) = 0 $ where $ \varepsilon \in \{ \pm 1 \} $. 
Put $  z_0 := \varepsilon z_S $,  $ u_{K}  ( z_0 ) = 0 $. 
Using \eqref{eq:delta} we define 
\begin{equation}
\label{eq:vz}   v ( z ) := F_{ k - K  } ( z - z_0)  u_{ K} , 
\ \  v   \in L^2_{ 0} ( \mathbb C/\Gamma; 
\mathbb C^2 ) ,  \ \  ( D ( \alpha ) +  k )  v= 0 .
 \end{equation}
  Since 
 $E_{2} ( \alpha,  k  ) = 0 $,  there exists $  w \in L^2_{ 0} $, 
 independent of $  v $ and such that $ ( D ( \alpha) +  k )   w = 0 $. 
If $  v = ( \varphi_1 , \varphi_2 ) $ and $  w = ( \psi_1 , \psi_2 ) $, 
 we form the Wronskian $ W := \varphi_{1} \psi_{2} - \varphi_{2} \psi_{1}$ which 
satisfies
\begin{equation}
\label{eq:Wr0}  ( 2 D_{\bar z } + 2  k ) W = 0 , \ \ \   
 W ( z +  \gamma ) = W ( z )  , \ \ \gamma \in \Lambda .  \end{equation}
(Since $ \mathscr L_\gamma u = u$, $ \varphi_1 ( z+ \gamma ) = e^{-i \langle \gamma, K \rangle }
\varphi_1 ( z ) $ and $ \varphi_2 ( z + \gamma ) = e^{i \langle \gamma, K \rangle }
\varphi_2 ( z ) $, and similarly for $ \psi_1$ and  $ \psi_2 $. That shows periodicity of 
$ W $.)
The definition of $ F_{ k -K } $ in \eqref{eq:defFk}
shows that 
\begin{equation}
\label{eq:Fk0}  F_{ k - K } ( z_1 - z_0) = 0 , \ \ \ z_1 := z_0 + z(k-K ) ,
\end{equation}
so that \eqref{eq:vz} gives 
$  v ( z_1 ) = 0 $. 
This implies that $ W ( z_1 ) = 0 $. If $ 2k \notin \Lambda^* $, $ W \equiv 0 $ 
since $ 2 D_{\bar z } + 2k $ is invertible. Otherwise we note that
$ W ( z ) = e^{-  i \langle 2k , z - z_1 \rangle } W ( z_1 ) = 0 $. 

Since $ W = 0 $, 
\begin{equation}   
\label{eq:defg} w ( z ) = g ( z )  v ( z ),  \ \  g \in C^\infty ( \Omega ), \ \ 
\partial_{\bar z } g|_\Omega = 0  , \ \ \  g ( z + \gamma ) = g ( z ) , \ \ z \in\Lambda ,
\end{equation}
{where $ \Omega :=\complement \{ z : v ( z ) = 0 \} $.}
Also $ g \not \equiv 1$ as $  v $ and 
$  w $ are independent. We claim that $ g $ is a meromorphic function on $ \mathbb C/\Lambda$.
For that fix any $ \underline z \in \mathbb C $ and write $ w = (w_1, w_2 )^t $, $ v = ( v_1, v_2 )^t$.
Then $ g = w_1/v_1 = w_2/v_2$, and 
$w_1 ( {\underline z} + \zeta ) = G_1 ( \zeta, \bar \zeta )$, 
$ v_1 ( - {\underline z} - \zeta ) = G_2 ( \zeta , \bar \zeta ) $, where
$ G_j : B _{\CC^2 } ( 0, \delta ) \to \CC $ are holomorphic functions (this follows
from real analyticity of $ w $ and $ v $, which is a consequence of the 
ellipticity of the equation and analyticity of $ U $ -- see \cite[Theorem 8.6.1]{H1}). 
The definition of $ g $ and the fact that $ \partial_{\bar z } g = 0 $
away from zeros of $ v  $ shows
 that $ G_1 ( \zeta, \xi ) = g ( {\underline z} + \zeta ) G_2 ( \zeta , \xi ) $. We can then choose $ \xi_0 $ such that $ G_2 ( \zeta, \xi_0 ) $ is not 
identically zero (if no such $ \xi_0 $ existed, $ v _1 \equiv 0$, and hence, from the equation, $v \equiv 0 $). But then 
$ \zeta \mapsto g ( {\underline z} + \zeta ) = G_1 ( \zeta, \xi_0 )/G_2 ( \zeta, \xi_0 ) $ is meromorphic near $ \zeta = 0 $ and, as $ {\underline z} $ was
arbitrary, everywhere.

The nontrivial meromorphic function $ g $ has to vanish at at some point, say $ z_2 $.
Hence $ w( z_2 ) = 0 $. 
We define $ z_3 $ (unique $ \!\! \!\!\mod \Lambda $
and not congruent to $ z_0 $) so that $ F_{ k -K } ( z_2- z_3 ) = 0 $. 
Hence, the following functions are smooth,
\begin{equation}
\label{eq:wide2w}   \widetilde { v} ( z ) := \frac{ \tau( p )  F_{ p - K }  ( z - z_0)}{
F_{ k - K } ( z - z_0) }   v ( z ) , \ \ \
\widetilde  { w } ( z ) := \frac{ \tau ( p ) F_{ p - K} ( z -z_3 ) }{
F_{ k-K } ( z - z_3  ) }  w ( z ) , \end{equation}
and 
\begin{equation}
\label{eq:vtwt}  D ( \alpha ) \widetilde { v} = 0 , \ \  D ( \alpha ) \widetilde { w} = 0 ,   
\ \ \
\widetilde { v}, \ \widetilde { w} \in L^2_{ p } ( \mathbb C/\Gamma ; 
\mathbb C^2 ) . \end{equation} 
Consequently, since we assumed that 
$ \dim \ker_{L^2_p }  D ( \alpha ) = 1 $,
  there exists $ c_0 \in \mathbb C $ such  $ \widetilde v ( z) = c_0 \widetilde w ( z ) $.
Lemma \eqref{l:zeros} shows that $ \widetilde v ( z ) $ has a unique simple zero. 
Returning to \eqref{eq:wide2w} we conclude that $ z_3 = z_0$ and that 
$ v ( z ) = c_0 w ( z ) $. In other words, $ \dim \ker_{L^2_k } D ( \alpha ) = 1 $. 
\end{proof} 



\begin{proof}[Proof of Theorem \ref{t:zS}]
We will rely on Lemma \ref{l:van} in several places.
We write $  u ( z ) := \tau ( - K ) u_{-K}  = ( \psi_1 (z) , \psi_2 ( z) ) \in \ker_{L^2_{-K,0} 
 ( \mathbb C /\Lambda; \mathbb C^2 ) } D ( \alpha )  $ 
  and
assume that $  u ( z_0 ) = 0 $. We recall from Proposition \ref{p:protea} that 
$  u $ has to vanish at $ z_S $ or at $ - z_S $.

We first show that $ z_0 = \pm z_S $. Suppose otherwise and that,  in addition,   $ z_0 \neq 0 $. In that case, $ \omega^j z_0 $ are
three distinct points on $ \CC/\Lambda $ adding up to $ 0 $. Hence,
there exists a $ \Lambda$-periodic meromorphic function $ g_{z_0} $ with simple poles at
$ \omega^j z_0 + \Lambda $ which satisfies
$ g_{ z_0 } ( \omega z ) = g_{ z_0 } ( z ) $.  
This is a general fact (see \cite[\S I.6]{tata}) and we can 
 take 
 \[  g_{z_0 } ( z ) = c \prod_{ j=0}^2 \frac{ \theta( z \bar \omega^{j} + z_0 )  }{ \theta (  z \bar \omega^{j}  - z_0  ) }, \]
But this means that $ \widetilde { u} ( z ) := g_{z_0} ( z )   u ( z) $ satisfies $ D ( \alpha ) \widetilde { u } = 0 $
(see Lemma \ref{l:van})
and $ \widetilde { u } \in L^2_{ -K, 0 } ( \CC/\Gamma ) $, $ \widetilde u \not{\! \parallel} \, \, u $. 
Since we assumed simplicity, this is impossible.

We now need to eliminate the possibility that $ z_0 = 0 $. 
From the vanishing of the Wronskian \eqref{eq:Wron} ($ \alpha \in \mathcal A$) and \eqref{eq:+2-}  and we
see that
$ \mathscr E  u ( z) = f ( z )  u ( z)  $
where
\begin{equation}
\label{eq:deff} 
f ( z ) := \frac{ \psi_2 ( -z ) }{\psi_1 ( z ) }. 
\end{equation}
This function is satisfies
\begin{equation}
\label{eq:propf} 
f ( z +  \gamma ) = e^{ -  i \langle \gamma , K \rangle }  f ( z ) ,  \  \ \gamma \in \Lambda , \ \  f ( \omega z ) = f ( z ) , \ \ 
f ( z ) f ( -z ) = -1 .
\end{equation}

In fact, the holomorphy away from the zeros of $ \psi_1 $ follows from calculating 
$ D_{\bar z } f $ the equations for $ \psi_j $ and the vanishing of the Wronskian \eqref{eq:Wron}. The latter also shows the
functional equation for $ z \mapsto - z $ and from 
the fact that $  u \in L^2_{-K,0} ( \CC/\Lambda ) $ we deduce
quasi-periodicity and invariance under $ z \mapsto \omega z $. We also see that $ f $ is meromorphic 
using the same argument as in the proof of Theorem \ref{t:HtL} (see \eqref{eq:defg}). 
In particular, the functional equation shows that $ f $ is regular 
at $ 0 $. 

With this in place, we now show that a zero at $ z_0 = 0 $ is impossible. 
We claim that 
\begin{equation}
\label{eq:023} \psi_1 ( 0 )  = 0 \ \Longrightarrow \  \partial_{z}^k \partial_{\bar z }^{\ell} \psi_1 ( 0 )  = 0 , \ 
 k \leq 2 , \ \ell \geq 0.  
\end{equation}
This implies that if $\psi_1 ( 0 ) = 0 $ then $ \psi_j ( z ) = 
z^3 \widetilde \psi_j ( z) $. But this means that
\begin{equation}
\label{eq:wp}  \widetilde { u } ( z ) := \wp' ( z ; \omega,
1  )  u \in   L^2_{ -K, 0 } ( \CC/\Gamma_3 ) , \end{equation}
and satisfies 
$ D ( \alpha ) \widetilde { u } = 0 $. Projective uniqueness of $  u $
(uniqueness up to a multiplicative constant) shows that this is impossible.
(Here $ \wp ( z; \omega_1 , \omega_2 ) $ is the Weierstrass 
$\wp$-function -- see \cite[\S I.6]{tata}. It is periodic with respect to $ \ZZ \omega_1 + \ZZ \omega_2 $ and its derivative has a pole of order 
$ 3 $ at $ z = 0 $. For $ \omega = e^{ 2 \pi i/3 } $ we also have 
$  \wp' (\omega  z ; \omega,
1  ) =  \wp' ( z ; \omega,
1  )  $.)

To prove \eqref{eq:023}
we consider expansions at $ ( z , \bar z ) = ( 0 , 0 ) $:
denoting by $ \equiv $ congruency modulo $3$ and using properties of 
$ U $ and 
$ \psi_1 ( \omega z ) = \psi_1 ( z ) $, we obtain
\begin{equation}
\label{eq:defakl}
 \psi_1 ( z ) = \sum_{ k \equiv \ell}  a_{k\ell } z^k \bar z^\ell , \ \
U ( z ) = \sum_{ p   \equiv q+ 1 } b_{pq} z^p \bar z^q , \ \ 
f (- z ) = \sum_{ k \equiv 0 } f_{k} z^k .
\end{equation}
The equation $ 2 D_{\bar z } \psi_1 ( z ) +\alpha U ( z ) f ( - z ) \psi_1 ( -z )
= 0 $ then becomes 
\[ \sum_{ k \equiv \ell }\left[ (2\ell/i ) a_{k\ell} z^k  \bar z^{\ell-1} 
+\alpha (-1)^k
\sum_{  p \geq 1 } \sum_{ q \geq 0 } \sum_{ r \geq 0 } b_{pq} f_r a_{ k\ell} z^{ k+r+p}\bar z^{ \ell+q }   \right] = 0\]
The vanishing of the coefficients of 
 $ z^k \bar z^\ell $ then gives (with the convention that $ a_{k\ell} = 0 $
 for $ k < 0 $ or $ \ell < 0 $)
\begin{equation}
\label{eq:propakl} 
a_{k,\ell+1}  
=  \sum_{ r \leq k-1 , s \leq \ell } g_{rs}^{k\ell} a_{rs} , 
\end{equation}
where $g_{rs}^{k\ell}$ are some constant depending on $k,l,r$ and $s$.
By assumption $ a_{00} = 0 $ and from \eqref{eq:defakl}, $ a_{10} = 
a_{20} = 0 $. Hence, \eqref{eq:propakl} shows that $ a_{k\ell} = 0 $
for $ k \leq 2 $ and all $ \ell$, proving \eqref{eq:023}.

Hence, $  u (z_0 ) = 0 $ implies that $ z_0 = \pm  z_S $. 
 We now see that the zero can occur at only one of the two points. Indeed, if $u$ vanishes at both $-z_S$ and $z_S$ then (note that $ z (K ) = - z_S $)
 \begin{equation}
 \label{eq:pmzS}  \begin{split} \widetilde u ( z ) & := F_{K} ( z - z_S ) F_{- K} ( z + z_S ) u ( z ) 
 \\ & = 
 e^{- i ( z_S - \bar z_S ) K } \frac{\theta ( z )^2 }{ \theta ( z - z_S ) \theta ( z + z_S ) } u ( z) \in
 \ker_{ L^2_{ -K} (\mathbb C /\Lambda; \mathbb C^2) } D ( \alpha ) , \end{split} \end{equation}
 and $ \widetilde u  \not{\! \parallel} \, \,  u $. But this contradicts simplicity.

To show that $ u $ has to vanish at $ z_S $ we analyse $ f $ (defined in \eqref{eq:deff}) near $ \pm z_S $. 
From \eqref{eq:propf} and the fact that 
$   \omega z_S   
= z_S -  ( 1  +\omega )  $, 
we obtain (see \eqref{eq:propf}), 
\begin{equation*}   f (  z_S + \zeta ) = f (  \omega z_S + \omega \zeta ) = 
f ( z_S -  1 - \omega - \omega \zeta ) = 
 \omega f ( z_S  + \omega \zeta ), \end{equation*}
that is $ f ( z_S + \omega \zeta ) =\bar \omega f ( z_S + \zeta ) $,
and, in view of the functional equation in \eqref{eq:propf}, 
$ f ( -z_S + \omega \zeta ) 
=\omega f ( - z_S + \zeta ) $. 
Hence, there exists $k_0 \in \mathbb Z $ such that
\begin{equation}
\label{eq:expf} 
f(-z_S+ \zeta)=\sum_{k\geqslant -k_0}\zeta^{-2+3k}f_k,\quad f(z_S -  \zeta)=\sum_{\ell \geqslant k_0}\zeta^{2+3\ell }g_\ell, \ \ \ f_{-k_0 } g_{k_0 } = -1 . \end{equation}
We also note that, \eqref{eq:defzS} and the definition of $ u = \tau ( -K ) u_{-K} = ( \psi_1, \psi_2 )^t $
gives
\begin{equation}
\label{eq:psif} \psi_1 ( z ) =  f ( z )^{-1}  \psi_2 ( -z ) =  - f ( - z ) \psi_2 ( - z ) , \ \ \psi_2 ( \mp z_S ) = 0 . \end{equation}

Suppose that $u(z_S)\neq 0$ (which is equivalent to  $u(-z_S) =  0$). Then \eqref{eq:psif} shows that $f(z)$ has a pole at $- z_S$. The expansion \eqref{eq:expf}  implies that the pole is of order at least $2$. But using \eqref{eq:psif} again,
shows that $-z_S$ is a zero of order at least $2$ of the function $\psi_2$, in the sense that
$ \psi_2 ( -z_S + \zeta ) = \zeta^2 \widetilde \Psi_2 ( \zeta)  $, $ \Psi \in C^\omega $, near 
$ \zeta = 0 $.
 Moreover, \eqref{eq:propf} shows that $z_S$ is a zero of order at least $2$ of $f$ and from $\psi_1 (  -z ) = - f (  z ) \psi_2 (  z )  $, we deduce that $ \psi_1 ( - z_S + \zeta ) = \zeta^2 \Psi_1 ( \zeta ) $,
 $ \Psi_1 \in C^\omega $.

We have therefore proved that $ \zeta^{-2} u ( -z_S + \zeta ) $ is
smooth near $ 0 $. But this implies that
\begin{equation}
\label{eq:Weier}  \widetilde{  u} ( z ) = \wp ( z + z_S ;  \omega, 1)
  u ( z ) \in L^2_{-K} ( \CC/\Gamma) \end{equation}
which solves $ D ( \alpha ) \widetilde{  u } = 0 $, and $ \widetilde u \not{\! \parallel} \, \, u $, 
a contradiction.
This implies that $ u$ vanishes only at the point $z=z_S$. 

We want to show that $ \partial_z u ( z_S ) \neq 0 $. 
Since $ u \in L^2_{-K, 0} $, we check that 
\[ \begin{gathered} \psi_1 ( z_S + \omega \zeta ) =  
\psi_1 ( z_S + \zeta ), 
\ \ \ 
\psi_2 ( {\omega}z_S + \zeta ) =
\bar \omega \psi_2 ( z_S + \zeta ). \end{gathered} \]
Since $ u ( - z_S ) \neq 0 $ and $ \psi_2 ({-}  z_S ) = 0$, we see that
$ \psi_1 ( - z_S ) \neq 0 $. We conclude from \eqref{eq:psif} 
that  $ f $ vanishes at $ - z_S$, so that 
using \eqref{eq:expf},  $ f ( -z_S {-} \zeta ) = \sum_{ k \geq 0 } F_k \zeta^{ 1 + 3k } $.
We then have
\[ 
\psi_2 ( z_S + \zeta )  = f (  - z_S - \zeta ) \psi_1 (  - z_S - \zeta ) =
\left(\sum_{ k \geq 0 } F_k \zeta^{1 + 3k }\right) ( \gamma + \mathcal O ( |\zeta|) )
 , \ \ \gamma := \psi_1 ( - z_S ) \neq 0 .
 \] 
We conclude that if $ \partial_z  u ( z_S ) = \partial_\zeta u ( z_S + \zeta ) |_{\zeta =0 } = 0$ then 
$ F_0  = 0 $. Since we also have
\[ \psi_1 ( z_S + \zeta ) = -  f (-  z_S - \zeta ) \psi_2 (  - z_S - \zeta ) 
= - \left(\sum_{ k \geq 0 } F_k \zeta^{1 + 3k }   \right)\psi_2 (  - z_S - \zeta ) , \]
we conclude that  $  \zeta^{-3}   u ( z_S + \zeta ) $ is smooth near $ 0 $.
But this gives a contradiction as in \eqref{eq:wp}. 

The final conclusion \eqref{eq:t3} follows from \eqref{eq:K2k} applied with $ k'-K = k $, $ z_0 = z_S $, 
and the fact that $ F_{k'}  ( z - z_S ) $ vanishes simply and uniquely at $ z_S + z( k' ) = z( k ) = 
\sqrt 3 k / 4 \pi i + \Lambda $ (see \eqref{eq:defFk}). 
\end{proof}

\noindent
{\bf Remark.} Rather than considering the vanishing of
 $ \tau ( -K ) u_{-K} \in L^2_{ -K , 0 } $ we could look at $ u_0 $, $ \ker_{L^2_{0} } ( D ( \alpha )) =
 \mathbb C u_0 $, $ \alpha \in \mathcal A $. One can easily show (see \cite[Proposition 3.6]{dynip}) 
 that $ u_0 \in L^2_{ 0 , 2} $ and that $ \mathscr E u_0 = \pm i u_0 $ (see \eqref{eq:symmE}
 and note that $ \Spec_{L^2_{0} } ( \mathscr E ) =  \{ i, -i \} $). That implies that $ u_0 $ vanishes
 at zero and that other zeros are symmetric with respect to the origin. But $ 0$ is the only 
 zero as the
 same  argument as in \eqref{eq:pmzS} would contradict simplicity. Since, again by 
 simplicity, $ \tau ( -K ) u_{-K} = c_0 \tau ( -K )  F_{-K} ( z ) u_0 $,  that gives 
 a different (and perhaps simpler) proof that $ \tau ( -K ) u_{-K} $ vanishes only at $ z_S $. 
 
 As suggested by Mengxuan Yang, we can then see directly that the $ u_0 $ vanishes
 simply at $ 0 $ (which then implies that $ u_{-K } $ vanishes simply at $ z_S $). For that
 consider $ u_1 \in C^\infty ( \mathbb C ) $ such that $ u_0 ( z ) =  z u_1 ( z ) $ 
 (this follows from Lemma \ref{l:van}). But then $  D ( \alpha ) ( z u_1 ( z ) ) = 
 z D ( \alpha ) u_1 ( z) = 0 $, and as $ u_1 $ is smooth, $ D ( \alpha ) u_1 ( z ) = 0 $ for
 $ z \in \mathbb C $. Hence, if  $ u_1 ( 0 ) = 0 $ then Lemma \ref{l:van} shows that $ u ( z ) = z
 u_1 ( z ) = z^2 u_2 ( z ) $, $ u_2 \in C^\infty ( \mathbb C ) $, and the $ \wp $-function argument (see \eqref{eq:Weier}) contradicts
 simplicity in $ L^2_0 $. 

 We opted for
 a direct discussion of $ \tau (K ) u_K $ (and the proof of simplicity of the zero) 
 as that protected state which exist for all $ \alpha $ and its zero were central in the
 original physics presentation \cite{magic}.

\section{Theorem \ref{theo:Chern}  and two numerical observations}
\label{s:chern}

Here we present two numerical observations about the structure of flat bands and 
compute the Chern number of the flat band.

\subsection{Fixed shape of the rescaled flat band}
We define 
rescaled bands as follows:
\begin{equation}
\label{eq:defhatE} \widehat E_j ( \alpha, k ) :=   \frac{ E_j ( \alpha, k )}
{ \max_{k}  E_1 ( \alpha, k ) } , \ \ \alpha \in \mathbb R . 
\end{equation}
and notice that for $ \alpha $ near $ \alpha$'s near elements of $ \mathcal A_{\mathbb R }$, 
\begin{equation}
\label{eq:E2U}
E_1 ( \alpha, k ) \simeq | U ( z ( k ) ) | , \ \ \  z ( k ) :=  \frac{ \sqrt 3 k }{ 4 \pi i } k : \Lambda^* \to \Lambda , 
\end{equation} 
see Figure \ref{f:pbands} and, for an animated version 
\url{https://math.berkeley.edu/~zworski/KKmovie.mp4}. We note that the $ |U ( z (k)  ) |$ is the
simplest function with symmetries of $ E_1 ( k ) $ and conic singularities at $ \pm K $.
\begin{figure}
\includegraphics[width=11cm]{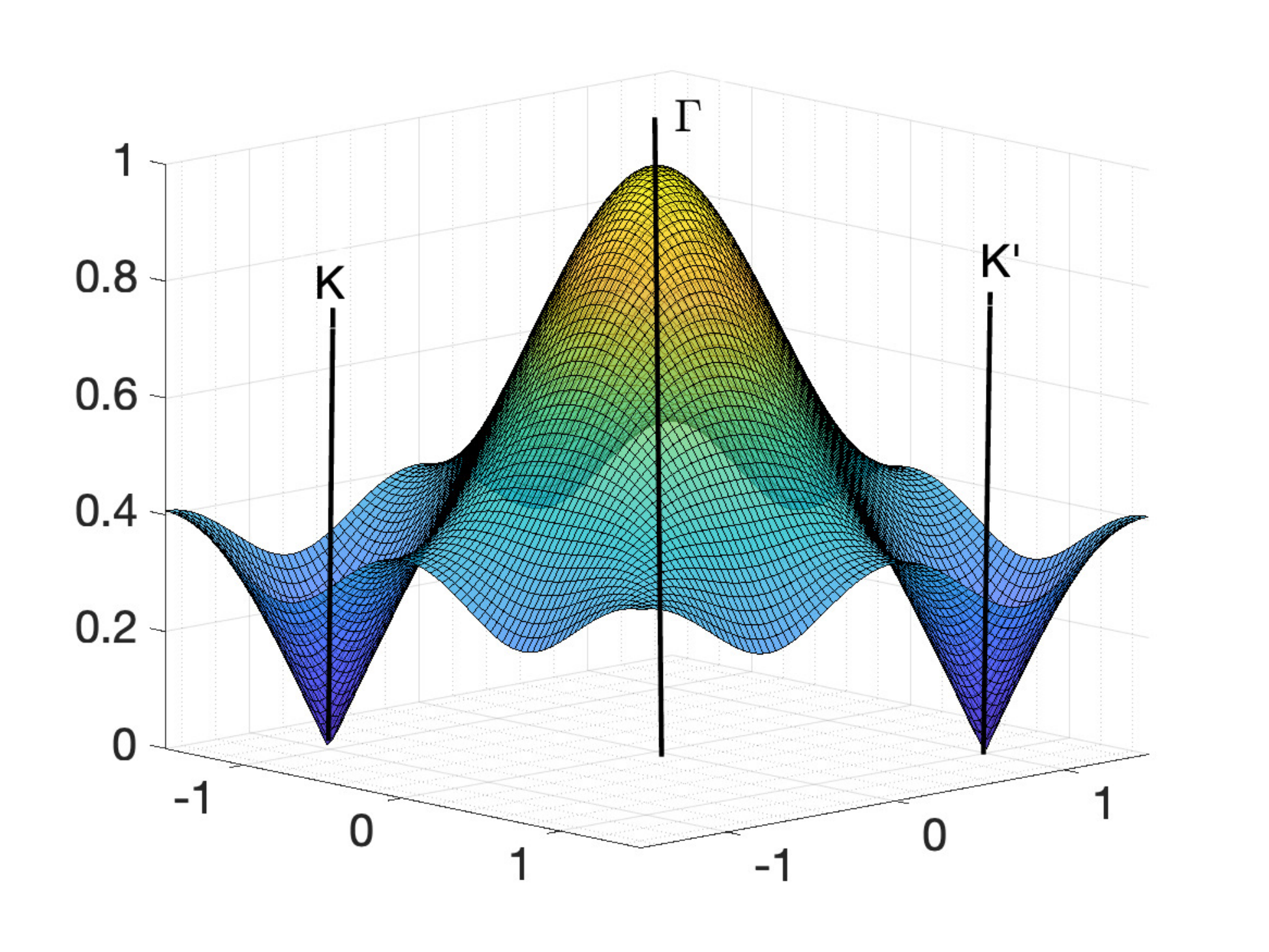}
\caption{\label{f:dE} Normalized $ \partial_\alpha E_1 ( \widetilde \alpha, k )$ at first magic angle. The protected zero energy states at the $K$ and $K'$ points are preserved, the maximum is attained at the $\Gamma$-point.}
\end{figure}

The following heuristic explanation was suggested by Ledwith et al \cite{priv}. Assuming
that the flat band is simple consider perturbation theory of $ H_{k } ( \alpha ) $
near $ \widetilde \alpha\in \mathcal A_{\mathbb R } $:
\begin{equation}
\label{eq:priv} 
\begin{gathered} \partial_\alpha E_1 ( \widetilde \alpha, k ) =
\frac{ | \langle
{V}
 u_{k } , v_{\bar{k } }  \rangle |}
{ \|  u_{k } \| \| v_{\bar{k } } \| }  , \ \ \ 
u_{k } , v_{ {k } } \in L^2_{0 } , \ \ V 
 (z)  := \begin{pmatrix} 0 & U ( z ) \\
U (- z) & 0 \end{pmatrix}, 
\\
( D ( \widetilde \alpha ) - k ) u_{k } = 0 , \ \ \ 
( D ( \widetilde \alpha )^* - \bar {k} ) v_{ {k } } = 0 ,
\end{gathered} 
\end{equation}
where using  \eqref{eq:H2Q} we can take $ v_{ k } := Q u_k $.  
A numerically evaluated graph of $ k \mapsto \partial_\alpha E_1 ( \widetilde \alpha, 
k ) $ is shown in Figure \ref{f:dE}. 
Since $ u_{k}  = F_k u_0 $, $ \ker_{ L^2_0 }  D( \alpha )  =  u_0 $, 
the theta function factors act in some sense as an FBI/Bargmann transform 
(see \cite[Chapter 13]{semi}). A (very formal) application of stationary phase method could then 
reproduce the potential $ U $.

\subsection{The Chern connection and curvature}
\label{s:ccc}
The second numerical observation concerns the behaviour of the curvature of 
a connection on the natural hermitian bundle associated to the flat band.
Since the bundle is holomorphic we use the Chern connection but, as is always
the case, the resulting curvature is the same as the Berry curvature -- see 
\eqref{eq:C2B} for a direct verification in our case.

The numerical observation is shown in Figure \ref{f:curv} (a three dimensional plot of the curvature for one
magic angle) and Figure \ref{f:Ukraine} (the two dimensional plots for the first magic angles). 
We note that the absolute maximum appears at the $ \Gamma $ point, that is
the center of the $k$-space hexagon spanned by translates of $ K $ and $ K' $ (equal to 
$ - K $ in our coordinates), and the minima at $ K $ and $ K'$ -- the 
vertices of the hexagon (the Dirac points). This is supposed to correspond to the fact 
that the bands are closest at $ \Gamma $ and farthest apart at the Dirac points
(see the movie linked to Figure \ref{f:pbands}). So far, we only show that $ \Gamma $, $ K $ and
$K' $ (that is $ \mathcal K $ -- see \eqref{eq:defK}) are critical 
points for the curvature which follows from Proposition \ref{p:H} below.

To describe the objects involved we need to define the hermitian holomorphic
line bundle associated to the flat band. For general definitions and basic facts we
refer to the self-contained appendix. 

\begin{figure}
\includegraphics[width=10cm]{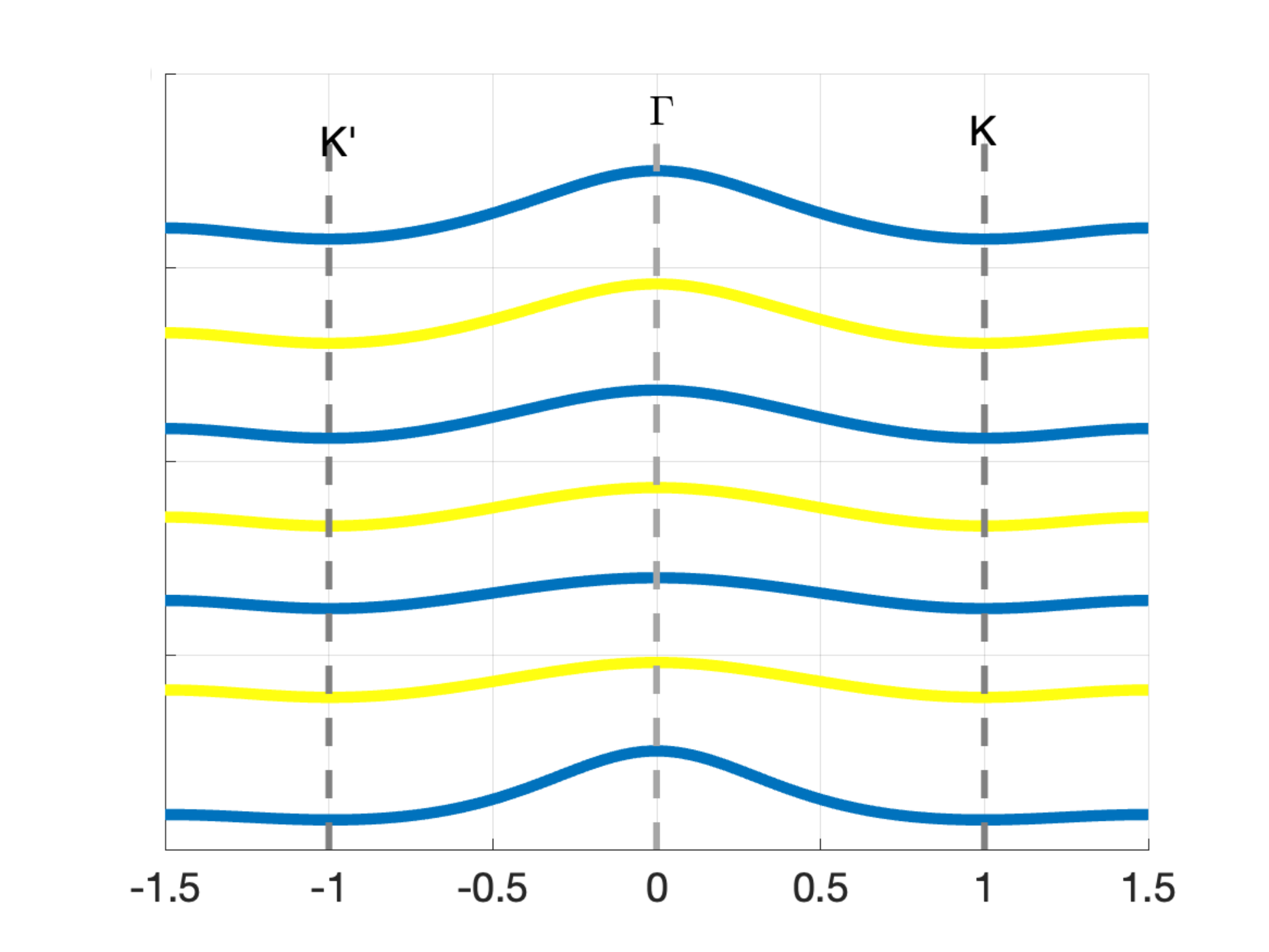}
\caption{\label{f:Ukraine} Cross-section of curvature for $k_x=0$ for the first seven magic angles in increasing order. The extrema at $K,\Gamma,K'$ follow from Prop. \ref{p:H} and the subsequent discussion.}
\end{figure}

We assume that $ \alpha_0 \in \mathcal A $ is simple in the sense of
Theorem \ref{t:zS}: $ \dim \ker_{L^2_{  0 } } ( D ( \alpha_0 ) + k ) = 1 $ for all $ k \in \mathbb C $.  We remark that in \cite[Theorem 3]{bhz1} we established
simplicity of the first real magic $ \alpha$ for the potential used in \cite{magic}. 
Numerical calculations suggest that all real magic $ \alpha$'s for that potential
are indeed simple.

We recall from 
\S \ref{s:theta} (see \eqref{eq:K2k}) that 
$$ \ker_{L^2_0 }  ( D ( \alpha_0 ) + k ) = \mathbb C F_{k-K} (  z -  z_S  ) u_K  (  z ). $$
We then put
\begin{equation}
\label{eq:defuk}  [ u ( k ) ] (  z ) =  u ( k ,  z) :=  F_{k-K} (  z -  z_S   ) u (  z ) , \
\ \  \mathscr L_\gamma u (k ) = u ( k ) , \ \  \gamma \in \Lambda . 
\end{equation}
We also note that \eqref{eq:Fk2p} implies
\begin{equation}
\label{eq:defep}
\begin{gathered}
u ( k+ p  )  =    e_{ p} ( k)^{-1}  \tau ( p )^{ {-1}} u( k )  . 
\end{gathered} 
\end{equation}

Following the standard construction (see for instance \cite[\S 2.1]{panati}) we define
\begin{equation}
\label{eq:defL}  
\begin{gathered}
L := \left\{ [ k , v ]_\tau  \in ( \mathbb C \times L^2_{  0 } ( \mathbb C/ \Lambda ; \mathbb C^2 ) )/
\sim_\tau : v \in \ker_{L^2_{  0 }  ( \mathbb C/ \Lambda ; \mathbb C^2 )} ( 
D ( \alpha_0 ) + k ) \right\}, \\ 
[ k , v]_\tau = [ k', v']_\tau \ \Longleftrightarrow \
( k , v ) \sim_\tau ( k', v' ) \ \Longleftrightarrow \  \exists \, p \in \Lambda  \ 
k' = k + p , \ \  v' = \tau ( p )^{ {-1}} v.  \end{gathered} \end{equation}
We have 
\begin{lemm}
\label{l:holo}
Definition \eqref{eq:defL} gives a holomorphic line bundle over $ \mathbb C / \Lambda $, 
\[   f: L  \to  \mathbb C/\Lambda , \ \  f : [ k , v ]_\tau \to [k] \in \mathbb C / \Lambda . \]
The corresponding family of multipliers in \eqref{eq:CD} is given by $ k \mapsto e_p ( k ) $.
\end{lemm}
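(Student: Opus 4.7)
The plan is to exhibit a holomorphic trivializing atlas on $L$ by lifting to the universal cover $\mathbb C$ of $\mathbb C/\Lambda$ and using the family $k\mapsto u(k)$ from \eqref{eq:defuk} as a nowhere-vanishing local frame; the transition cocycle will then turn out to be the multipliers $e_p(k)$ from \eqref{eq:defep}, which are holomorphic in $k$.

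First I would check that $L$ is well-formed and that $f$ is well-defined. A short computation gives $\tau(p)^{-1} D(\alpha_0)\tau(p) = D(\alpha_0)+p$, so $\tau(p)^{-1}$ sends $\ker_{L^2_0}(D(\alpha_0)+k)$ to $\ker_{L^2_0}(D(\alpha_0)+k+p)$; hence the relation $\sim_\tau$ respects fibers and $f([k,v]_\tau):=[k]$ is well-defined. Under the simplicity assumption (the equivalent conditions of Theorem \ref{t:HtL}), each $f^{-1}([k])$ is a $1$-dimensional complex vector space spanned by the class of $u(k)$.

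Next I would cover $\mathbb C/\Lambda$ by open sets $U_i$ each admitting a homeomorphic lift $\widetilde U_i\subset\mathbb C$. Over $\widetilde U_i$ the map $k\mapsto u(k)$ defines a nowhere-vanishing element of $\ker_{L^2_0}(D(\alpha_0)+k)$: the potential pole of $F_{k-K}(\cdot-z_S)$ at $z=z_S$ is cancelled by the zero of $u_K$ at $z_S$ produced by Proposition \ref{p:vanish} and Theorem \ref{t:zS}, exactly as recorded in \eqref{eq:K2k}. This yields the trivialization
\[ \phi_i\colon f^{-1}(U_i)\longrightarrow U_i\times\mathbb C,\qquad [k,v]_\tau\longmapsto([k],w),\qquad v=w\,u(k),\ k\in\widetilde U_i. \]
For two overlapping charts with lifts related by $\widetilde U_j=\widetilde U_i+p$, the identity $u(k+p)=e_p(k)^{-1}\tau(p)^{-1}u(k)$ from \eqref{eq:defep} combined with the definition of $\sim_\tau$ gives $[k+p,u(k+p)]_\tau=[k,e_p(k)^{-1}u(k)]_\tau$, so the change-of-frame $\phi_j\circ\phi_i^{-1}$ is multiplication by $e_p(k)$. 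From \eqref{eq:Fk2p}, $e_p(k)=(-1)^{m+n}\exp(i\pi n^2\omega+2\pi z(k))$ with $z(k)=\sqrt 3\,k/(4\pi i)$ linear in $k$, so each $e_p$ is entire and nowhere zero; the cocycle identity $e_{p+q}(k)=e_p(k+q)\,e_q(k)$ follows at once from the theta-quotient formula. This endows $L$ with the structure of a holomorphic line bundle over $\mathbb C/\Lambda$ whose multiplier cocycle is exactly $\{e_p(k)\}_{p\in\Lambda}$ in the sense of \eqref{eq:CD}.

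The only technical point demanding care is the pointwise non-vanishing of $u(k)$ as an element of $L^2_0$ for every $k\in\mathbb C$; this is just the uniform-in-$k$ cancellation between the simple pole of the theta-quotient and the simple zero of $u_K$ at $z_S$, already implicit in \eqref{eq:K2k}. Once that is in hand, the rest of the verification is formal because the transition multipliers $e_p$ are manifestly entire in $k$.
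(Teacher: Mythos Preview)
Your proposal is correct and follows essentially the same approach as the paper: both trivialize by pulling back to $\mathbb C$ using the nowhere-vanishing frame $k\mapsto u(k)$ from \eqref{eq:defuk}, identify the multipliers as $e_p(k)$ via \eqref{eq:defep}, and verify the cocycle identity from the theta-quotient formula \eqref{eq:Fk2p}. The only cosmetic difference is that the paper phrases this as a quotient of the trivial bundle $\widetilde L\simeq\mathbb C_k\times\mathbb C_\kappa$ by the free and proper $\Lambda$-action, whereas you build the holomorphic atlas explicitly via local lifts; these are two standard packagings of the same construction.
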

\begin{proof}
The action of the discrete group $ \Lambda $, $ \lambda : ( k , v ) \mapsto ( k + p , \tau ( p ) v ) $
on the (trivial) complex line bundle 
\begin{equation}
\label{eq:wL}  \widetilde L := \{ ( k , \kappa u ( k ) ) : k \in \mathbb C , \ \ \kappa \in \mathbb C \}
\simeq \mathbb C_k \times \mathbb C_\kappa  , \end{equation}
(where $ u( k )  $ is defined in \eqref{eq:defuk})  is free and proper, and the quotient map is given by 
$ \pi_\tau ( k , \kappa u ( k ) ) = [ k , \kappa u( k ) ]_{\tau } $. 
Hence its quotient by that action, $ L $, is a smooth complex manifold of dimension 2.

The map $ (p,k ) \mapsto e_p ( k ) $ satisfies conditions in \eqref{eq:CD},
\[ e_{ p + p'} ( k ) = \frac{ \theta ( z ( k ) ) }{ \theta ( z ( k + p + p' ){)} } 
= \frac{ \theta ( z ( k + p) ) } { \theta ( z( k + p + p' ))  } \frac{ \theta ( z ( k ) ) }{\theta ( 
z ( k + p )  {)}} = e_{p'} ( k + p ) e_p ( k ), \]
and for $ p \in \Lambda $ 
we define $ \varphi_p :  \widetilde L  \to \widetilde L $ as in 
\eqref{eq:CD}: $ \varphi_p ( k , \kappa u ( k )  ) = ( k + p , e_p ( k ) \kappa u ( k )  ) $, 
$ \kappa \in \mathbb C $.  We then have 
$ \pi_\tau ( \varphi_p ( k, \kappa u ( k )  ) ) = \pi_\tau ( k , e_p ( k ) \kappa u ( k )  ) $ and this 
gives $ L $ the structure of a complex line bundle over $ \mathbb C /\Lambda $ \end{proof}

\noindent{\bf Remark.} As is implicit in the above proof, the multiplier $ e_p ( k ) $ is the multiplier
of the antiholomorphic theta line bundle over $ \mathbb C /\Lambda^* $.

The hermitian structure is inherited from $ L^2 ( \mathbb C / \Lambda) $ and the resulting hermitian 
structure on $ \widetilde L $ of \eqref{eq:wL}. In coordinates $ ( k , \kappa ) $ on $ \widetilde L $, 
we get
\[ h ( k ) =    \| u ( k )  \|_{L^2 ( \mathbb C/\Lambda)  }^2 ,\]
where we note that \eqref{eq:defuk} shows that $ u ( k ) $ is well defined on 
$ L^2_0 ( \mathbb C/ \Lambda ) $. This gives us also a hermitian structure on $ L $:
from \eqref{eq:defep}
we see that
\begin{equation}
\label{eq:logh} 
h ( k  ) =  | e_p ( k ) |^2 h ( k + p  ) , \ \  p \in \mathbb Z \oplus \omega \mathbb Z .
\end{equation}
To $ h $ we associate the Chern connection \eqref{eq:chernc} 
and the curvature $ \Omega $, \eqref{eq:curv}.
The general formula \eqref{eq:c1L} then reproduces the calculation from \cite{led}
(where \eqref{eq:logh} was used directly):
\begin{equation}
 \begin{split}
 \label{eq:Chern}c_1 (L) = \frac{i}{2\pi}  \int_{\mathbb C / \mathbb Z \omega \oplus \mathbb Z } \Omega 
=  -1. \end{split} 
\end{equation}
This proves Theorem \ref{theo:Chern}. 

\noindent 
{\bf Remark.} We should stress that $ k \mapsto u ( k ) $ is {\em not} a holomorphic 
section of $ L$. In fact, as indicated by the Chern number,  the line bundle $ L$ does not have any holomorphic sections. The dual line bundle corresponding to the kernel
$ D( \alpha )^* + \bar k $ has the Chern number equal to $ 1 $, and hence has holomorphic 
sections which can be expressed using theta functions.

We also have an explicit formula for $ \Omega $ in terms of $ u ( k ) $:
\begin{equation}
\label{eq:Om2H}
\begin{gathered}
\Omega :=  H ( k )  d\bar k \wedge dk , \ \ \
d\bar k \wedge dk = 2 i d \Re k \wedge d \Im k , \\
\begin{split}
H(k ) & =  \partial_k \partial_{\bar k } \log h ( k ) \\ & 
= 
 \| u ( k ) \|^{-4} \left( \| u ( k ) \|^2 
 \| \partial_k  u ( k ) \|^2 - | \langle \partial_k u (k) , 
 u (k) \rangle |^2 \right) \geq 0 ,  \end{split} \end{gathered}\end{equation}
 and (unlike $ u ( k ) $) $ H \in C^\infty ( \mathbb C / \Lambda ; \mathbb R ) $.
 We note that this is equivalent to the standard formula for the Berry curvature 
  \cite{bearry} (valid also
in non-holomorphic situations):
 \begin{equation}
 \label{eq:C2B}
 H ( k ) = - \Im \langle \partial_{k_1} \varphi ( k ) , \partial_{k_2 } \varphi ( k ) \rangle_{L^2 ( 
 \mathbb C , \mathbb C/3\Lambda ) } , \ \ \
 \varphi ( k ) := u ( k ) / \| u ( k ) \| .  \end{equation}
(This is a special case of a general fact.) 
We also recall the well known independence of $ H ( k ) $ of the phase of $ \varphi $:
\begin{lemm}
\label{l:phasein}
Suppose that $ \gamma ( k ) \in C^\infty ( \mathbb C; \mathbb R ) $ and that
$ \varphi ( k ) \in C^\infty ( \mathbb C ; L^2_{  0} ( \mathbb C/ \Lambda; \mathbb C^2 )) $,
$ \| \varphi ( k ) \|_{ L^2 ( \mathbb C/ \Lambda; \mathbb C^2 ) }\equiv 1 $.
Then, putting $ k_1 = \Re k $, $ k_2 = \Im k $, 
\begin{equation}
\label{eq:phasein}
\Im \langle \partial_{k_1} \varphi ( k ) , \partial_{k_2 } \varphi ( k ) \rangle =
\Im \langle \partial_{k_1} (e^{ i \gamma ( k ) }  \varphi ( k ) ) , \partial_{k_2 } ( e^{ i \gamma ( k ) } \varphi ( k ) ) \rangle. 
\end{equation}
\end{lemm}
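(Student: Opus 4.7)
The plan is to verify the identity by direct expansion, using the product rule and exploiting the normalization $\|\varphi(k)\|\equiv 1$. First, I would write
\[
\partial_{k_j}\bigl(e^{i\gamma}\varphi\bigr) = e^{i\gamma}\bigl(i(\partial_{k_j}\gamma)\varphi + \partial_{k_j}\varphi\bigr), \qquad j=1,2,
\]
and expand the inner product on the right-hand side of \eqref{eq:phasein} into four terms. The unimodular factor $e^{i\gamma}$ cancels in each inner product, leaving one ``diagonal'' term $\langle \partial_{k_1}\varphi,\partial_{k_2}\varphi\rangle$, two ``cross'' terms of the form $\pm i(\partial_{k_j}\gamma)\langle\varphi,\partial_{k_{j'}}\varphi\rangle$ (up to conjugation), and one ``scalar'' term $(\partial_{k_1}\gamma)(\partial_{k_2}\gamma)\|\varphi\|^2$.

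The key observation is that differentiating $\|\varphi(k)\|^2\equiv 1$ in $k_j$ yields
\[
\langle \partial_{k_j}\varphi,\varphi\rangle + \langle \varphi,\partial_{k_j}\varphi\rangle = 0,
\]
so $\langle\varphi,\partial_{k_j}\varphi\rangle$ is purely imaginary and has vanishing real part. Taking the imaginary part of the four-term expansion, the scalar term is real and drops out, while both cross terms become real multiples of $\Re\langle\varphi,\partial_{k_{j'}}\varphi\rangle$ and therefore vanish. What remains is exactly $\Im\langle\partial_{k_1}\varphi,\partial_{k_2}\varphi\rangle$, establishing \eqref{eq:phasein}.

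There is no genuine obstacle here: the entire argument is a one-line algebraic manipulation enabled by the normalization condition. The only point that requires a moment of care is the bookkeeping of complex conjugates in the Hermitian inner product (i.e.\ the factor $i\bar i = 1$ in the scalar term and the opposite-sign factors $\pm i$ in the two cross terms), which must be handled consistently with whichever convention (linear in the first or second argument) is in force throughout the paper.
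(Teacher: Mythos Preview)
Your proposal is correct and follows essentially the same route as the paper: expand via the product rule, drop the unimodular factor, and observe that after taking the imaginary part the scalar term is real while the two cross terms reduce to real multiples of $\Re\langle\varphi,\partial_{k_j}\varphi\rangle = \tfrac12\,\partial_{k_j}\|\varphi\|^2 = 0$. The paper packages the last step slightly differently by writing the difference as $\tfrac12(\gamma_{k_1}\partial_{k_2}\|\varphi\|^2 - \gamma_{k_2}\partial_{k_1}\|\varphi\|^2)$, but this is the same computation.
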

\begin{proof} The difference the two sides in \eqref{eq:phasein} is given by 
\[    \Im \left( - \gamma_{k_1} \gamma_{k_2 } \langle \varphi, \varphi \rangle + 
i \gamma_{k_1} \langle \varphi, \varphi_{k_2} \rangle - i \gamma_{k_2} \langle \varphi_{k_1}, \varphi \rangle \right) = \tfrac12 \left( \gamma_{k_1} \partial_{k_2} \| \varphi \|^2 -
\gamma_{k_2} \partial_{k_1} \| \varphi \|^2 \right) 
 \]
and this vanishes as $ \varphi( k ) $ is $L^2$-normalized. \end{proof}

Simplicity of $ \alpha_0 $ has the following consequence:
\begin{prop}
\label{p:H}
Suppose that $ H $ is given by \eqref{eq:Om2H} with $ u ( k )  $ defined in \eqref{eq:defuk}.
Then 
\begin{equation}
\label{eq:Hs}
H ( \omega k ) = H ( k ) .
\end{equation}
\end{prop}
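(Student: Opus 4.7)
The plan is to reduce the invariance of $H$ under $k \mapsto \omega k$ to a holomorphic coboundary relation for $h(k) := \|u(k)\|^2$. The key input is the rotational intertwining $\Omega D(\alpha_0) \Omega^{-1} = \omega D(\alpha_0)$ from \eqref{eq:defC} combined with $\omega K \equiv K \pmod{\Lambda^*}$. Using also $\mathscr L_\gamma \Omega = \Omega \mathscr L_{\omega \gamma}$ and $\omega \Lambda = \Lambda$, one checks that $\Omega$ preserves $L^2_0$ and furnishes an isomorphism
\[ \Omega : \ker_{L^2_0}(D(\alpha_0) + \omega k) \xrightarrow{\sim} \ker_{L^2_0}(D(\alpha_0) + k). \]
By the assumed simplicity both kernels are one dimensional and both $u(\omega k)$ and $u(k)$ are nonzero in them. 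Hence there exists a nowhere-vanishing scalar function $c(k)$ with
\[ u(k) = c(k)\,\Omega u(\omega k). \]

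Next I would verify that $c$ is holomorphic in $k$. From the explicit formula $u(k,z) = F_{k-K}(z - z_S)\, u_K(z)$ and the fact that $F_p(z) = e^{\frac{i}{2}(z - \bar z) p}\,\theta(z - z(p))/\theta(z)$ depends holomorphically on $p$ (since $z(p) = \sqrt 3\, p/(4\pi i)$ is complex linear in $p$ and, for each fixed $z$, so is the exponent $\tfrac{i}{2}(z - \bar z) p$), the assignment $k \mapsto u(k)$ is holomorphic as an $L^2_0$-valued function; likewise for $k \mapsto \Omega u(\omega k)$. Pairing with a test vector $v \in L^2_0$ for which $\langle \Omega u(\omega k_0), v\rangle \neq 0$, the identity $c(k) = \langle u(k), v\rangle/\langle \Omega u(\omega k), v\rangle$ expresses $c$ locally as a ratio of holomorphic scalars with nonzero denominator, hence holomorphically.

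Because $\Omega$ acts unitarily on $L^2(\mathbb C/\Lambda; \mathbb C^2)$ (rotation by $\omega$ preserves Lebesgue measure and the lattice $\Lambda$), taking norms yields $h(k) = |c(k)|^2 h(\omega k)$, whence
\[ \log h(k) = \log c(k) + \overline{\log c(k)} + \log h(\omega k). \]
Applying $\partial_k \partial_{\bar k}$ annihilates the holomorphic and antiholomorphic summands, and the chain rule together with $|\omega|^2 = 1$ gives $\partial_k \partial_{\bar k} \log h(\omega k) = H(\omega k)$, which is \eqref{eq:Hs}.

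The principal obstacle is the holomorphy of $c(k)$; once established, the remainder is a routine chain-rule calculation. A minor technical point is that $\log c$ is only defined locally, but this is harmless since $H$ is computed by the local operator $\partial_k\partial_{\bar k}$.
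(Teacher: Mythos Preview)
Your argument is correct and shares its skeleton with the paper's proof: both use the rotational intertwining $\Omega D(\alpha_0)\Omega^{-1}=\omega D(\alpha_0)$ together with $\Omega:L^2_0\to L^2_0$ and simplicity of the flat band to conclude that $u(\omega k)$ and $\Omega^{-1}u(k)$ are proportional, hence $h(\omega k)=|\rho(k)|^2 h(k)$ for some nonvanishing scalar $\rho(k)$.

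The difference lies in the concluding step. The paper does \emph{not} establish holomorphy of the proportionality factor; it only observes that $\rho$ is smooth and nonvanishing, writes $\rho/|\rho|=e^{i\gamma}$ with $\gamma\in C^\infty(\mathbb C;\mathbb R)$, and then invokes the Berry-curvature formula \eqref{eq:C2B} together with the phase-invariance Lemma~\ref{l:phasein} (plus unitarity of $\Omega$ and the rotation chain rule) to get $H(\omega k)=H(k)$. Your route instead exploits the explicit holomorphic dependence $k\mapsto u(k)$ coming from $F_{k-K}$ to show that $c(k)$ is holomorphic, so that $\partial_k\partial_{\bar k}\log|c|^2=0$ kills the extra term directly in the formula $H=\partial_k\partial_{\bar k}\log h$. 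Your finish is cleaner from the holomorphic-line-bundle viewpoint and avoids the auxiliary Lemma~\ref{l:phasein}; the paper's finish is more robust in that it needs only smoothness of $\rho$, not holomorphy, and would apply even if one replaced $u(k)$ by another (merely smooth) nonvanishing section of $\ker(D(\alpha_0)+k)$.
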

\begin{proof} 
 Since $ [ D( \alpha ) u ] ( \omega  z ) = \omega D ( \alpha ) [ u ( \omega \bullet ) ] (  z ) $, 
 \[ 0=  [( D ( \alpha ) - k ) u ( k )] ( \omega  z ) =
  \omega ( D ( \alpha ) - \bar \omega k )[ u ( k ) ( \omega \bullet) ] (  z )  , \]
and simplicity shows
 that $ u( \omega k ,  z ) = \rho ( k ) u ( k, \bar \omega  z ) $, and $ \| u ( \omega k ) \| = |\rho ( k ) | \| u ( k )\|$,
$  h ( \omega k ) = | \rho ( k ) |^2  h ( k ) $. In particular, $ | \rho ( k )| > 0 $ and, as a function on 
$ \mathbb C $, $ \rho  ( k ) / | \rho ( k ) | = e^{ i\gamma ( k ) } $ for some $ \gamma \in C^\infty ( 
\mathbb C ; \mathbb R ) $.  The conclusion then follows from Lemma \ref{l:phasein}
and \eqref{eq:C2B}.
\end{proof}
 
This proposition shows that elements of $ \mathcal K  $ (that is, $ \Gamma$, $ K $ and $K' $ -- see
\eqref{eq:defK}) are
are critical points of $ H$: suppose that $ p \in \mathcal K $; then (since $ \omega p \equiv p \!\! \mod \Lambda^* $
and $ H $ is $ \Lambda^* $-periodic), 
\[  H (p + \kappa ) = H ( \omega p + \omega \kappa ) = H ( p + \omega \kappa ) , \]
which implies that $ \partial_k H ( p ) = \omega \partial_k H ( p ) $, $
\partial_{\bar k } H ( p ) = \bar \omega \partial_{\bar k} H ( p ) $, that is that $ d_k H ( p ) = 0 $. 
This provides a partial explanation of Figures \ref{f:curv}  and \ref{f:Ukraine}.

 \smallsection{Acknowledgements} 
We would like to thank 
Eslam Khalaf,  Patrick Ledwith, Allan MacDonald, Daniel Parker, Ashvin Vishwanath, Alex Watson
and Mike Zaletel for helpful discussions and their interest in this work. We are also grateful to 
Mitch Luskin and 
Alex Sobolev for informing us of references \cite{sgog} and \cite{dun} respectively, to Simone Warzel for bringing \cite{F15} to our attention, and especially to Zhongkai Tao for comments on 
earlier versions of this paper, and to Mengxuan Yang for suggesting a simpler proof of Theorem \ref{t:zS}
(see the remark at the end of \S \ref{s:prooft}). 
TH and MZ  were partially 
supported
by the National Science Foundation under the grant DMS-1901462
and by the Simons Foundation under a ``Moir\'e Materials Magic" grant.

\vspace{0.5cm}

\begin{center}
\noindent
{\sc  Appendix A: translation between different conventions}
\end{center}
\renewcommand{\theequation}{A.\arabic{equation}}
\refstepcounter{section}
\renewcommand{\thesection}{A}
\setcounter{equation}{0}

We compare the coordinates use \eqref{eq:propU} to those in 
\cite{beta}, and implicitly in the physics literature -- \cite{magic}. 
One of the advantages of using the lattice $ \Lambda $ is the more straightforward connection 
with $ \theta $ functions.  

In \cite{beta} we considered the following 
operator built from the potential $ U_0 $: 
\begin{equation}
\label{eq:oldD}  \begin{gathered}  \widetilde D ( \alpha ) := 
\begin{pmatrix}
2 D_{\bar \zeta } & \alpha U_0 ( \zeta ) \\
\alpha U_0 ( - \zeta )& 2 D_{\bar \zeta } \end{pmatrix} , \ \ \ \overline{ U_0 ( \bar \zeta ) } = U_0 ( \zeta ) , \\
 U_0 \left( \zeta + \tfrac{ 4 \pi i } 3 ( a_1 \omega + a_2 \omega^2  ) \right) = 
\bar \omega^{a_1 + a_2 } U_0 ( \zeta ) , \ \ \  U_0 ( \omega \zeta ) = \omega U_0 ( \zeta). \end{gathered} 
\end{equation}
We then have  periodicity with respect to 
\[   \Gamma := 4 \pi i ( \omega \mathbb Z + \omega^2 \mathbb Z ) = 4 \pi i \Lambda \, \]
and twisted periodicity with respect to $ \Gamma / 3 $. The dual lattices are given by 
\[ \Gamma^*:=\tfrac{1}{\sqrt{3}} (\omega \ZZ \oplus \omega^2 \ZZ) = \tfrac 1{\sqrt{3}} {\Lambda} , \ \ 
( \tfrac13 \Gamma )^* = 3 \Gamma^* = \sqrt 3 \Lambda . 
 \]
This means that to switch to (twisted) periodicity with respect to $ \Lambda $ we need a
change of variables:
\begin{equation}
\label{eq:z2zeta}     \zeta = \tfrac 4 3 \pi i z , \ \   \tfrac13  \Gamma = \tfrac 4 3 \pi i \Lambda, \ \ 
3 \Gamma^* = ( \tfrac13 \Gamma)^* = \sqrt 3 \Lambda = \frac{ 3 }{ 4 \pi i }  \Lambda^* 
. \end{equation}
Then 
\begin{equation}
\label{eq:unit} \begin{gathered}  
\widetilde D ( \alpha ) = - \frac{ 3}{ 4 \pi i } 
\begin{pmatrix}
2 D_{\bar z  }  & \alpha  U ( z )  \\
\alpha U ( - z ) & 2 D_{\bar z }  \ \end{pmatrix}    , 
\ \ \ U ( z ) := - \tfrac43  \pi i 
U_0 \left(  \tfrac 43 \pi i z  \right).
\end{gathered} \end{equation}
The twisted periodicity condition in \eqref{eq:oldD} corresponds to the condition 
in \eqref{eq:propU} since 
\[  \bar \omega^{a_1 + a_2} = 
e^{ i \langle a_1 \omega + a_2 \omega^2 , K \rangle } , \ \ \   K
= \tfrac 4 {\sqrt 3 } \pi i (- \tfrac13 - \tfrac23 \omega ) = \tfrac{4 } 3 \pi . \]

Floquet theory using $ \mathscr L_\gamma $ defined in \eqref{eq:FL_ev} is equivalent to 
the Floquet theory based on $ \widetilde  {\mathcal L}_{\mathbf a } $ used in \cite{magic}: for
$ u \in L^2_{\rm{loc}} ( \mathbb C; \mathbb C^2 ) $, 
\begin{equation}
\label{eq:defLa0}  
 \widetilde { \mathscr L}_{\mathbf a } u :=  
\begin{pmatrix} \omega^{a_1 + a_2}  & 0  \\
0 & 1 
\end{pmatrix} u ( \zeta  + \mathbf a )   ,   \ \ \ 
\mathbf a = \tfrac{ 4}3 \pi i (  \omega a_1 + \omega^2 a_2 ) \in \tfrac 13 \Gamma, \ 
a_j \in \mathbb Z .  \end{equation}
The Floquet theory based on $ L^2_0$ defined using $ \widetilde  {\mathcal L}_{\mathbf a } $
gives different values of $k \in 3 \Gamma^*$ for protected states. That is easily seen
by considering the spectrum of $ 2 D_{\bar z } $ on that $ L^2_0$ which is given 
(modulo $ 3 \Gamma^* $) by 
\[  \widetilde K = \frac{ 3}{ 4 \pi i }K = - i , \ \ \widetilde K' = 0 , \]
with the $\widetilde  \Gamma $ point corresponding to $ i $ (see also \cite[Proposition 3.2]{dynip}). 

\vspace{0.5cm}
\begin{center}
\noindent
{\sc  Appendix B: holomorphic line bundles over tori}
\end{center}
\renewcommand{\theequation}{B.\arabic{equation}}
\refstepcounter{section}
\renewcommand{\thesection}{B}
\setcounter{equation}{0}

Suppose $ \Lambda $ is a lattice $ \mathbb Z \oplus \omega \mathbb Z $, 
$ \Im \omega > 0 $ (for us it will be $ \omega = e^{ 2 \pi i /3 } $). 
A holomorphic line bundle $ L $, $ f: L \to \mathbb C/\Lambda $ can be described
 using a 
pullback by the canonical projection $ \pi : \mathbb C \to \mathbb C/ \Lambda $,  that is a (trivial) line bundle
$ \pi^* L $ over $ \mathbb C$ for which the following diagram commutes:
\[ \begin{CD}
\pi^* L    @>>>   L \\
@VVV      @VV{f }V \\
{\mathbb C}    @>\pi>> {\mathbb C}/ \Lambda 
\end{CD}  \]
We can identify 
$ \pi^* L $ with $ \mathbb C \times \mathbb C $ and write its elements as $ ( z, \zeta ) $.
Every line bundle over $ \mathbb C/\Lambda $ is associated to an entire non vanishing function
$ z \mapsto e_\lambda ( z ) $ such that
\begin{equation}
\label{eq:CD}
\begin{gathered} \varphi_\lambda ( z, \zeta ) = ( z + \lambda , e_\lambda ( z ) \zeta ) , \ \ 
e_{\lambda + \lambda' } ( z ) = e_{ \lambda' } ( z + \lambda ) e_{\lambda } ( z) , \ \ 
\lambda, \lambda' \in \Lambda, \\
\begin{CD}
\pi^* L    @>{\varphi_\lambda}>>  \pi^* L \\
@VVV      @VVV \\
L     @>{\operatorname{id}}>> L 
\end{CD}
\end{gathered} \end{equation}
In other words $ L$ is the set of equivalence classes, $ [ ( z , \zeta ) ]_{\Lambda} $ where
\[ ( z , \zeta ) \sim ( z', \zeta' ) \ \Longleftarrow \ \exists\,  \lambda \in \Lambda \ \
( z, \zeta ) = \varphi_{\lambda } ( z', \zeta' ) . \]
We then have 
\begin{equation}
\label{eq:secti} C^\infty ( \mathbb C / \Lambda ; L ) \simeq 
\left\{ u \in C^\infty ( \mathbb C ) : \forall \,  \lambda \in \Lambda,  \ u ( z + \lambda ) = 
e_\lambda ( z ) u  ( z )  \right\}. \end{equation}
Holomorphic sections are defined by replacing $ C^\infty ( \mathbb C )  $ with $ \mathscr O ( \mathbb C)$,  the space of entire functions on $ \mathbb C $.

The functions $ e_\lambda ( z ) $ are not unique: if $ g \in \mathscr O ( \mathbb C) $ then
$ \widetilde e_\lambda ( z ) := e^{ g ( z + \lambda ) } e_\lambda ( z ) e^{ - g ( z ) } $ gives the same
line bundle. 

\noindent
{\bf Remark.} The Appell--Humbert theorem completely characterizes the allowed 
functions $ e_\lambda ( z ) $. Here we will concentrate on the specific $ e_\lambda ( z ) $
arising from the eigenfunctions. 

\subsection{Hermitian structure and the Chern connection}
Hermitian structure provides a notion of length on the fibers of $ L$, $ p^{-1} ( z ) $ locally
described by (with $ |\zeta |^2 = \bar \zeta \zeta $, $ \zeta \in \mathbb C $), 
\begin{equation}
\label{eq:herm}  
\begin{gathered}   \| [ ( z, \zeta )]  \|^2 = h ( z ) | \zeta |^2 , \\
 \| [ ( z, \zeta )] \|^2 = \| [\varphi_\lambda ( z, \zeta ) ] \|^2 \ \Longleftrightarrow \ 
 h ( z )  = h( z + \lambda ) | e_\lambda ( z ) |^2 . \end{gathered}
 \end{equation}
 Conversely any positive smooth function $ h ( z ) $ satisfying the condition in \eqref{eq:herm}
 defines a hermitian metric on $ L $. 
 
Connections on $ L$ are identified with connections on $ \pi^* L \simeq \mathbb C \times \mathbb C $.
The latter are given by $ \eta \in C^\infty ( \mathbb C, T^* \mathbb C ) $ so that we can define the 
actual connection:
\[ D_\eta s = ds + s \eta \in C^\infty ( \mathbb C , T^* \mathbb C)   , \ \  s \in C^\infty ( \mathbb C , \mathbb C ) . \]
This gives a connection on $ L$ provided that 
\[  d ( s ( z + \lambda ) ) + \eta ( z + \lambda ) s ( z + \lambda ) = 
e_\lambda ( z ) ( d s ( z ) + \eta ( z ) s ( z ) ) , \]
that is when
\begin{equation}
\label{eq:conn}
\eta ( z + \lambda ) =  \eta ( z ) -  e_\lambda ( z ) ^{-1}  e'_\lambda ( z ) d z. 
\end{equation}
The {\em Chern connection} is defined by 
\begin{equation}
\label{eq:chernc}  \eta ( z ) = \partial ( \log h ( z ) ) = h ( z )^{-1}  \partial_z h ( z ) dz , 
\end{equation}
(here we denote by $ \partial f = \partial_z f ( z ) dz $, the $( 1,0) $-differential)
and we easily check \eqref{eq:conn} using \eqref{eq:herm} and the holomorphy of
$ z \mapsto e_\lambda ( z ) $:
\[   \begin{split}
\eta ( z + \lambda ) & =  \partial_z \log h ( z + \lambda ) dz  = 
\partial_z ( - \log ( e_\lambda ( z )  \overline{ e_\lambda ( z ) } ) + \log h( z ) ) dz  \\
& =
\eta ( z ) - e_\lambda ( z ) ^{-1}  e'_\lambda ( z ) d z. \end{split} \]

\subsection{Curvature and Chern numbers}

In this simplest case the curvature is just the differential of $ \eta $ and it is 
a well defined (unlike $ \eta $) $(1,1)$-differential form on $ \mathbb C /\Lambda $:
\begin{equation}
\label{eq:curv}
 \Omega := \bar \partial \eta = \bar \partial\partial ( \log h ( z ) ) = 
\partial_{\bar z } \partial_z  (\log h ( z )) d\bar z \wedge d z.   
\end{equation}
Indeed, the holomorphy of $ z \mapsto e_\lambda ( z ) $ gives
\[ \partial_{\bar z } \partial_z \log h ( z + \lambda ) = 
\partial_{\bar z } \left( \partial_z \log h ( z) - e_\lambda ( z ) ^{-1}  e'_\lambda ( z ) \right) = 
\partial_{\bar z} \partial_z \log h ( z ) . \]

The Chern number (since we are in complex dimension one) is defined as
\begin{equation}
\label{eq:Chern}  c_1 ( L ) := \frac{i}{ 2 \pi } \int_{ \mathbb C/\Lambda } \Omega \in \mathbb Z . 
\end{equation}
To see that $ c_1 ( L ) $ is an integer, we choose a fundamental domain of $ \Lambda $, 
$ F $, and apply Stokes's theorem: we can take $ F = [0,1 ) + \omega [0,1) $,
\[\begin{split}  c_1 ( L ) &= \frac{i}{ 2 \pi } \int_{ F }  \partial_{\bar z }  \partial_{z }  \log h  \, 
d \bar z \wedge dz = 
\frac{i}{ 2 \pi } \oint_{\partial F } \partial_{ z } \log h ( z) d z \\
& = \frac{i} { 2 \pi } \int_0^1 \left( \partial_z \log h ( t ) + \omega \partial_z \log h  ( 1 + t \omega ) 
- \partial_z \log h ( \omega + t ) -  \omega \partial_z \log h ( t \omega ) \right) dt 
\end{split} .\]
Now,
\[ \begin{split} 
& \omega \partial_z \log h ( 1 + t \omega ) = \partial_t \left( \log h ( t \omega ) - 
\log e_1 ( t \omega )\right ) , \\
&  \partial_z \log h ( \omega + t ) = \partial_t \left(  \log h (  t ) - \log e_\omega ( t ) \right)  ,
\end{split} \]
and
\begin{equation}
\label{eq:c1L} \begin{split} c_1 ( L ) & = \frac{i} { 2 \pi } \int_0^1 \partial_t 
\left( \log e_\omega (t )  - \log e_1 ( \omega t ) \right)  dt \\
& = \frac{i} { 2 \pi } \left( \log e_\omega ( 1 ) - \log e_\omega ( 0 ) + \log e_1 ( 0 ) - \log e_1 ( \omega ) \right) , \end{split} 
\end{equation}
where we choose entire functions $ \log e_\lambda ( z ) $, which are determined up to 
an integral multiple of $ 2 \pi i $.
From \eqref{eq:CD} we see that  
$ e_\omega ( 1 ) e_1 ( 0 ) = e_{ \omega + 1} ( 0 ) = e_1 ( \omega ) e_\omega ( 0 ) $, 
and that implies (by taking logarithms) that the right hand side of \eqref{eq:c1L} is an 
integer. (We note that $ 0 $ can be replaced by any $  z \in \mathbb C $.) 

 The hermitian metric on $ L $ is called {\em strictly positive} if $ \Delta \log h < 0 $, that
is, the locally defined function $ - \log h $ is strictly subharmonic. In this case,
$ \Omega $ also defines a K\"ahler structure (of course in the very special one dimensional
case):
\[   g = - \partial_{\bar z} \partial_z \log h ( z ) |dz|^2 . \]

\end{document}